\newcommand{\tr}{\text{tr}}
\newcommand{\elwz}{e^{\lambda_0 \mathbf{W}}}
\newcommand{\elwzp}{e^{\lambda_0 \mathbf{W}'}}
\newcommand{\elwzn}{e^{-\lambda_0 \mathbf{W}}}
\newcommand{\elwznp}{e^{-\lambda_0 \mathbf{W}'}}
\newcommand{\elw}{e^{\lambda \mathbf{W}}}
\newcommand{\ermn}{e^{-\rho \mathbf{M}}}
\newcommand{\ermz}{e^{\rho_0 \mathbf{M}}}
\newcommand{\ermzn}{e^{-\rho_0 \mathbf{M}}}
\newcommand{\ermzp}{e^{\rho_0 \mathbf{M}'}}
\newcommand{\ermzpn}{e^{-\rho_0 \mathbf{M}'}}
\newcommand{\nermz}{e^{-\rho_0 \mathbf{M}}}
\newcommand{\erm}{e^{\rho \mathbf{M}}}
\newcommand{\ermp}{e^{\rho \mathbf{M}'}}
\newcommand{\p}{\mathbf{P}}
\newcommand{\F}{\mathbf{F}}
\newcommand{\pphi}{\boldsymbol{\Phi}}
\newcommand{\D}{\mathbf{D}}
\newcommand{\eeta}{\boldsymbol{\eta}}
\newcommand{\var}{\mathrm{Var}}
\newcommand{\G}{\mathbf{G}(\zet)}
\newcommand{\GP}{\mathbf{G}^{'}(\zet)}
\newcommand{\GZPI}{\mathbf{G}^{-1'}(\zet_0)}
\newcommand{\GZ}{\mathbf{G}(\zet_0)}
\newcommand{\GZI}{\mathbf{G}^{-1}(\zet_0)}
\newcommand{\HH}{\mathbf{H}}
\renewcommand{\vec}{\text{vec}}
\newcommand{\DDDelta}{\boldsymbol{\Delta}}
\newcommand{\TT}{\mathbf{T}}
\newtheorem{thm}{Theorem}[section]
\newcommand{\OOmega}{\boldsymbol{\Omega}}
\newcommand{\XXi}{\boldsymbol{\Xi}}
\newtheorem{lemma}{Lemma}
\newtheorem{algorithm}{Algorithm}
\newtheorem{assumption}{Assumption}
\newcommand{\mf}{\mathbf}
\newcommand{\bs}{\boldsymbol}
\newcommand{\bet}{\boldsymbol{\beta}}
\newcommand{\thet}{\boldsymbol{\theta}}
\newcommand{\zet}{\boldsymbol{\zeta}}
\newcommand{\gam}{\boldsymbol{\gamma}}
\newcommand{\sigmaz}{\sigma_0}
\newcommand{\thetz}{\boldsymbol{\theta}_0}
\newcommand{\kap}{\boldsymbol{\kappa}}
\newcommand{\Ome}{\boldsymbol{\Omega}}
\newcommand{\elwp}{e^{\lambda \mathbf{W}'}}
\newcommand{\plim}{\mathrm{plim}}
\newcommand{\SSigma}{\boldsymbol{\Sigma}}
\newcommand{\vvec}{\mathrm{vec}}
\DeclareMathOperator{\argsolve}{argsolve}
\newcommand{\W}{\mathbf{W}}
\newcommand{\X}{\mathbf{X}}
\newcommand{\M}{\mathbf{M}}
\newcommand{\R}{\mathbf{R}}
\newcommand{\I}{\mathbf{I}}
\newcommand{\U}{\mathbf{U}}
\newcommand{\A}{\mathbf{A}}
\newcommand{\Ss}{\mathbf{S}}
\newcommand{\V}{\mathbf{V}}
\newcommand{\K}{\mathbf{K}}
\newcommand{\B}{\mathbf{B}}
\newcommand{\Y}{\mathbf{Y}}
\newcommand{\Z}{\mathbf{Z}}
\newcommand{\w}{\mathbf{w}}
\newcommand{\WW}{\mathbb{W}}
\newcommand{\oomega}{\boldsymbol{\omega}}
\newcommand{\Q}{\mathbf{Q}(\rho)}
\newcommand{\PP}{\mathbf{P}(\rho)}
\newcommand{\RR}{\mathbf{R}}
\numberwithin{equation}{section}
\DeclareMathOperator*{\argmax}{arg\,max}
\DeclareMathOperator*{\argmin}{arg\,min}
\DeclareMathOperator*{\E}{E}
\DeclareMathOperator*{\Diag}{Diag}
\begin{document}

\def\spacingset#1{\renewcommand{\baselinestretch}%
	{#1}\small\normalsize} \spacingset{1}


\title{\bf A Review of Cross-Sectional Matrix Exponential Spatial Models}
\author{ Ye Yang\footnote{School of Accounting, Capital University of Economics and Business, Beijing, China, Email:
yeyang557@hotmail.com.} \and Osman Do\u{g}an\footnote{Department of Economics, Istanbul Technical University, Istanbul, Turkey, Email: osmandogan@itu.edu.tr.} \and Süleyman Ta\c{s}p\i nar\footnote{ Department of Economics, Queens College, The City University of New York, New York, U.S.A., Email: staspinar@qc.cuny.edu.} \and Fei Jin\footnote{School of Economics, Fudan University, Shanghai, China, Email: jinfei@fudan.edu.cn.}}
\maketitle

\begin{abstract}
The matrix exponential spatial models exhibit similarities to the conventional spatial autoregressive model in spatial econometrics but offer analytical, computational, and interpretive advantages. This paper provides a comprehensive review of the literature on the estimation, inference, and model selection approaches for the cross-sectional matrix exponential spatial models. We discuss summary measures for the marginal effects of regressors and detail the matrix-vector product method for efficient estimation. Our aim is not only to summarize the main findings from the spatial econometric literature but also to make them more accessible to applied researchers. Additionally, we contribute to the literature by introducing some new results. We propose an M-estimation approach for models with heteroskedastic error terms and demonstrate that the resulting M-estimator is consistent and has an asymptotic normal distribution. We also consider some new results for model selection exercises. In a Monte Carlo study, we examine the finite sample properties of various estimators from the literature alongside the M-estimator.
\end{abstract}

\noindent%
{\it Keywords:} Matrix exponential spatial specification, MESS, spatial autoregression, SAR, heteroskedasticity, Bayesian estimation, model selection,  impact measures.

\spacingset{1.45} 



\newpage
\section{Introduction and motivation}

Spatial econometric models deal with estimation and inference problems that arise from (weak) cross-sectional dependence or correlation in data marked with location stamps. The spatial autoregressive (SAR) specification has been a widely used approach for modeling spatial dependence since its inception in \cite{Whittle:1954} and \cite{Cliff:1969,Cliff:1973}. The matrix exponential spatial specification (MESS) was introduced by \cite{Lesage:2007} as an alternative to the SAR specification, primarily due to its computationally appealing properties in likelihood-based estimation schemes. Despite various estimation and inference methods proposed in the econometrics literature for models using either specification, the empirical literature is predominantly populated with papers utilizing the SAR specification.

This highly skewed preference towards the SAR specification by applied researchers is unfortunate in the sense that the MESS attains some attractive properties. First, we must emphasize that the MESS and the SAR imply different rates of decay for cross-sectional dependence. While it is a geometric rate in the case of SAR, the MESS implies an exponential rate of decay for spatial correlation. Consequently, they also imply different reduced forms for the cross-sectional models. Second, contrary to the SAR specification, the MESS does not require any restrictions on the parameter space of the spatial autoregressive parameters as the reduced form of the MESS always exists. In particular, the MESS always yields a positive definite covariance matrix for the outcome variable. Third, in the likelihood based estimation, the SAR specification results in Jacobian determinant terms that can be difficult to compute when the number of cross-sections is large. The MESS likelihood, on the other hand, does not involve such Jacobian terms. 

In this paper, our aim is to provide a complete comprehensive review of the econometric literature on the estimation, inference and model selection methods for the cross-sectional MESS-type models.\footnote{We focus on cross-sectional MESS models as there are relatively few papers on panel data MESS models in the literature. See, e.g., \citet{Chih:2018}, \cite{Zhang2019} and \cite{yangunbalanced}.} More specifically, we aim to present the existing results from the literature in a more accessible way so that they can be  utilized easily in empirical applications by applied researchers. Furthermore, we extend the existing literature in some important aspects. Firstly, we propose a new estimation and inference methodology for the cross-sectional MESS-type models with an unknown form of heteroskedasticity. Second, for the model selection problems involving cross-sectional MESS-type models, we consider a new method for computing the marginal likelihoods of the competing models in a Bayesian framework. In an Monte Carlo study, we also assess the finite sample properties of some existing estimators from the literature along with our proposed method. The simulation results show that the suggested M-estimator performs satisfactorily in finite samples.

Various estimation methods for the MESS models have been considered in the literature \citep{Lesage:2007, Jin:2015, yangfast, yangunbalanced}. \citet{Lesage:2007} consider both the maximum likelihood and Bayesian estimation approaches. \citet{Jin:2015} formally investigate the large sample properties of the quasi maximum likelihood estimator (QMLE) and the generalized method of moments estimator (GMME). Although both estimators have the standard large sample properties,  the GMME can be more efficient than the QMLE when the innovations are non-normal or heteroskedastic.   \citet{Jin:2015} showed that, unlike the SAR-type models, in the presence of an unknown form of heteroskedasticity, the QMLE of the MESS model can remain consistent if the spatial weights matrices used in the model are commutative. In this paper, we extend on their results by introducing an M-estimation methodology that is robust to heteroskedasticity when the spatial weights matrices do not commute. We also formally establish the large sample properties of the resulting M-estimator. 

We provide Bayesian estimation algorithms for the MESS models in the case of both homoskedastic and heteroskedastic error terms \citep{Dogan:2023, yangfast, Lesage:2007}. In the case of heteroskedasticity, we assume that the error terms follow a scale mixture of normal distributions, where the latent scale variables generate distributions with different variances. The latent variable representation facilitates the estimation through the data augmentation techniques. In both homoskedastic and heteroskedastic models, the conditional posterior distributions of parameters take known forms, except for those of the spatial parameters. The posterior draws for the spatial parameters can be generated either by using the random-walk or the independence-chain Metropolis-Hastings algorithms \citep{Lesage:2009,yangfast, LeSage:2007b, Lesage:1997}. We also consider the estimation of the MESS with endogenous and Durbin's regressors. \citet{Jin.Lee2018} show that the popular nonlinear two stage least squares (N2SLS) estimator, although consistent, may suffer from slow rates of convergence, and may attain nonstandard asymptotic distributions when the true value of a subset of model parameters is zero. We highlight how an adaptive group lasso estimator can provide a solution to these problems.

One important issue for the estimation of MESS-type model relates to the computation of the matrix exponential terms. Although there are various methods suggested in the literature, there is no single method that outperforms the rest in all cases \citep{Moler:2003}. As such, we visit the computation of the matrix exponential terms and exhibit how the matrix-vector product approach originally suggested by \cite{Lesage:2007} can be utilized for quick computation of these terms. A further issue for the MESS models relates to the interpretation of the coefficient estimates for the explanatory variables. In spatial models, the interpretation of the coefficient estimates for the explanatory variables become more complicated due to the cross-sectional interactions. To this end, we review the existing results on the estimation and inference results for the impact measures for the MESS-type models \citep{Jin.Lee2018, Lesage:2009, Arbia:2020}. 

When modeling spatial dependence, researchers may encounter specification problems related to choosing a spatial weights matrix from a pool of candidates, or choosing between nested or non-nested alternative model specifications. Often, modeling is done in an ad hoc manner and there is no guidance from an underlying structural model to address these issues. To this end, we provide a complete review of the literature on testing based, criterion based and marginal likelihood based approaches for model selection problems involving MESS-type models \citep{HAN2013250, LIU2019434, Dogan:2023, yangms, Lesage:2009}. In this regard, we also visit the Bayesian approaches and consider the modified harmonic mean method of \citet{Dey:1994} for the computation of the marginal likelihoods of competing models.

The rest of this paper is organized as follows. Section \ref{sec:model} reviews several cross-sectional MESS-type models. This section also introduces the main  properties of a matrix exponential term.  Sections \ref{sec:mle}--\ref{sec:endogenous} discuss various estimation and inference techniques for the MESS-type models. Section \ref{sec:computation} details the matrix-vector product approach for the efficient computation of matrix exponential terms. Section \ref{sec:impact} presents the impact measures for the MESS-type models and illustrates the inference methods. Section \ref{sec:selection} considers various kinds of model selection approaches involving the MESS-type models. Section \ref{sec:mc} presents results of an Monte Carlo study, focusing on the M-estimation of the MESS-type models. Section \ref{sec:conclusion} ends our review with some concluding remarks for future research. Some technical results are relegated to an appendix.

\section{Model Specification}\label{sec:model}
We consider the following first order matrix exponential spatial model (for short MESS$(1,1)$)
\begin{align}\label{2.1}
e^{\lambda_0 \W} \Y=\X\bs{\beta}_0+\U, \quad e^{\rho_0 \M}\U=\V, 
\end{align}
where $\Y=\left(y_{1},\hdots,y_{n}\right)^{'}$ is the $n\times1$ vector of observations on the dependent variable,  $\X$ is the $n\times k$ matrix of non-stochastic exogenous variables with the associated parameter vector $\bs{\beta}_0$,  $\U=\left(u_{1},\hdots,u_{n}\right)^{'}$ is the $n\times1$ vector of regression error terms, and $\V=\left(v_{1},\hdots,v_{n}\right)^{'}$ is the $n\times1$ vector of idiosyncratic error terms. The matrix exponential term $e^{\lambda_0 \W}$  is defined by $e^{\lambda_0 \W}=\sum_{i=0}^\infty\frac{\lambda^{i}_0 \W^{i}}{i!}$, where $\W$ is an $n\times n$ spatial weights matrix with zero diagonal elements and $\lambda_0$ is a scalar spatial parameter. The matrix exponential $e^{\rho_0 \M}$ is defined in a similar way, where $\M$ is another $n\times n$ spatial weights matrix and $\rho_0$ is a scalar spatial parameter.

The MESS(1,1) in \eqref{2.1} can be considered as the matrix exponential counterpart of the spatial autoregressive model with spatial autoregressive disturbances (SARAR(1,1)), 
\begin{align}\label{eq2.2}
(\mf{I}_n - \alpha_0 \W) \Y=\X\bs{\beta}_0+\U, \quad (\mf{I}_n - \tau_0 \M)\U=\V,
\end{align}
where $\alpha_0$ and $\tau_0$ are scalar spatial autoregressive parameters. The MESS(1,1) specification is obtained from \eqref{eq2.2}  by replacing $(\mf{I}_n - \alpha_0 \W)$ and $(\mf{I}_n - \tau_0 \M)$ with $e^{\lambda_0 \W}$ and $e^{\rho_0 \M}$, respectively.  The matrix exponential terms satisfy the following properties \citep{Lesage:2007}: 
\begin{enumerate}
\item $e^{c \mf{A}}$ is non-singular, where $\mf{A}$ is an $n\times n$ matrix and $c$ is a scalar constant,
\item $(e^{c \mf{A}})^{-1}=e^{-c \mf{A}}$, 
\item $\left|e^{c \mf{A}}\right|=e^{c\, \tr(\mf{A})}$, where $|\cdot|$ is the determinant operator and $\tr(\cdot)$ is the trace operator,
\item $e^{\mf{A}}e^{\mf{B}}=e^{\mf{A}+\mf{B}}$, where $\mf{A}$ and $\mf{B}$ are two $n\times n$ matrices satisfying the commutative property $\mf{A}\mf{B}=\mf{B}\mf{A}$.
\end{enumerate}
Because of these properties, the spatial models formulated with the matrix exponential terms have some advantages over the spatial models formulated with the spatial autoregressive processes. The first and second properties ensure that the reduced form of matrix exponential models always exists and does not require any restrictions for the spatial parameters. In the context of \eqref{2.1}, the  reduced form can be expressed as 
\begin{align}
 \Y=e^{-\lambda_0 \W}\X\bs{\beta}_0+e^{-\lambda_0 \W}e^{-\rho_0 \M}\V.
\end{align}
The third property implies that $\left|e^{\lambda \W}\right|=e^{\lambda \tr(\W)}=1$ because $\W$ has zero diagonal elements. This property ensures that the likelihood function of matrix exponential models is free of any Jacobian terms that need to be computed many times during estimation (see Section~\ref{sec:mle} for the details). On the other hand, the likelihood functions of spatial models specified in terms of spatial autoregressive processes is not free of Jacobian terms. For example,  the likelihood function of the SARAR(1,1) model includes $|\mf{I}_n - \tau \M|$ and $|\mf{I}_n - \alpha \W|$, which need to be computed in each iteration during the estimation process.

The MESS(1,1) specification nests two alternative specifications, namely, the MESS(1,0) and MESS(0,1), which can be obtained by setting $\lambda_0=0$ and $\rho_0=0$, respectively. A spatial Durbin extension can be obtained by including the spatial lags of the explanatory variables as regressors:
\begin{align}
&e^{\lambda_0 \W} \Y=\X\bs{\beta}_0+\W\X\bs{\delta}_0+\U, \quad e^{\rho_0 \M}\U=\V, 
\end{align}
where $\W\X$ denotes the spatial lag of $\X$ and $\bs{\delta}_0$ is the corresponding vector of coefficients. 

In the MESS(1,1) model, spatial interactions in the outcome variable arise only trough $\W$, and in the disturbance terms only through $\M$. In some cases, spatial dependence may arise from different sources, requiring different matrix exponential terms formulated with different spatial weights matrices. Let $\{\W_i\}_{i=1}^p$ and $\{\M_j\}_{j=1}^q$ be two sequences of spatial weights matrices. Then, a high-order version  including the matrix exponential terms formulated with $\{\W_i\}_{i=1}^p$ and $\{\M_j\}_{j=1}^q$ can be specified as 
\begin{align}
e^{\sum_{i=1}^p\lambda_{i0} \W_i} \Y=\X\bs{\beta}_0+\U, \quad e^{\sum_{j=1}^q\rho_{j0} \M_j}\U=\V,
\end{align}
where $\{\lambda_{i0}\}_{i=1}^p$ and $\{\rho_{j0}\}_{j=1}^q$ are sequences of spatial parameters. This model can be called the MESS$(p,q)$ model. 

Finally, we specify the distribution of the elements of $\V$. We can consider both homoskedastic and heteroskedastic error terms as specified in the following assumptions. 
\begin{assumption}\label{a1}
The disturbance terms $v_{i}$'s are independent and identically distributed (i.i.d.) across $i$ with mean zero and variance $\sigma_{0}^2$, and $\E|v_{i}|^{4+\varrho}<\infty$ for some $\varrho>0$.
\end{assumption}
\begin{assumption}\label{a2}
The disturbance terms $v_{i}$'s are independently distributed over $i$ with $\E\left(v_{i}\right)=0$ and $\var\left(v_{i}\right)=\sigma_{i}^2$, and $\E\left|v_{i}\right|^{4+\varrho}<\infty$ for some $\varrho>0$.
\end{assumption}
Both assumptions require that the error terms have more than the fourth moment, which is required by the central limit theorem (CLT) considered by \citet{KP:2001, KP:2010} for the linear and quadratic forms of $\V$ (see Lemma 4 in the Appendix).

\section{Maximum likelihood estimation approach} \label{sec:mle}
\subsection{Estimation under homoskedasticity}
In this section, we will consider the quasi maximum likelihood estimation of the MESS(1,1) model under Assumption~\ref{a1}. Let $\thet=(\gam^{'},\sigma^2)^{'}$, $\gam=(\bet^{'},\zet^{'})^{'}$ and $\zet=(\lambda,\rho)^{'}$. Also let $\thetz=(\gam_0^{'},\sigmaz^2)^{'}$ denote the  true values of the parameters. Then, the quasi log-likelihood function for the MESS(1,1) is given by
\begin{align*}
\ln L(\thet)=-\frac{n}{2}\ln(2\pi\sigma^2)+\ln\left|e^{\lambda \W}\right|+\ln\left|e^{\rho \M}\right|-\frac{1}{2\sigma^2}(\elw \Y-\X\bet)^{'}\ermp\erm(\elw \Y-\X\bet).
\end{align*}
Since $\ln\left|\elw\right|=\ln(e^{\lambda\tr(\W)})=\ln1=0$ and $\ln\left|\erm\right|=\ln(e^{\rho\tr(\M)})=\ln1=0$, the two Jacobian terms disappear in the quasi log-likelihood function. Thus, the quasi log-likelihood function simplifies to
\begin{align}\label{2.3}
\ln L(\thet)=-\frac{n}{2}\ln(2\pi\sigma^2)-\frac{1}{2\sigma^2}(\elw \Y-\X\bet)^{'}\ermp\erm(\elw \Y-\X\bet).
\end{align}
We can concentrate out $\sigma^2$ from the quasi log-likelihood function to obtain the concentrated quasi log-likelihood function only involving $\gam$. From the first order condition with respect to $\sigma^2$, the quasi maximum likelihood estimator of $\sigma^2$ is given by
\begin{align}
&\hat{\sigma}^2(\gam)=\frac{1}{n}(\elw \Y-\X\bet)^{'}\ermp\erm(\elw \Y-\X\bet).\label{2.5}
\end{align}
Substituting \eqref{2.5} into \eqref{2.3}, we obtain the concentrated quasi log-likelihood function as
\begin{align}\label{2.6}
\ln L(\gam)=-\frac{n}{2}\ln(2\pi+1)-\frac{n}{2}\ln\hat{\sigma}^2(\gam).
\end{align}
 Then, the QMLE $\hat{\gam}$ of $\gamma_0$ is defined as
\begin{align*}
\hat{\gam}=\argmax_{\gam}\ln L(\gam),
\end{align*}
which is equivalent to 
\begin{align}\label{qmle3.4}
\hat{\gam}=\argmin_{\gam} Q(\gam),
\end{align}
where $Q(\gam)=(\elw \Y-\X\bet)^{'}\ermp\erm(\elw \Y-\X\bet)$. Substituting $\hat{\gam}$ into \eqref{2.5}, we obtain the QMLE of $\sigma^2$ as $\hat{\sigma}^2=\hat{\sigma}^2(\hat{\gam})$. 

The large sample properties of the QMLE $\hat{\gam}$ can be established under some regularity conditions. For consistency, the necessary conditions are identifiable uniqueness of $\gam_0$ and the uniform stochastic convergence of the quasi maximum likelihood function to its population counterpart \citep[Theorem 3.4]{white:1994}. For asymptotic normality of $\hat{\gam}$, the CLT for linear and quadratic forms can be utilized \citep{KP:2001, KP:2010}. The low level assumptions guaranteeing the large sample properties of $\hat{\gam}$ are (i) the existence of moments of disturbance terms up to the fourth moment, (ii) a manageable degree of spatial correlation, (iii) a compact parameter space for $\zet$, (iv) the non-singularity of certain matrices in large samples, and (v) certain restrictions to guarantee identification of $\gam_0$ in large samples.\footnote{See \citet{Jin:2015} for a complete formal list of these low level assumptions.}  
 
The score functions with respect to the elements of $\gam$ are given by
\begin{align}\label{qmlscore}
\frac{\partial  Q(\gam)}{\partial\gam} & = \left\{\begin{array}{rl}
\bet:&-2\X^{'}e^{\rho_0\M^{'}}\V(\gam), \\
\lambda:& 2\V^{'}(\gam)e^{\rho \M}\W e^{\lambda \W}\Y,\\
\rho:&2\V^{'}(\gam)\M \V(\gam), 
\end{array}\right.
\end{align}
where $\V(\gam) = \erm(\elw \Y-\X\bet)$. Define $\mathbf{B}= \var\left(\frac{1}{\sqrt{n}}\frac{\partial  Q(\gam_0)}{\partial\gam}\right)$ and $\mathbf{A}=\mathrm{E}\left(-\frac{1}{n} \frac{\partial^2 Q\left(\gam_0\right)}{\partial \gam \partial \gam^{\prime}}\right)$. To introduce the closed-forms of $\mf{A}$ and $\mf{B}$, let $\mu_3=\E(v_i^3)$,  $\mu_4=\E(v_i^4)$, $\mathbb{W}=\ermz \W\nermz$, $\mathbf{H}^s=\mathbf{H}+\mathbf{H}^{'}$ for any square matrix $\mathbf{H}$ and $\vvec_D(\mathbf{H})$ be a vector containing the diagonal elements of $\mathbf{H}$. Then, using \eqref{qmlscore} and Lemma 2 in Appendix~\ref{lemmas}, we obtain 
 \begin{align*} 
 \mathbf{A} = -\frac{1}{n}\left(\begin{array}{ccc} 2\left(e^{\rho_0 \M} \X\right)^{'}\left(e^{\rho_0 \M} \X\right) & * & * 
 \\-2\left(\mathbb{W} e^{\rho_0 \M} \X \bet_0\right)^{'}e^{\rho_0 \M} \X  &\mathbf{A}_{\lambda\lambda}& * 
 \\  \mathbf{0}&\sigma_0^2 \tr\left(\mathbb{W}^s \M^s\right)   & \sigma_0^2 \tr\left(\M^s \M^s\right) \end{array}\right),
 \end{align*}
 \begin{align*}
\mathbf{B}=2\sigma_0^2 \mathbf{A}+\frac{1}{n}\left(\begin{array}{ccc} \mathbf{0}& * & * \\ 
-2 \mu_3\left( \vvec_D\left(\mathbb{W}^s\right)\right)^{\prime} e^{\rho_0 \M} \X&  \mathbf{0} & * \\ 
  \mathbf{0}& \mathbf{0} & \mathbf{B}_{\rho\rho}\end{array}\right).
\end{align*}
 where  the elements are defined as $\mathbf{A}_{\lambda\lambda}=\sigma_0^2 \tr\left(\mathbb{W}^s \mathbb{W}^s\right)+2\left(\mathbb{W} e^{\rho_0 \M} \X \bet_0\right)^{'}\left(\mathbb{W} e^{\rho_0 \M} \X \bet_0\right)$ and $\mathbf{B}_{\rho\rho}=\left(\mu_4-3 \sigma_0^4\right) \vvec_D^{\prime}\left(\mathbb{W}^s\right) \vvec_D\left(\mathbb{W}^s\right)+4 \mu_3\left(\mathbb{W} e^{\rho_0 \M} \X \bet_0\right)^{\prime} \vvec_D\left(\mathbb{W}^s\right)$.
The asymptotic distribution of $\hat{\gam}$ can be derived by applying the mean value theorem to $\frac{\partial Q(\hat{\gam})}{\partial \gam}$ around $\gam_0$. By the mean value theorem, we can write $\sqrt{n}(\hat{\gam}-\gam_0)=-\left(\frac{1}{n}\frac{\partial^2 Q(\bar{\gam})}{\partial\gam\partial\gam^{'}}\right)^{-1}\frac{1}{\sqrt{n}}\frac{\partial Q(\gam_0)}{\partial\gam}$, where $\bar{\gam}$ lies between $\hat{\gam}$ and $\gam_0$ elementwise. Then, the desired result follows by showing that $\frac{1}{n}\frac{\partial^2 Q(\bar{\gam})}{\partial\gam\partial\gam^{'}}-\frac{1}{n}\E\left(\frac{\partial^2 Q(\gam_0)}{\partial\gam\partial\gam^{'}}\right)=o_p(1)$ and the asymptotic normality of $\frac{1}{\sqrt{n}}\frac{\partial Q(\gam_0)}{\partial\gam}$ by Lemma 4 in Appendix \ref{lemmas}. Thus, it follows that
\begin{align}\label{eq3.6}
\sqrt{n}(\hat{\gam}-\gam_0)\xrightarrow{d}N\Bigl(\mathbf{0},\lim_{n\rightarrow\infty}\mathbf{A}^{-1}\mathbf{B}\mathbf{A}^{-1}\Bigr).
 \end{align}

Note that there are two cases that yield $\mf{B}=\sigma^2_0\mf{A}$. The first case arises when $\W$ and $\M$ commute. Under the commutative property, we have $\mathbb{W}=\W$ and $\vvec_D(\mathbb{W}^s) = \mathbf{0}$, suggesting that $\mf{B}=\sigma^2_0\mf{A}$. The second case occurs when the disturbance terms are normally distributed. Under the normality, we have $\mu_4=3\sigma^4_0$ and $\mu_3=0$, yielding $\mf{B}=\sigma^2_0\mf{A}$. In either case, the result in \eqref{eq3.6} reduces to $\sqrt{n}(\hat{\gam}-\gam_0)\xrightarrow{d}N(\mathbf{0},\lim_{n\rightarrow\infty} \sigma_0^2\mathbf{A}^{-1})$.  

Finally, for inference, the plug-in estimators of $\mathbf{A}$ and $\mathbf{B}$ can be utilized. To that end,  $\sigma_0^2$ can be consistently estimated by evaluating \eqref{2.5} at $\hat{\gam}$, and $\mu_3$ and $\mu_4$ can be consistently estimated by their sample analogs using the residuals $\V(\hat{\gam})$. Thus, the standard error of $\hat{\gam}$ can be obtained as the square root of the diagonal elements of $\frac{1}{n}\mathbf{A}^{-1}(\hat{\gam})\mathbf{B}(\hat{\gam})\mathbf{A}^{-1}(\hat{\gam})$, where  $\mathbf{A}(\hat{\gam})$ and $\mathbf{B}(\hat{\gam})$ are the plug-in estimators of $\mf{A}$ and $\mf{B}$, respectively.

\subsection{Estimation under heteroskedasticity}  

In this subsection, we consider the quasi maximum likelihood estimation of the MESS(1,1) under the assumption of heteroskedastic disturbance terms. Let $\boldsymbol{\Sigma}$ be the variance covariance matrix of the disturbance terms, i.e., $\boldsymbol{\Sigma}=\Diag(\sigma_1^2,\hdots,\sigma_n^2)$, the diagonal matrix formed by $\sigma_i^2$'s. The score functions of the quasi likelihood function evaluated at $\gam_0$ are given by
\begin{align*}
\frac{\partial  Q(\gam_0)}{\partial\gam} & = \left\{\begin{array}{rl}
\bet:&-2\X^{'}e^{\rho_0\M^{'}}\V,\\
\lambda:& 2\V^{'}e^{\rho_0\M}\W\left(\X \bet_0 + e^{-\rho_0\M}\V\right),\\
\rho:& 2\V^{'}\M \V.
\end{array}\right.
\end{align*}
The expectation of the score functions with respect to   $\bet$ and $\rho$ at  $\gam_0$ are zero by Lemma 2 in Appendix \ref{lemmas}. However, the expectation of the score function with respect to $\lambda$ at  $\gam_0$ is $\tr(\mathbb{W}\boldsymbol{\Sigma})$.  By Lemma 1 in Appendix \ref{lemmas}, the order of this term is $O(n)$ under the assumption that $\W$ and $\M$ are bounded in matrix column sum and row sum norms. Hence, the QMLE $\hat{\gam}$ may not be consistent. However, when $\W$ and $\M$ commute, we have $\mathbb{W}=\W$, yielding $\tr(\W\boldsymbol{\Sigma})=0$. Hence, when $\W$ and $\M$ commute, the QMLE of MESS(1,1) may remain consistent under the assumption of heteroskedastic disturbance terms.
 
The consistency and asymptotic normality of the QMLE $\hat{\gam}$ can be proved similarly to the homoskedastic case. Let $\mathbf{D}=\mathrm{E}\left(-\frac{1}{n} \frac{\partial^2 Q\left(\gam_0\right)}{\partial \gam \partial \gam^{\prime}}\right)$ and $\mathbf{F}= \var\left(\frac{1}{\sqrt{n}}\frac{\partial  Q(\gam_0)}{\partial\gam}\right)$. Using Lemma 2 in Appendix \ref{lemmas}, we obtain
\begin{align*}
&\D=-\frac{2}{n}\left(\begin{array}{ccc}
\left(e^{\rho_0 \M} \X\right)^{'}\left(e^{\rho_0 \M} \X\right) & * & * \\
 -\left( \W e^{\rho_0 \M} \X \bet_0\right)^{'}e^{\rho_0 \M} \X& \mf{D}_{\lambda\lambda}& * \\
  \mathbf{0} & \tr\left(\M^s \W \boldsymbol{\Sigma}\right)& \tr\left(\M^s \M \boldsymbol{\Sigma}\right) 
\end{array}\right),\\
&\mathbf{F}=\frac{2}{n}\left(\begin{array}{ccc}
2\left(e^{\rho_0 \M} \X\right)^{'} \boldsymbol{\Sigma}\left(e^{\rho_0 \M} \X\right)& * & * \\
-2\left( \boldsymbol{\Sigma} \W e^{\rho_0 \M} \X \bet_0 \right)^{'} e^{\rho_0 \M} \X&\mf{F}_{\lambda\lambda} & * \\
 \mathbf{0}&\tr\left(\boldsymbol{\Sigma} \M^s \boldsymbol{\Sigma} \W^s\right) &  \tr\left(\boldsymbol{\Sigma} \M^s \boldsymbol{\Sigma} \M^s\right) 
\end{array}\right),
\end{align*}
where 
\begin{align*}
&\mf{D}_{\lambda\lambda}=\tr\left(\W^s \W \boldsymbol{\Sigma}\right)+\left(\W e^{\rho_0 \M} \X \bet_0\right)^{'}\left(\W e^{\rho_0 \M} \X \bet_0\right),\\
&\mf{F}_{\lambda\lambda}=\tr\left(\boldsymbol{\Sigma} \W^s \boldsymbol{\Sigma} \W^s\right)+2\left(\W e^{\rho_0 \M} \X \bet_0\right)^{'} \boldsymbol{\Sigma}\left(\W e^{\rho_0 \M} \X \bet_0\right).
\end{align*}
Then, it can be shown that 
\begin{align}
\sqrt{n}(\hat{\gam}-\gam_0)\xrightarrow{d}N\Bigl(\mathbf{0},\lim_{n\rightarrow\infty} \mathbf{D}^{-1}\mathbf{F}\mathbf{D}^{-1}\Bigr).
\end{align}
For inference, the standard error of $\hat{\gam}$ can be obtained as the square root of the diagonal elements of $\frac{1}{n}\mathbf{D}^{-1}(\hat{\gam})\mathbf{F}(\hat{\gam})\mathbf{D}^{-1}(\hat{\gam})$, where $\mathbf{D}(\hat{\gam})$ and $\mathbf{F}(\hat{\gam})$ are the plug-in estimators of $\mf{D}$ and $\mf{F}$, respectively. Also, note that $\mathbf{D}$ and $\mathbf{F}$ involve the unknown diagonal matrix $\boldsymbol{\Sigma}$. As in \citet{White:1980}, the terms involving $\boldsymbol{\Sigma}$ can be consistently estimated by replacing $\boldsymbol{\Sigma}$ with $\hat{\bs{\Sigma}}=\Diag\left(v_1^2(\hat{\gam}),\hdots,v_n^2(\hat{\gam})\right)$, where $v_i(\hat{\gam})$ is the $i$th element of $\mf{V}(\hat{\gam})$.

\section{ M-estimation approach}\label{sec:m-est}

Under heteroskedasticity, when $\W$ and $\M$ do not commute, we can use the M-estimation method to formulate a consistent estimator of $\gam$ based on the adjusted score functions. Denote $\V(\bet,\bs{\zeta})=\erm(\elw \Y-\X\bet)$ and $\mf{V}=\V(\bet_0,\bs{\zeta}_0)$. Then, the score functions based on \eqref{2.3} can be determined as\footnote{Under heteroskedasticity, we ignore $\sigma^2$ and aim to construct the adjusted score functions such that $\E\left(S(\bet_0,\bs{\zeta}_0)\right)=0$. } 
\begin{align}\label{4.1}
S(\bet,\bs{\zeta})=
\left\{\begin{array}{rl}
\bet:&\X^{'}\ermp\V(\bet,\zeta),\\
\lambda: &-\Y^{'}\elwp\W^{'}\ermp\V(\bet,\zeta),\\
\rho: &-\V^{'}(\bet,\zeta)\M\V(\bet,\zeta).
\end{array}\right.
\end{align}
The essential reason why the QMLE is not consistent is $\plim_{n\rightarrow\infty}\frac{1}{n}S(\gam_0)\ne0$.  In the case of the score functions with respect to $\bet$ and $\rho$, we have $\E(\X^{'}\ermzp\V)=0$, and $\E(\V^{'}\M\V)=\tr(\boldsymbol{\Sigma}\M)=0$ because $\bs{\Sigma}$ is a diagonal matrix and $\M$ has zero diagonal elements.  In the case of the score function with respect to $\lambda$, we have 
\begin{align}\label{4.2}
\E(\Y^{'}\elwzp\W^{'}\ermzp\V)&=\E(\Y^{'}\elwzp\ermzp\ermzpn\W^{'}\ermzp\V)=\E(\Y^{'}\elwzp\ermzp\mathbb{W}^{'}\V)\nonumber\\
&=\E(\V^{'}\mathbb{W}\V)=\tr(\SSigma\WW).
\end{align}
If $\W$ and $\M$ commute, i.e., $\W\M=\M\W$, then we have $\WW=\W$, which yields $\tr(\SSigma\WW)=\tr(\SSigma\W)=0$. Thus, when the commutative property holds, we have $\plim_{n\rightarrow\infty}\frac{1}{n}S(\gam_0)=0$, suggesting that the QMLE can be consistent under heteroskedasticity.  However, if $\W\M\ne\M\W$, then we have $\tr(\SSigma\WW)=O(n)$ and $\plim_{n\rightarrow\infty}\frac{1}{n}S(\gam_0)\ne0$ in general, indicating that the QMLE may not be consistent under heteroskedasticity. We will adjust the score function with respect to $\lambda$ so that $\plim_{n\rightarrow\infty}\frac{1}{n}S(\gam_0)=0$ holds in all cases.  

To adjust the score function with respect to $\lambda$, we use the trace property $\tr(\mathbf{D}\mathbf{A})=\tr(\mathbf{D}\Diag(\mathbf{A}))$, where $\mf{D}$ is an $n\times n$ diagonal matrix and $\mf{A}$ is a conformable matrix. Using this property, we can express $\E(\Y^{'}\elwzp\W^{'}\ermzp\V)$ as
\begin{align}\label{4.3}
\E(\Y^{'}\elwzp\W^{'}\ermzp\V)&=\E(\Y^{'}\elwzp\ermzp\mathbb{W}^{'}\V)=\tr(\SSigma\WW)=\tr(\SSigma\Diag(\WW))\nonumber\\
&=\E(\V^{'}\Diag(\WW)\V)=\E(\Y^{'}\elwzp\ermzp\Diag(\WW)\V).
\end{align}
Then, subtracting the last term from the second term in \eqref{4.3}, we obtain
\begin{align}
&\E(\Y^{'}\elwzp\ermzp\mathbb{W}^{'}\V)-\E(\Y^{'}\elwzp\ermzp\Diag(\WW)\V)=0\nonumber\\
&\implies\E(\Y^{'}\elwzp\ermzp\WW_D\V)=0,
\end{align}
where $\WW_D=\WW-\Diag(\WW)$. Thus, we suggest using the sample counter part of  $\E(\Y^{'}\elwzp\ermzp\WW_D\V)$ as the adjusted score function with respect to $\lambda$. Then, our suggested adjusted score functions take the following form:
\begin{align}\label{4.5}
S^{*}(\gam)=
\left\{\begin{array}{rl}
\bet:&\X^{'}\ermp\V(\bet,\zet),\\
\lambda: &-\Y^{'}\elwp\ermp\WW_D(\rho)\V(\bet,\zet),\\
\rho: &-\V^{'}(\bet,\zeta)\M\V(\bet,\zet),
\end{array}\right.
\end{align}
where $\WW_D(\rho)=\WW(\rho)-\Diag(\WW(\rho))$ and $\WW(\rho)=\erm\W\ermn$. Note that $\E\left(S^{*}(\gam_0)\right)=0$ holds by construction. We first derive the estimator of $\bet_0$ for a given $\zet$ value, which is given by 
\begin{align}\label{4.6}
\hat{\bet}_M(\zet)=(\X^{'}\ermp\erm\X)^{-1}\X^{'}\ermp\erm\elw\Y.
\end{align} 
Then, substituting $\hat{\bet}_M(\zet)$ into the $\lambda$ and $\rho$ elements of \eqref{4.5}, we obtain the concentrated score functions as 
\begin{align}\label{4.7}
S^{c*}(\zet)=
\left\{\begin{array}{rl}
\lambda: &-\Y^{'}\elwp\ermp\WW_D(\rho)\hat{\V}(\zet),\\
\rho: &-\hat{\V}^{'}(\zet)\M\hat{\V}(\zet),
\end{array}\right.
\end{align}
where $\hat{\V}(\zet)=\V(\hat{\bet}_M(\zet),\zet)$. Then, the M-estimator (ME) of $\zet_0$ is defined by
\begin{align}\label{me}
\hat{\zet}_M=\argsolve\{S^{c*}(\zet)=0\}.
\end{align}
Substituting $\hat{\zet}_M$ into \eqref{4.6}, we get the M-estimator for $\bet$ as $\hat{\bet}_M=\hat{\bet}_M(\hat{\zet}_M)$. To prove the consistency of $\hat{\gam}_M=(\hat{\bet}_M^{'},\hat{\zet}_M^{'})^{'}$, we only need to prove the consistency of $\hat{\zet}_M$ since $\hat{\bet}_M=\hat{\bet}_M(\hat{\zet}_M)$. To that end, we let $\bar{S}^{*}(\bet,\zet)=\E(S^{*}(\bet,\zet))$ be the population counterpart of the adjusted score functions in \eqref{4.5}. Given $\zet$, we can write $\bar{\bet}_M(\zet)=(\X^{'}\ermp\erm\X)^{-1}\X^{'}\ermp\erm\elw\E(\Y)$, which can be substituted into the $\lambda$ and $\rho$ elements of $\bar{S}^{*}(\bet,\zet)$ to obtain
\begin{align}\label{4.9}
\bar{S}^{c*}(\zet)=
\left\{\begin{array}{rl}
\lambda: &-\E\left(\Y^{'}\elwp\ermp\WW_D(\rho)\bar{\V}(\zet)\right),\\
\rho: &-\E\left(\bar{\V}^{'}(\zet)\M\bar{\V}(\zet)\right),
\end{array}\right.
\end{align}
where $\bar{\V}(\zet)=\V(\bar{\bet}_M(\zet),\zet)$. To investigate the asymptotic properties of $\hat{\zet}_M$, we maintain the following assumptions.
\begin{assumption}\label{a3}
The spatial weights matrices $\W$ and $\M$ are uniformly bounded in both row sum and column sum matrix norms.
\end{assumption}
\begin{assumption}\label{a4}
There exists a constant $c>0$ such that $\left|\lambda\right|\leq c$ and $\left|\rho\right|\leq c$, and the true parameter vector $\zet_0$ lies in the interior of $\bs{\Delta}=[-c,c]\times[-c,c]$. 
\end{assumption}
\begin{assumption}\label{a5}
$\X$ is exogenous, with uniformly bounded elements, and has full column rank. Also, $\lim_{n\rightarrow\infty}\frac{1}{n}\X^{'}\ermp\erm \X$ exists and is nonsingular, uniformly in $\rho\in [-c,\,c]$. 
\end{assumption}
\begin{assumption}\label{a6}
$\inf_{\zet:\,d(\zet,\zet_0)\ge \vartheta}\left\Vert\bar{S}^{* c}(\zet)\right\Vert>0$ for every $\vartheta>0$, where $d(\zet,\zet_0)$ is a measure of distance between $\zet$ and $\zet_0$.
\end{assumption}
Assumption~\ref{a3} provides the essential properties of the spatial weights matrices. It ensures that the spatial correlation is limited to a manageable degree \citep{KP:2001,KP:2010}. Assumption~\ref{a4} requires that the parameter space of the parameters in the matrix exponential terms is compact. Assumption~\ref{a3} and Assumption~\ref{a4} imply that the matrix exponential terms are uniformly bounded in both row sum and column sum matrix norms. This can be seen from $\left\Vert \elw\right\Vert=\left\Vert \sum_{i=0}^{\infty}\lambda^i\W^i/i!\right\Vert\leq\sum_{i=0}^{\infty}|\lambda|^{i}\Vert \W\Vert^{i}/i!=e^{|\lambda|\Vert \W\Vert}$, which is bounded if $|\lambda|$ and $\Vert \W\Vert$ are bounded, where $\Vert\cdot\Vert$ is either the row sum or the column sum matrix norm. Assumption~\ref{a5} provides some regularity conditions and corresponds to Assumption 4 of  \citet{Jin:2015}. Assumption~\ref{a6} is a high-level assumption and ensures the identification of  $\zet_0$. In Appendix C, we provide two low-level conditions that are sufficient for Assumption~\ref{a6}.

The uniform convergence $\sup_{\zet\in\DDDelta}\frac{1}{n}\left\Vert S^{* c}(\zet)-\bar{S}^{* c}(\zet)\right\Vert\xrightarrow{\enskip p\enskip}0$ and Assumption~\ref{a6} ensure the consistency of $\hat{\zet}_M$.
 \begin{thm}\label{thm1}
Under Assumptions \ref{a2}--\ref{a6} , we have $\hat{\gam}_M\xrightarrow{p}\gam_0$.
\end{thm}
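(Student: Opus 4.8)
The plan is to follow the classical two-step route for Z-estimators defined through estimating equations: first establish that the normalized sample concentrated score converges uniformly over the compact set $\DDDelta$ to its population analogue, and then combine this uniform convergence with the identifiable-uniqueness condition of Assumption~\ref{a6} to deduce $\hat{\zet}_M\xrightarrow{p}\zet_0$. Consistency of $\hat{\bet}_M$ will then follow from the composition structure $\hat{\bet}_M=\hat{\bet}_M(\hat{\zet}_M)$. Since $\hat{\bet}_M(\zet)$ in \eqref{4.6} is an explicit, smooth function of $\zet$ and of the data, once $\hat{\zet}_M$ is shown to be consistent the remainder is routine, so I would concentrate the effort on $\hat{\zet}_M$, exactly as indicated in the text preceding the theorem.

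First I would reduce everything to linear and quadratic forms in $\V$. Using the reduced form $\Y=\elwzn\X\bet_0+\elwzn\ermzn\V$ together with $\V(\bet,\zet)=\erm(\elw\Y-\X\bet)$, each entry of $\frac1n S^{*}(\gam)$ in \eqref{4.5} can be written as a deterministic term plus a linear form $\frac1n\mathbf{a}_n(\bet,\zet)^{'}\V$ and/or a quadratic form $\frac1n\V^{'}\mathbf{A}_n(\zet)\V$, where the vectors $\mathbf{a}_n$ and matrices $\mathbf{A}_n$ are assembled from $\X$, $\W$, $\M$, the matrix exponentials, and the diagonal-removal operator $\WW_D(\rho)$. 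Under Assumptions~\ref{a3}--\ref{a5} these coefficient arrays are uniformly bounded in row and column sum norms, and, crucially, they are real-analytic in $\zet$ on $\DDDelta$ with derivatives that are themselves uniformly bounded, because $\elw$ and $\erm$ are analytic and uniformly bounded on the compact set (as shown in the discussion of Assumptions~\ref{a3}--\ref{a4}).

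The main work, and the step I expect to be the \emph{chief obstacle}, is the uniform law of large numbers $\sup_{\zet\in\DDDelta}\frac1n\|S^{c*}(\zet)-\bar{S}^{c*}(\zet)\|\xrightarrow{p}0$ announced just above the theorem. For each fixed $\zet$, pointwise convergence $\frac1n(\V^{'}\mathbf{A}_n(\zet)\V-\E[\V^{'}\mathbf{A}_n(\zet)\V])\xrightarrow{p}0$ and $\frac1n\mathbf{a}_n(\zet)^{'}\V\xrightarrow{p}0$ follow from the standard LLN for linear and quadratic forms of independent heteroskedastic variables under Assumption~\ref{a2} (bounded $(4+\varrho)$th moments) and the norm bounds of Assumption~\ref{a3} \citep{KP:2001,KP:2010}. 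To upgrade from pointwise to uniform convergence I would establish stochastic equicontinuity: for $\zet_1,\zet_2\in\DDDelta$ one has $|\frac1n\V^{'}(\mathbf{A}_n(\zet_1)-\mathbf{A}_n(\zet_2))\V|\le\|\mathbf{A}_n(\zet_1)-\mathbf{A}_n(\zet_2)\|\cdot\frac1n\|\V\|^2$ with $\frac1n\|\V\|^2=O_p(1)$, and a Lipschitz bound $\|\mathbf{A}_n(\zet_1)-\mathbf{A}_n(\zet_2)\|\le L\|\zet_1-\zet_2\|$ coming from the uniformly bounded derivatives; a finite covering of the compact $\DDDelta$ then turns the pointwise statements into the uniform one, and analogously for the linear forms. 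The delicate point is controlling the data-dependent adjustment $\WW_D(\rho)=\WW(\rho)-\Diag(\WW(\rho))$, where $\WW(\rho)=\erm\W\ermn$: the diagonal-extraction operator must be handled jointly with the analyticity of $\WW(\rho)$ to secure the uniform Lipschitz constant, using that $\Diag(\cdot)$ does not increase the row or column sum norm. Finally, since $\hat{\bet}_M(\zet)-\bar{\bet}_M(\zet)=(\X^{'}\ermp\erm\X)^{-1}\X^{'}\ermp\erm\elw(\Y-\E\Y)$ is itself a linear form in $\V$ with a uniformly bounded, invertible Gram matrix (Assumption~\ref{a5}), the same argument gives $\sup_{\zet\in\DDDelta}\|\hat{\bet}_M(\zet)-\bar{\bet}_M(\zet)\|\xrightarrow{p}0$, and substituting into \eqref{4.7} transfers the uniform convergence to the concentrated scores.

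With uniform convergence and identification in hand, consistency is immediate. Since $\hat{\zet}_M$ solves $S^{c*}(\hat{\zet}_M)=0$ by \eqref{me}, uniform convergence yields $\frac1n\|\bar{S}^{c*}(\hat{\zet}_M)\|\xrightarrow{p}0$, while Assumption~\ref{a6} (understood at the same $1/n$ normalization) forces any $\zet$ for which $\frac1n\|\bar{S}^{c*}(\zet)\|$ is near zero to lie within distance $\vartheta$ of $\zet_0$; letting $\vartheta\downarrow0$ gives $\hat{\zet}_M\xrightarrow{p}\zet_0$ by the standard argument for estimating equations \citep[Theorem 3.4]{white:1994}. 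Consistency of $\hat{\bet}_M$ then follows from $\hat{\bet}_M=\hat{\bet}_M(\hat{\zet}_M)$ via the bound $\|\hat{\bet}_M-\bet_0\|\le\sup_{\zet\in\DDDelta}\|\hat{\bet}_M(\zet)-\bar{\bet}_M(\zet)\|+\|\bar{\bet}_M(\hat{\zet}_M)-\bar{\bet}_M(\zet_0)\|$, using the continuity of $\bar{\bet}_M(\cdot)$ at $\zet_0$, the identity $\bar{\bet}_M(\zet_0)=\bet_0$ (which holds because $\elwz\E(\Y)=\X\bet_0$), and $\hat{\zet}_M\xrightarrow{p}\zet_0$. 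Combining the two statements yields $\hat{\gam}_M\xrightarrow{p}\gam_0$.
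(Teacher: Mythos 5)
Your proposal is correct and follows essentially the same route as the paper's proof: reduce everything to linear--quadratic forms in $\V$ via the reduced form and the projection structure of $\hat{\bet}_M(\zet)$, obtain pointwise convergence from the LLN for such forms under Assumptions~\ref{a2}--\ref{a3}, upgrade to uniform convergence over the compact $\DDDelta$ by stochastic equicontinuity (the paper applies the mean value theorem directly to $\frac{1}{n}\Y^{'}\RR_i(\zet)\Y$ and bounds the derivative terms, which is the same device as your Lipschitz-plus-covering argument), and conclude with Assumption~\ref{a6}. The only substantive differences are bookkeeping: the paper isolates the bias terms $\frac{1}{n}\tr(\SSigma\GZPI\TT_i(\zet)\GZI)$ explicitly and disposes of them by Lemma~\ref{l1}, but this is subsumed in your decomposition.
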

\begin{proof}
See Section \ref{appB1} in the Appendix.
\end{proof}
To derive the asymptotic distribution of $\hat{\gam}_M$, we apply the mean value theorem to $S^{*}(\hat{\gam}_M)=0$ at $\gam_0$, to obtain $\sqrt{n}(\hat{\gam}_M-\gam_0)=-\left(\frac{1}{n}\frac{\partial S^{*}(\overline{\gam})}{\partial \gam^{'}}\right)^{-1}\frac{1}{\sqrt{n}}S^{*}(\gam_0)$, where $\overline{\gam}$ lies between $\gam_0$ and $\hat{\gam}_M$ elementwise. By substituting the reduced form $\Y=e^{-\lambda_0 \W}\left(\X\bs{\beta}_0+e^{-\rho_0 \M}\V\right)$ into $S^{*}(\gam_0)$, we obtain a linear-quadratic form in $\V$:
\begin{align}\label{4.10}
S^{*}(\gam_0)=
\left\{\begin{array}{rl}
\bet:&\X^{'}\ermzp\V,\\
\lambda: &-\bet_0^{'}\X^{'}\ermzp\WW_D\V-\V^{'}\WW_D\V,\\
\rho: &-\V^{'}\M\V,
\end{array}\right.
\end{align}
where $\WW_D=\WW_D(\rho_0)$.  Thus, the CLT for the linear-quadratic forms of $\mf{V}$ in Lemma 4 of the Appendix can be used to establish the asymptotic normality of $\frac{1}{\sqrt{n}}S^{*}(\gam_0)$. Also, our assumptions ensure that $\frac{1}{n}\frac{\partial S^{*}(\overline{\gam})}{\partial \gam^{'}}-\frac{1}{n}\E\left(\frac{\partial S^{*}(\gam_0)}{\partial \gam^{'}}\right)=o_p(1)$. Using these results, we determine the asymptotic distribution of $\hat{\gam}_M$ in Theorem~\ref{thm2}.
\begin{thm}\label{thm2}
Under Assumptions \ref{a2}--\ref{a6}, we have
\begin{equation}
\sqrt{n}(\hat{\gam}_M-\gam_0)\xrightarrow{\enskip d \enskip}N\left(0,\lim_{n\rightarrow\infty}\boldsymbol{\Psi}^{-1}(\gam_0)\boldsymbol{\Omega}(\gam_0)\boldsymbol{\Psi}^{-1'}(\gam_0)\right),
\end{equation}
where $\boldsymbol{\Psi}(\gam_0)=-\frac{1}{n}\operatorname{E}\left(\frac{\partial S^*(\gam_0)}{\partial \gam^{'}}\right)$ and $\boldsymbol{\Omega}(\gam_0)=\var\left(\frac{1}{\sqrt{n}}S^{*}(\gam_0)\right)$ are assumed to exist and $\boldsymbol{\Psi}(\gam_0)$ is assumed to be positive definite for sufficiently large $n$.
\end{thm}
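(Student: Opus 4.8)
The plan is to treat $\hat{\gam}_M$ as a Z-estimator and carry out a first-order mean value expansion of the estimating equations $S^{*}(\hat{\gam}_M)=\mathbf{0}$ about $\gam_0$. Since $S^{*}$ is smooth in $\gam$ (each block is real-analytic in $\lambda,\rho$ through the matrix exponentials and affine in $\bet$), the mean value theorem applied coordinatewise gives
\[
\sqrt{n}(\hat{\gam}_M-\gam_0)=-\left(\frac{1}{n}\frac{\partial S^{*}(\bar{\gam})}{\partial\gam'}\right)^{-1}\frac{1}{\sqrt{n}}S^{*}(\gam_0),
\]
where $\bar{\gam}$ lies between $\hat{\gam}_M$ and $\gam_0$ elementwise (with possibly a distinct intermediate point for each row of $S^{*}$). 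By Theorem~\ref{thm1} we have $\hat{\gam}_M\xrightarrow{p}\gam_0$, hence $\bar{\gam}\xrightarrow{p}\gam_0$. The remaining work is to (i) establish a central limit theorem for $\frac{1}{\sqrt{n}}S^{*}(\gam_0)$ and (ii) show that the Jacobian $\frac{1}{n}\partial S^{*}(\bar{\gam})/\partial\gam'$ converges in probability to $-\boldsymbol{\Psi}(\gam_0)$; Slutsky's theorem then assembles the two limits.

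For step (i), I would substitute the reduced form $\Y=e^{-\lambda_0\W}(\X\bet_0+e^{-\rho_0\M}\V)$ into $S^{*}(\gam_0)$ to obtain the linear-quadratic representation in \eqref{4.10}, in which each block has the form $\mf{b}'\V+\V'\mf{A}\V$. The key structural facts are that the quadratic coefficient matrices $\WW_D$ and $\M$ both have zero diagonals---by construction of $\WW_D=\WW-\Diag(\WW)$ and by the zero-diagonal assumption on $\M$---and that, under Assumptions~\ref{a3}--\ref{a4}, the exponential factors and hence $\WW$, $\WW_D$ are uniformly bounded in row and column sum norms, while the linear coefficients are bounded by Assumption~\ref{a5}. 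These are precisely the hypotheses of the central limit theorem for linear-quadratic forms (Lemma~4 in the Appendix), whose fourth-moment requirement is furnished by Assumption~\ref{a2}. This yields $\frac{1}{\sqrt{n}}S^{*}(\gam_0)\xrightarrow{d}N(\mathbf{0},\lim_{n\rightarrow\infty}\boldsymbol{\Omega}(\gam_0))$, with $\boldsymbol{\Omega}(\gam_0)=\var(\frac{1}{\sqrt{n}}S^{*}(\gam_0))$ obtained in closed form from the variance formula for linear-quadratic forms (involving $\SSigma$ and the third and fourth moments of the $v_i$).

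For step (ii), I would differentiate $S^{*}(\gam)$ block by block, using $\partial\erm/\partial\rho=\M\erm$, $\partial\elw/\partial\lambda=\W\elw$, and $\partial\WW(\rho)/\partial\rho=\M\WW(\rho)-\WW(\rho)\M$ (from which $\partial\WW_D(\rho)/\partial\rho$ follows by subtracting the diagonal). After substituting the reduced form, every entry of $\frac{1}{n}\partial S^{*}(\gam)/\partial\gam'$ is a sample average of linear-quadratic forms in $\V$ whose coefficient matrices are continuous in $\zet$ and uniformly bounded on $\DDDelta$. A uniform law of large numbers then gives $\sup_{\zet\in\DDDelta}\left\Vert\frac{1}{n}\partial S^{*}(\gam)/\partial\gam'-\frac{1}{n}\E(\partial S^{*}(\gam)/\partial\gam')\right\Vert\xrightarrow{p}0$; together with $\bar{\gam}\xrightarrow{p}\gam_0$ and continuity of the expected Jacobian in $\gam$, this yields $\frac{1}{n}\partial S^{*}(\bar{\gam})/\partial\gam'\xrightarrow{p}-\boldsymbol{\Psi}(\gam_0)$. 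Because $\boldsymbol{\Psi}(\gam_0)$ is assumed positive definite for large $n$, its inverse is well behaved, and Slutsky's theorem combines the two pieces into the stated $N(\mathbf{0},\lim_{n\rightarrow\infty}\boldsymbol{\Psi}^{-1}(\gam_0)\boldsymbol{\Omega}(\gam_0)\boldsymbol{\Psi}^{-1'}(\gam_0))$ limit.

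The main obstacle will be step (ii), specifically the uniform control of the Jacobian. The derivative of $\WW_D(\rho)=\erm\W\ermn-\Diag(\erm\W\ermn)$ in $\rho$ produces nested matrix-exponential and commutator terms, and I must verify that these---after the reduced-form substitution and over the compact set $\DDDelta$---remain uniformly bounded in the matrix norms so that the ULLN and the continuity argument go through; the bound $\left\Vert\elw\right\Vert\le e^{|\lambda|\Vert\W\Vert}$ noted after Assumption~\ref{a5}, applied to each factor, is the natural tool here. A secondary point is confirming that differentiation does not destroy the zero-diagonal structure exploited in step (i): this is immediate for the quadratic parts of $S^{*}(\gam_0)$ itself, but the cross-derivatives in which a $\Diag(\cdot)$ operation interacts with $\partial/\partial\rho$ require a short check to ensure their expectations are correctly accounted for in $\boldsymbol{\Psi}(\gam_0)$.
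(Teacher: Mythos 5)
Your proposal is correct and follows the same overall architecture as the paper's proof: a coordinatewise mean value expansion of $S^{*}(\hat{\gam}_M)=\mathbf{0}$, asymptotic normality of $\frac{1}{\sqrt{n}}S^{*}(\gam_0)$ via the reduced-form substitution in \eqref{4.10} and the CLT for linear-quadratic forms (Lemma~4, combined with the Cram\'er--Wold device, which you leave implicit but which is needed to pass from the scalar CLT to the joint limit), and convergence of the normalized Jacobian to $-\boldsymbol{\Psi}(\gam_0)$. The one place where you genuinely diverge is the Jacobian step. You propose a uniform law of large numbers for $\frac{1}{n}\partial S^{*}(\gam)/\partial\gam'$ over the compact set $\DDDelta$ followed by continuity of the expected Jacobian; the paper instead avoids any uniform-in-$\zet$ argument by splitting the task into (a) $\frac{1}{n}\bigl(\partial S^{*}(\bar{\gam})/\partial\gam'-\partial S^{*}(\gam_0)/\partial\gam'\bigr)=o_p(1)$, proved by writing $e^{\bar{\lambda}\W}=(e^{(\bar{\lambda}-\lambda_0)\W}-\mf{I}_n)e^{\lambda_0\W}+e^{\lambda_0\W}$ (and similarly for $e^{\bar{\rho}\M}$ and $\bar{\bet}$) and invoking Lemma~3, which gives $\Vert e^{a\mf{A}}-\mf{I}_n\Vert=o_p(1)$ when $a=o_p(1)$, together with the order bounds of Lemma~5; and (b) a pointwise LLN at $\gam_0$, $\frac{1}{n}\bigl(\partial S^{*}(\gam_0)/\partial\gam'-\E(\partial S^{*}(\gam_0)/\partial\gam')\bigr)=o_p(1)$, again from Lemma~5. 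Your ULLN route is valid and the requisite stochastic equicontinuity can be established exactly as in the proof of Theorem~\ref{thm1} (the mean value bound on $\partial\RR_i(\zet)/\partial\zet$), but it is heavier machinery than needed here: once consistency of $\bar{\gam}$ is in hand, the paper's local expansion with Lemma~3 delivers the same conclusion with only pointwise convergence, and it sidesteps the uniform control of the nested $\dot{\WW}_D(\rho)$ terms that you correctly flag as the main obstacle in your approach.
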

\begin{proof}
See Section \ref{appB2}  in the Appendix.
\end{proof}
To conduct inference, we need consistent estimators of $\boldsymbol{\Psi}(\gam_0)$ and $\OOmega(\gam_0)$. For $\boldsymbol{\Psi}(\gam_0)$, we can use its observed counterpart given by $\boldsymbol{\Psi}(\hat{\gam}_M)=-\frac{1}{n}\frac{\partial S^*(\gam)}{\partial \gam^{'}}|_{\gam=\hat{\gam}_M}$. The elements of $\boldsymbol{\Psi}(\gam)$ are given by
\begin{align*}
&\boldsymbol{\Psi}_{\bet\bet}(\gam)=\frac{1}{n}\X^{'}\ermp\erm\X,   \quad \boldsymbol{\Psi}_{\bet\lambda}(\gam)=-\frac{1}{n}\X^{'}\ermp \Y(\zet),\\
&\boldsymbol{\Psi}_{\bet\rho}(\gam)=-\frac{1}{n}\X^{'}\ermp\M^{s}\V(\bet,\zet),\,\boldsymbol{\Psi}_{\lambda\bet}(\gam)=-\frac{1}{n}\Y^{'}\elwp\ermp\WW_D(\rho)\erm\X,\\
& \boldsymbol{\Psi}_{\lambda\lambda}(\gam)=\frac{1}{n}\Y^{'}(\zet)\WW_D(\rho)\V(\bet,\zet)+\frac{1}{n}\Y^{'}\elwp\ermp\WW_D(\rho)\Y(\zet),\\
& \boldsymbol{\Psi}_{\lambda\rho}(\gam)=\frac{1}{n}\Y^{'}\elwp\ermp\M^{'}\WW_D(\rho)\V(\bet,\zet)+\frac{1}{n}\Y^{'}\elwp\ermp\dot{\WW}_D(\rho)\V(\bet,\zet)\\
& \quad\quad\quad\quad+\frac{1}{n}\Y^{'}\elwp\ermp\WW_D(\rho)\M\V(\bet,\zet),\quad  \\
&\boldsymbol{\Psi}_{\rho\bet}(\gam)=\boldsymbol{\Psi}_{\bet\rho}(\gam), \quad  \boldsymbol{\Psi}_{\rho\lambda}(\gam)=\frac{1}{n}\Y^{'}(\zet)\M^{s}\V(\bet,\zet),\\
& \boldsymbol{\Psi}_{\rho\rho}(\gam)=\frac{1}{n}\V^{'}(\bet,\zet)\M^{s}\M\V(\bet,\zet),
\end{align*} 
where $\dot{\WW}_D(\rho)=\frac{\partial \WW_D(\rho)}{\partial \rho}=\M\WW_D(\rho)-\WW_D(\rho)\M-\Diag\left(\M\WW_D(\rho)-\WW_D(\rho)\M\right)$ and  $\Y(\zet)=\erm\W\elw\Y$. In the proof of Theorem \ref{thm2}, we show that $\boldsymbol{\Psi}(\hat{\gam}_M)$ is a consistent estimator of $\boldsymbol{\Psi}(\gam_0)$. Using Lemma 2 in the Appendix, we determined the closed form of $\OOmega(\gam_0)$ as
\begin{align*}
\OOmega(\gam_0)=
\left(\begin{array}{ccc} 
\X^{'}\ermzp\SSigma\ermz\X & -\X^{'}\ermzp\SSigma\WW_D^{'}\ermz\X\bet_0  & \mathbf{0}_{k\times1}\\
 * & \bs{\Omega}_{22}& \tr(\SSigma\WW_D\SSigma \M^s)\\
 *&*&\tr(\SSigma\M\SSigma \M^s)
\end{array}\right),
\end{align*} 
where $\bs{\Omega}_{22}=\bet_0^{'}\X^{'}\ermzp\WW_D\SSigma\WW_D^{'}\ermz\X\bet_0+\tr(\SSigma\WW_D\SSigma\WW_D^s)$.  Let $\OOmega(\hat{\gam}_M)$ be the plug-in estimator of $\OOmega(\gam_0)$, where we replace $\boldsymbol{\Sigma}$ with $\hat{\bs{\Sigma}}=\Diag\left(v_1^2(\hat{\gam}),\hdots,v_n^2(\hat{\gam})\right)$ and $v_i(\hat{\gam}_M)$ is the $i$th element of $\mf{V}(\hat{\gam}_M)$. 
\begin{thm}\label{thm3}
Under Assumptions \ref{a2}--\ref{a6}, we have $\OOmega(\hat{\gam}_M)=\OOmega(\gam_0)+o_p(1)$.
\end{thm}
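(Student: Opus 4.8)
The plan is to treat $\OOmega(\gam_0)$ entrywise and to show that each entry of the plug-in matrix converges in probability to its population value; since $\OOmega$ has fixed finite dimension, joint convergence of the entries is equivalent to the stated matrix result. Reading off the closed form (and restoring the $1/n$ normalization implicit in $\OOmega(\gam_0)=\var(n^{-1/2}S^{*}(\gam_0))$), every entry is one of two types: a linear form in $\SSigma$, namely $\frac1n\sum_i\sigma_i^2 c_i(\gam_0)$, arising from blocks such as $\frac1n\X'\ermzp\SSigma\ermz\X$ and $\frac1n\X'\ermzp\SSigma\WW_D^{'}\ermz\X\bet_0$; or a bilinear (trace) form in $\SSigma$, namely $\frac1n\sum_{i,j}\sigma_i^2\sigma_j^2 a_{ij}(\gam_0)b_{ji}(\gam_0)$, arising from blocks such as $\frac1n\tr(\SSigma\WW_D\SSigma\M^s)$, $\frac1n\tr(\SSigma\M\SSigma\M^s)$, and $\frac1n\tr(\SSigma\WW_D\SSigma\WW_D^s)$ inside $\bs{\Omega}_{22}$. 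The coefficient sequences $c_i(\gam)$ and coefficient matrices $\A(\gam),\B(\gam)$ are built from $\X$, $\bet_0$, and the matrix exponentials $\elwz,\ermz$ together with $\WW_D,\M,\M^s$; under Assumptions~\ref{a3}--\ref{a5} they are uniformly bounded in row and column sum norms and have uniformly bounded entries, and by the power-series definition of the exponential they are smooth in $\gam$ with derivatives that remain uniformly bounded on the compact set of Assumption~\ref{a4}. It therefore suffices to prove two generic claims, one per type, with $\SSigma$ replaced by $\hat\SSigma=\Diag(v_1^2(\hat\gam_M),\dots,v_n^2(\hat\gam_M))$ and $\gam_0$ replaced by $\hat\gam_M$.

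For the linear type I would split the error into a coefficient part and a residual part. Replacing $c_i(\hat\gam_M)$ by $c_i(\gam_0)$ costs $o_p(1)$ by the mean value theorem, the consistency $\hat\gam_M\xrightarrow{p}\gam_0$ of Theorem~\ref{thm1}, and the uniform derivative bounds above. Replacing $v_i^2(\hat\gam_M)$ by $v_i^2=v_i^2(\gam_0)$ is handled by a first-order expansion of $\V(\bet,\zet)=\erm(\elw\Y-\X\bet)$ in $\gam$: the leading terms equal $(\hat\gam_M-\gam_0)$ times $\frac1n$-normalized sums that are $O_p(1)$ under the moment bounds of Assumption~\ref{a2}, hence $o_p(1)$ by Theorem~\ref{thm1}. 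What remains is $\frac1n\sum_i(v_i^2-\sigma_i^2)c_i(\gam_0)$, which is $o_p(1)$ by Chebyshev's inequality, since the summands are independent and mean zero by Assumption~\ref{a2}, with variances bounded by $\E v_i^4<\infty$ and bounded coefficients, so the normalized sum has variance $O(1/n)$.

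For the bilinear type the structure that makes the heteroskedasticity-robust (White-type) estimator work is that the relevant inner matrices carry a zero diagonal: $\WW_D=\WW-\Diag(\WW)$ by construction and $\M$ has zero diagonal. Consequently $a_{ii}=0$ in every trace block, the diagonal contributions $i=j$ drop out, and the target reduces to $\frac1n\sum_{i\ne j}\sigma_i^2\sigma_j^2 a_{ij}b_{ji}$, for which $\E(v_i^2 v_j^2)=\sigma_i^2\sigma_j^2$ holds exactly by independence. After handling the coefficient and residual replacements as in the linear case, the heart of the matter is to show $\frac1n\sum_{i\ne j}(v_i^2 v_j^2-\sigma_i^2\sigma_j^2)a_{ij}b_{ji}=o_p(1)$, again by Chebyshev. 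This is the main obstacle: the variance of the double sum contains covariances $\cov(v_i^2 v_j^2,v_k^2 v_l^2)$ that are nonzero whenever the index sets overlap, and a naive count of overlapping quadruples is $O(n^3)$ against only an $n^{-2}$ prefactor. The key is that, because $\A$ and $\B$ are uniformly bounded in row and column sum norms (Assumptions~\ref{a3}--\ref{a4} applied to the exponentials), the weighted counts $\sum_j|a_{ij}b_{ji}|$ are $O(1)$ uniformly in $i$, which reduces every overlap configuration to an $O(1/n)$ contribution; only fourth moments enter, since the $i\ne j$ restriction never forces an eighth-moment term, and these are finite by Assumption~\ref{a2}. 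These are precisely the quadratic-form bounds underlying Lemmas~1, 2 and~4 in the Appendix, which I would invoke rather than rederive.

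Assembling the two generic claims over the finitely many entries then yields $\OOmega(\hat\gam_M)=\OOmega(\gam_0)+o_p(1)$. The only delicate accounting is the double-sum variance bound for the bilinear blocks; everything else is a routine combination of Theorem~\ref{thm1}, smoothness of the matrix exponential, and the uniform norm bounds of Assumptions~\ref{a3}--\ref{a5}.
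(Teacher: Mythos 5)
Your proposal is correct, but it is worth knowing that the paper does not actually supply an argument here: its ``proof'' of Theorem~\ref{thm3} consists of a single sentence deferring to Proposition~5 of \citet{Jin:2015}. What you have written is a legitimate standalone version of exactly the argument that deferral stands for, and you have put your finger on the two points that make it non-routine. First, the White-type substitution $v_i^2v_j^2$ for $\sigma_i^2\sigma_j^2$ in the trace blocks is only unbiased off the diagonal, and it is precisely because $\WW_D$, $\M$, $\M^s$ and $\WW_D^s$ all carry zero diagonals that no fourth-moment bias term survives --- this is the same structural fact that motivated the construction of $\WW_D$ in Section~\ref{sec:m-est} in the first place. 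Second, the variance of the off-diagonal double sum has $O(n^3)$ overlapping index quadruples against an $n^{-2}$ prefactor, and your observation that the uniform row/column sum bounds force $\sum_j|a_{ij}b_{ji}|=O(1)$ uniformly in $i$ is the correct way to collapse each overlap configuration to $O(1/n)$, using only the individual fourth moments guaranteed by Assumption~\ref{a2}. One small imprecision: Lemmas~1, 2 and~4 in the Appendix do not literally cover this step, since $\sum_{i\ne j}v_i^2v_j^2a_{ij}b_{ji}$ is a quartic form in $\V$ rather than a quadratic form or a product of the form $\V^{'}\mf{A}\V\cdot\V^{'}\mf{B}\V$; the direct Chebyshev computation you sketch is what is actually needed, so you should present it rather than cite the lemmas for it. The coefficient- and residual-replacement steps via the mean value theorem, Theorem~\ref{thm1}, and Lemma~\ref{l3}-type bounds on $e^{\hat{\lambda}\W}-\elwz$ are the same devices the paper uses in the proof of Theorem~\ref{thm2}, so that part of your argument matches the paper's toolkit exactly.
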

\begin{proof}
See Section \ref{appB3}  in the Appendix.
\end{proof}
Thus, the standard error of $\hat{\gam}_M$ can be obtained as the square root of the diagonal elements of $\frac{1}{n}\bs{\Psi}^{-1}(\hat{\gam}_M)\bs{\Omega}(\hat{\gam}_M)\bs{\Psi}^{-1'}(\hat{\gam}_M)$.

\section{GMM estimation approach} \label{sec:gmm}
\subsection{Estimation under homoskedasticity}  

In this section, we consider the GMM estimation of the MESS(1,1) model under Assumption~\ref{a1}. Recall again from the definition of MESS(1,1) that $\V(\gam)=\erm(\elw \Y-\X\bet)$, where $\gam=(\zet^{'},\bet^{'})^{'}$ and $\zet=(\lambda,\rho)^{'}$. We consider the following vector of moment functions consisting of $k_p$ quadratic moments and $k_f$ linear moments: 
\begin{align*}
g(\gam)=\frac{1}{n}\left(\V^{'}(\gam)\p_1\V(\gam),\hdots,\V^{'}(\gam)\p_{k_p}\V(\gam),\V^{'}(\gam)\F\right)^{'},
\end{align*}
where $\p_{m}$'s are $n\times n$ matrices constants with $\tr(\mf{P}_m)=0$ for $m=1,\hdots,k_p$, and $\F$ is the $n\times k_f$ matrix of instrumental variables (IV). Given an arbitrary symmetric weighting matrix $\pphi$ with rank greater than or equal to $k+2$,  the GMM objective function is given by $g^{'}(\gam)\pphi g(\gam)$. Then, an initial GMME (IGMME) can be obtained by 
\begin{align}\label{igmme5.1}
\hat{\gam}=\argmin_{\gam} g^{'}(\gam)\pphi g(\gam). 
\end{align}
 Define $\mathbf{G}=\E\left(\frac{\partial g(\gam_0)}{\partial\gam^{'}}\right)$ and $\HH=n \mathrm{E}\left(g\left(\gam_0\right) g^{\prime}\left(\gam_0\right)\right)$. Let $\vvec(\mathbf{A})$ denote the column vector formed by stacking the columns of matrix $\mathbf{A}$ and recall that $\vvec_D(\mathbf{A})$ denotes the column vector formed by the diagonal elements of the matrix $\mathbf{A}$. Then, by Lemma 2 in Appendix \ref{lemmas}, we obtain
\begin{align*}
\HH&=\frac{1}{n}\left(\begin{array}{cc}\frac{\sigma_0^4}{2} \oomega^{'} \oomega+\frac{1}{4}\left(\mu_4-3 \sigma_0^4\right) \oomega_{d}^{'} \oomega_{d} & \frac{1}{2} \mu_3 \oomega_{d}^{'} \F \\ \frac{1}{2} \mu_3 \F^{'} \omega_{d} & \sigma_0^2 \F^{\prime} \F\end{array}\right),
\end{align*} 
and  
\begin{align*}
\mathbf{G}&
= \frac{1}{n}\left(\begin{array}{ccc}\mathbf{0}&\frac{\sigma_0^2}{2} \oomega^{\prime} \operatorname{vec}\left(\mathbb{W}^s\right) & \frac{\sigma_0^2}{2} \oomega^{'} \operatorname{vec}\left(\M^s\right) \\ 
-\F^{'} e^{\tau_0 \M} \X&\F^{'} \mathbb{W} e^{\tau_0 \M} \X \bet_0 & \mathbf{0}  \end{array}\right),
 \end{align*}
where $\oomega=(\vvec(\p_1^s),\hdots,\vvec(\p_{k_p}^s))$ and $\oomega_d=(\vvec_D(\p_1^s),\hdots,\vvec_D(\p_{k_p}^s))$. 

The large sample properties of the IGMME $\hat{\gam}$ can be established under some regularity conditions. For consistency, the necessary conditions are identification of $\gam_0$ from the population moments and the uniform stochastic convergence of the generalized method of moments objective function to its population counterpart. For the asymptotic normality of $\hat{\gam}$, the central limit theorem for linear and quadratic forms can be utilized.\footnote{ The low level assumptions guaranteeing the large sample properties are provided in \citet{Jin:2015}.} The  asymptotic distribution of $\hat{\gam}$ can be derived  by applying the mean value theorem to $\frac{\partial g^{'}(\hat{\gam})}{\partial \gam}\pphi g(\hat{\gam})=0$ at $\gam_0$ to get $\sqrt{n}(\hat{\gam}-\gam_0)=-\left(\frac{\partial g^{'}(\hat{\gam})}{\partial\gam}\pphi \frac{\partial g(\bar{\gam})}{\partial\gam^{'}}\right)^{-1}\frac{\partial g^{'}(\hat{\gam})}{\partial\gam}\pphi\sqrt{n} g(\gam_0)$, where $\bar{\gam}$ lies between $\hat{\gam}$ and $\gam_0$ elementwise. Then, the asymptotic distribution of $\sqrt{n}(\hat{\gam}-\gam_0)$ follows by applying the CLT in Lemma 4 in the Appendix to $\sqrt{n} g(\gam_0)$ and
showing that $\frac{\partial g(\hat{\gam})}{\partial\gam^{'}}-\E\left(\frac{\partial g(\gam_0)}{\partial\gam^{'}}\right)=o_p(1)$. Thus, we have 
\begin{align}
\sqrt{n}\left(\hat{\gam}-\gam_0\right)\stackrel{d}{\rightarrow} N\left(\mathbf{0}, \lim _{n \rightarrow \infty}(\mathbf{G}^{'} \pphi \mathbf{G})^{-1} \mathbf{G}^{'} \pphi \HH \pphi \mathbf{G}(\mathbf{G}^{'} \pphi \mathbf{G})^{-1}\right).
\end{align}
From the expression for the variance-covariance matrix of IGMME, we can see that the precision of the estimator can be improved by replacing the arbitrary weighing matrix $\pphi$ in the objective function with $\HH^{-1}$. The resulting GMME is called the optimal GMME \citep{Hansen:1982}. However, this estimator is not feasible as $\HH^{-1}$ is unknown. To make it feasible, a plug-in estimator $\hat{\HH}\equiv\HH(\hat{\gam})$ based on the initial GMME $\hat{\gam}$ can be formulated. Then, the feasible optimal GMME is defined by
\begin{align}
\hat{\gam}_o=\argmin_{\gam} g^{'}(\gam)\hat{\HH}^{-1} g(\gam).
\end{align}
Under some conditions, \citet{Jin:2015}  show that
\begin{align}\label{bgmme}
\sqrt{n}\left(\hat{\gam}_o-\gam_0\right)\stackrel{d}{\rightarrow} N\left(\mathbf{0}, \lim _{n \rightarrow \infty}(\mathbf{G}^{'}\HH^{-1}\mathbf{G})^{-1}\right).
\end{align} 
\citet{Jin:2015} determine the best set of moment functions that provide the most efficient GMME for the MESS(1,1) under homoskedasticity. Their idea is to decompose the components of the inverse of the variance-covariance matrix of the optimal GMME, and then use the Cauchy-Schwarz inequality in such a way that an upper bound on the inverse of the variance-covariance matrix that is free of arbitrary pieces of the moment functions (free of $\p_i$'s and $\F$) can be attained. The resulting GMME is termed as the best GMME (BGMME). When the disturbance terms are normally distributed, the BGMME turns out to be asymptotically as efficient as the QMLE. However, when the disturbance terms are not normally distributed, and $\W$ and $\M$ do not commute, the BGMME can be asymptotically more efficient than the QMLE. The best set of moment functions is 
\begin{align}
g^{*}(\gam)=\frac{1}{n}\left(\V^{'}(\gam)\p_1^{*}\V(\gam),\hdots,\V^{'}(\gam)\p_{k^{*}+4}^{*}\V(\gam),\V^{'}(\gam)\F^{*}\right)^{'},
\end{align}
 where $\p^{*}_1=\mathbb{W}$, $\p^{*}_2=\Diag(\mathbb{W})$, $\p^{*}_3=\Diag(\ermz\W\X\bet_0)^{(t)}$, $\p^{*}_4=\M$, $\p^{*}_{m+4}=\Diag(\ermz\X_{m})^{(t)}$ for $m=1,...,k^{*}$, and $\F^{*}=(\F^{*}_1,\F^{*}_2,\F^{*}_3,\F^{*}_4)$ with $\F_1^{*}=\ermz\X^{*}$, $\F_2^{*}=\ermz\W\X\bet_0$, $\F_3^{*}=\bm{l}$,  $\F_4^{*}=\vvec_D(\mathbb{W})$, where $\X^*$ excludes the intercept term in $\X$ if $\M$ is row-normalized so that $\F_1^{*}$ does not contain the intercept term generated in $\ermz\X$, $k^*$ is the number of columns in $\X^{*}$, $\A^{(t)}=\A-\mathbf{I}_n\tr(\A)/n$ for any $n\times n$ matrix $\A$ and $\bm{l}$ is an $n\times1$ vector of ones.  

\subsection{Estimation under heteroskedasticity} 

In this section, we consider the GMM estimation of MESS(1,1) under Assumption~\ref{a2}. Recall that $\boldsymbol{\Sigma}$ denotes the variance-covariance matrix of the disturbance terms, i.e., $\boldsymbol{\Sigma}=\Diag(\sigma_1^2,\hdots,\sigma_n^2)$. Similar to the homoskedastic case, we again employ the following vector of moment functions consisting of  $k_p$ quadratic moment functions and $k_f$ linear moment functions:
\begin{align*}
g(\gam)=\frac{1}{n}(\V^{'}(\gam)\p_1\V(\gam),\hdots,\V^{'}(\gam)\p_{k_p}\V(\gam),\V^{'}(\gam)\F)^{'}.
\end{align*}
At $\gam_0$, we have $\mathrm{E}\left(\V^{'}\p_m\V\right) = \tr\left(\p_m\boldsymbol{\Sigma}\right)=\tr\left(\boldsymbol{\Sigma}\Diag(\p_i)\right)$, which is equal to zero if the diagonal elements of $\p_m$ are zeros. Hence, in the heteroskedastic case, we require that the diagonal elements of $\p_m$ are zeros, i.e., $\Diag(\p_m)=\mf{0}$ for $m=1,2, \hdots, k_p$. Then, an initial GMME based on an arbitrary symmetric weighting matrix $\pphi$, with rank greater than or equal to $k+2$, can be defined as 
\begin{align}
\hat{\gam}=\argmin g^{'}(\gam)\pphi g(\gam). 
\end{align}
Let $\mathbf{G}=\E\left(\frac{\partial g(\gam_0)}{\partial\gam^{'}}\right)$ and $\HH=n \mathrm{E}\left(g\left(\gam_0\right) g^{\prime}\left(\gam_0\right)\right)$. By Lemma 2 in Appendix \ref{lemmas}, we can show that
\begin{align*}
\HH&=\frac{1}{n}\left(\begin{array}{cc}\frac{1}{2} \oomega^{'} \oomega & \mathbf{0}\\ 
\mathbf{0} &  \F^{'} \boldsymbol{\Sigma}\F\end{array}\right),
\end{align*} 
and  
\begin{align*}
\mathbf{G}&=\frac{1}{n}\left(\begin{array}{ccc} \mathbf{0}&\frac{1}{2} \oomega^{'} \vvec\left(\SSigma^{1/2}\left(\SSigma^{-1} \mathbb{W}\right)^s \SSigma^{1 / 2}\right) & \frac{1}{2} \oomega^{'} \operatorname{vec}\left(\SSigma^{1/2}\left(\SSigma^{-1} \M\right)^s \SSigma^{1/2}\right)  \\ 
 -\F^{'} e^{\tau_0 \M} \X&\F^{'} \mathbb{W} e^{\tau_0 \M} \X \bet_0 & \mathbf{0} \end{array}\right),
 \end{align*}
where $ \oomega=\vvec(\SSigma^{1/2}\p_1^{s}\SSigma^{1/2},\hdots,\SSigma^{1/2}\p_{k_p}^{s}\SSigma^{1/2})$. It follows again that 
\begin{align}
\sqrt{n}\left(\hat{\gam}-\gam_0\right)\stackrel{d}{\rightarrow} N\left(\mathbf{0}, \lim _{n \rightarrow \infty}(\mathbf{G}^{'} \pphi \mathbf{G})^{-1} \mathbf{G}^{'} \pphi \HH \pphi \mathbf{G}(\mathbf{G}^{'} \pphi \mathbf{G})^{-1}\right).
\end{align}
Note that $\mathbf{G}$ and $\mathbf{H}$ involve the unknown diagonal matrix $\boldsymbol{\Sigma}$. These terms can be consistently estimated by replacing $\boldsymbol{\Sigma}$ with $\Diag(v_1^2(\hat{\gam}),\hdots,v_n^2(\hat{\gam}))$. Let $\hat{\HH}$ be the plug-in estimator of $\HH$ based on the initial GMME $\hat{\gam}$. Then, a feasible optimal robust GMME (RGMME) can be obtained as 
\begin{align}\label{rgmme5.8}
\hat{\gam}_o=\argmin g^{'}(\gam)\hat{\HH}^{-1} g(\gam).
\end{align}
 It can be shown that
 \begin{align}
 \sqrt{n}\left(\hat{\gam}_o-\gam_0\right)\stackrel{d}{\rightarrow} N\left(\mathbf{0}, \lim _{n \rightarrow \infty}(\mathbf{G}^{'}\HH^{-1}\mathbf{G})^{-1}\right).
 \end{align} 
In the heteroskedastic case, the best set of moment functions is not feasible because the moment functions involve the unknown $\boldsymbol{\Sigma}$, which cannot be consistently estimated. In practice, we can formulate the RGMME based on the following vector of moment functions \citep{Jin:2015}: 
\begin{align}
g^{*}(\gam)=\frac{1}{n}\left(\V^{'}(\gam)(\hat{\mathbb{W}}-\Diag(\hat{\mathbb{W}}))\V(\gam),\V^{'}(\gam)\M\V(\gam),\V^{'}(\gam)\F\right)^{'},
\end{align}
 where $\F=(\hat{\mathbb{W}}e^{\hat{\tau}\M}\X\hat{\bet},e^{\hat{\tau}\M}\X)$ with $\hat{\mathbb{W}}=e^{\hat{\tau}\M} \W e^{-\hat{\tau}\M}$.

\section{Bayesian estimation approach} \label{sec:Bayesian}
\subsection{Estimation under homoskedasticity} 
Following \citet{Lesage:2007}, we assume the following independent prior distributions: $\lambda\sim N(\mu_\lambda ,V_{\lambda})$, $\rho\sim N(\mu_\rho,V_{\rho})$, $\bet \sim N(\boldsymbol{\mu}_{\bet}, \V_{\bet})$ and $\sigma^2 \sim IG(a, b)$, where $IG$ denotes the inverse-gamma distribution.  Under these prior distributions, the posterior distribution of parameters can be expressed as\footnote{We use $p(\cdot)$ to denote the relevant density functions, and ignore $\mf{X}$ in the conditional sets for the sake of simplicity.}
\begin{align*}
p(\thet|\Y)\propto p(\Y|\thet)p(\thet)=p(\Y|\thet)p(\bet)p(\sigma^2)p(\lambda)p(\rho),
\end{align*} 
where $p(\thet)$ is the joint prior distribution of $\thet$ and $p(\Y|\thet)$ is the likelihood function given as
\begin{align*}
p(\mf{Y}|\thet)=(2\pi\sigma^2)^{-n/2}\exp\left(-\frac{1}{2\sigma^2}(\elw \Y-\X\bet)^{'}\ermp\erm(\elw \Y-\X\bet)\right).
\end{align*}
Algorithm \ref{alg1} describes a Gibbs sampler that can be used to generate random draws from $p(\thet|\Y)$. 
\begin{algorithm}[Estimation of \eqref{2.1} under homoskedasticity]\label{alg1}\hfill
\begin{enumerate}
\item Sampling step for $\bet$:
\begin{equation*}
\bet|\Y, \lambda, \rho, \sigma^2 \sim N(\hat{\bet}, \K_{\bet}),
\end{equation*}
where $\K_{\bet}=(\V_{\bet}^{-1}+\sigma^{-2}\X^{'}e^{\rho \M^{'}}e^{\rho \M}\X)^{-1}$ and $\hat{\bet}=\K_{\bet}(\sigma^{-2}\X^{'}e^{\rho \M^{'}}e^{\rho \M}e^{\lambda \W}\Y+\V_{\bet}^{-1}\boldsymbol{\mu}_{\bet})$.
\item Sampling step for $\sigma^2$:
\begin{equation*}
\sigma^2 | \Y, \lambda, \rho, \bet \sim IG(\hat{\sigma}^2, K_{\sigma^2}),
\end{equation*}
where $\hat{\sigma}^2=a+\frac{n}{2}$ and $K_{\sigma^2}=b+\frac{1}{2}(e^{\lambda \W}\Y-\X\bet)^{'}e^{\rho \M^{'}}e^{\rho \M}(e^{\lambda \W}\Y-\X\bet)$.
\item Sampling step for $\lambda$:
\begin{align*}
&p(\lambda|\Y, \bet, \rho, \sigma^2)\\
&\propto \exp\left(-\frac{1}{2}\left(\sigma^{-2}(e^{\lambda \W}\Y-\X\bet)^{'}e^{\rho \M^{'}}e^{\rho \M}(e^{\lambda \W}\Y-\X\bet)+V_{\lambda}^{-1}(\lambda^2-2\mu_{\lambda} \lambda)\right)\right).\nonumber
\end{align*}
Generate a candidate value $\lambda^{new}$ according to
\begin{equation*}
\lambda^{new}=\lambda^{old}+c_{\lambda}\times N(0,1),
\end{equation*}
where $c_{\lambda}$ is a tuning parameter.\footnote{The tuning parameter is determined during the estimation such that the acceptance rate falls between $40\%$ and $60\%$ \citep{Lesage:2009}.} Then, accept the candidate value $\lambda^{new}$ with probability 
\begin{equation*}
\mathbb{P}(\lambda^{new}, \lambda^{old})=\min\left(1, \frac{p(\lambda^{new}|\Y,\bet, \sigma^2, \rho)}{p(\lambda^{old}|\Y,\bet, \sigma^2, \rho)}\right).
\end{equation*} 
\item Sampling step for $\rho$:
\begin{align*}
&p(\rho|\Y, \bet, \lambda, \sigma^2)\\
&\propto \exp\left(-\frac{1}{2}\left(\sigma^{-2}(e^{\lambda \W}\Y-\X\bet)^{'}e^{\rho \M^{'}}e^{\rho \M}(e^{\lambda \W}\Y-\X\bet)+V_{\rho}^{-1}(\rho^2-2\mu_{\rho} \rho)\right)\right).\nonumber
\end{align*}
Use the random-walk Metropolis-Hastings algorithm described in Step 3 to generate random draws from $p(\rho|\Y, \bet, \lambda, \sigma^2)$.
\end{enumerate}
\end{algorithm}
In Algorithm~\ref{alg1}, the conditional posterior distributions of $\bs{\beta}$ and $\sigma^2$ are determined from $p(\bet|\Y, \lambda, \rho, \sigma^2)\propto p(\mf{Y}|\bs{\theta})p(\bs{\beta})$ and $p(\sigma^2 | \Y, \lambda, \rho, \bet)\propto p(\mf{Y}|\bs{\theta})p(\sigma^2)$, respectively. Since we assume conjugate priors for $\bs{\beta}$ and $\sigma^2$, these conditional posterior distributions take known forms as shown in Algorithm \ref{alg1}. The Bayesian argument used to determine these conditional posterior distributions is analogous to the one used for a linear regression model. On the other hand, the conditional posterior distributions of spatial parameters are non-standard because the likelihood function is non-linear in terms of these parameters. To sample these parameters, we use the random walk Metropolis-Hastings algorithm suggested by \citet{Lesage:2009}. 

\subsection{Estimation under heteroskedasticity} 
Following  \citet{Lesage:1997} and \citet{Lesage:2009}, we assume that the disturbance terms have a scale mixture of normal distributions such that the scale mixture variables generate different distributions with distinct variance terms. Thus, we have $v_i|\eta_i\sim N(0,\eta_i\sigma^2)$, where $\eta_i$'s are independent scale mixture variables with $\eta_i\sim \text{IG}(\nu/2,\nu/2)$ for $i=1,\hdots,n$. Let $\bs{\eta}=(\eta_1,\hdots,\eta_n)^{'}$ and $\mf{H}(\bs{\eta})=\text{Diag}\left(\eta_1,\hdots,\eta_n\right)$ be the $n\times n$ diagonal matrix with the $i$th diagonal element $\eta_i$. Then, we can derive the conditional likelihood function $p(\mf{Y}|\bs{\theta},\bs{\eta})$ as
\begin{align}\label{6.1}
p(\mf{Y}|\bs{\theta},\bs{\eta})&=(2\pi\sigma^2)^{-n/2}\left(\prod_{i=1}^n\eta_{i}\right)^{-1/2}\\
&\times\exp\left(-\frac{1}{2\sigma^2}\left(e^{\lambda \mf{W}}\mf{Y}-\mf{X}\bs{\beta}\right)^{'}e^{\rho \mf{M}^{'}}\mf{H}^{-1}(\bs{\eta})e^{\rho\mf{M}}\left(e^{\lambda\mf{W}}\mf{Y}-\mf{X}\bs{\beta}\right)\right).\nonumber
\end{align}
To introduce a Bayesian estimation approach, we adopt the prior distributions assumed in Section 6.1 for $\bs{\beta}$, $\lambda$, $\rho$ and $\sigma^2$. In the heteroskedastic case, we also need to determine a prior distribution for $\nu$. To that end, we note that  the marginal distribution of $v_i$ is a $t$ distribution with mean zero, scale parameter $\sigma^2$ and $\nu$ degrees of freedom, i.e., $v_i\sim t_\nu(0,\sigma^2)$. Thus, we assume the following prior $\nu\sim \text{Uniform}(2,\bar{\nu})$, where $\text{Uniform}(a,b)$ denotes the uniform distribution over the interval $(a,b)$, and $\bar{\nu}$ is a known positive number. This prior distribution ensures  that the variance of $v_i$ exists because $\nu>2$. Also, we can set $\bar{\nu}$ to a large positive number so that the $t$ distribution is allowed to approximate the normal distribution well-enough. 

The posterior distribution of parameters  then takes the following form:
\begin{align*}
p(\thet,\eeta|\Y)\propto p(\Y|\thet,\eta)p(\thet,\eeta)=p(\Y|\thet,\eeta)p(\bet)p(\sigma^2)p(\lambda)p(\rho)p(\eeta|\nu)p(\nu),
\end{align*} 
where $p(\Y|\thet,\eeta)$ is the conditional likelihood function stated in \eqref{6.1} and $p(\thet,\eeta)$ is the joint prior distribution of $\thet$ and $\eeta$. Algorithm~\ref{alg2} describes a Gibbs sampler that can be used to generate random draws from $p(\thet,\eeta|\Y)$. 
\begin{algorithm}[Estimation of \eqref{2.1} under heteroskedasticity]\label{alg2}\hfill
\begin{enumerate}
\item Sampling step for $\bet$:
\begin{equation*}
\bet|\Y, \lambda, \rho, \sigma^2,\eeta \sim N(\hat{\bet}, \K_{\bet}),
\end{equation*}
where $\K_{\bet}=(\V_{\bet}^{-1}+\sigma^{-2}\X^{'}e^{\rho \M^{'}}\mathbf{H}^{-1}(\eeta)e^{\rho \M}\X)^{-1}$,  $\mathbf{H}(\eeta)=\Diag(\eta_1,\hdots,\eta_n)$ and\\ $\hat{\bet}=\K_{\bet}(\sigma^{-2}\X^{'}e^{\rho \M^{'}}\mathbf{H}^{-1}(\eeta)e^{\rho \M}e^{\lambda \W}\Y+\V_{\bet}^{-1}\boldsymbol{\mu}_{\bet})$.
\item Sampling step for $\sigma^2$:
\begin{equation*}
\sigma^2 | \Y, \lambda, \rho, \bet,\eeta \sim IG(\hat{\sigma}^2, K_{\sigma^2}),
\end{equation*}
where $\hat{\sigma}^2=a+\frac{n}{2}$ and $K_{\sigma^2}=b+\frac{1}{2}(e^{\lambda \W}\Y-\X\bet)^{'}e^{\rho \M^{'}}\mathbf{H}^{-1}(\eeta)e^{\rho \M}(e^{\lambda \W}\Y-\X\bet)$.
\item Sampling step for $\lambda$:
\begin{align*}
&p(\lambda|\Y, \bet, \rho, \sigma^2,\eeta)\\
&\propto \exp\left(-\frac{1}{2}\left(\sigma^{-2}(e^{\lambda \W}\Y-\X\bet)^{'}e^{\rho \M^{'}}\mathbf{H}^{-1}(\eeta)e^{\rho \M}(e^{\lambda \W}\Y-\X\bet)+\V_{\lambda}^{-1}(\lambda^2-2\mu_{\lambda} \lambda)\right)\right).\nonumber
\end{align*}
Use the random-walk Metropolis-Hastings algorithm described in Step 3 of Algorithm~\ref{alg1} to sample this parameter. 
\item Sampling step for $\rho$:
\begin{align*}
&p(\rho|\Y, \bet, \lambda, \sigma^2,\eeta)\\
&\propto \exp\left(-\frac{1}{2}\left(\sigma^{-2}(e^{\lambda \W}\Y-\X\bet)^{'}e^{\rho \M^{'}}e^{\rho \M}(e^{\lambda \W}\Y-\X\bet)+\V_{\rho}^{-1}(\rho^2-2\mu_{\rho} \rho)\right)\right).\nonumber
\end{align*}
Use the  random-walk Metropolis-Hastings algorithm described in Step 3 of Algorithm~\ref{alg1} to generate random draws from $p(\rho|\Y, \bet, \lambda, \sigma^2,\eeta)$.
\item Sampling step for $\eeta$:
\begin{align*}
\eta_{i}|\Y,\lambda,\rho,\bet,\sigma^2,\nu\sim\text{IG}\left(\frac{\nu+1}{2},\,\frac{\nu}{2}+\frac{Y^2_{i}(\gam)}{2\sigma^2}\right)\quad\text{for}\quad i=1,2,\hdots,n,
\end{align*}
where $Y_{i}(\gam)$ is the $i$th element of $\Y(\gam)=e^{\rho \M}\left(e^{\lambda \W}\Y-\X\bet\right)$.
\item Sampling step for $\nu$: 
\begin{align*}
p(\nu|\eeta)\propto\frac{(\nu/2)^{n\nu/2}}{\Gamma^n(\nu/2)}\left(\prod_{i=1}^n\eta_{i}\right)^{-(\frac{\nu}{2}+1)}\exp\left(-\sum_{i=1}^n\frac{\nu}{2\eta_{i}}\right).
\end{align*}
Use the Griddy-Gibbs sampler to sample this parameter.
\end{enumerate}
\end{algorithm}
The conditional posterior distributions of $\bs{\beta}$, $\bs{\eta}$, and $\sigma^2$ take known forms as shown in Algorithm~\ref{alg2}. In the case of spatial parameters, we again resort to the random walk Metropolis-Hastings algorithm suggested by \citet{Lesage:2009}. The conditional posterior distribution of $\nu$ can be determined from $p(\nu|\mf{Y},\bs{\beta},\bs{\eta}, \lambda,\rho,\sigma^2)=p(\nu|\bs{\eta})\propto p(\eeta|\nu)p(\nu)$. However, this distribution does not take any known form.  Since $\nu$ has support over $(2,\bar{\nu})$, we suggest using the Griddy-Gibbs sampler to sample this parameter. Algorithm~\ref{alg3} describes the Griddy-Gibbs sampler.
\begin{algorithm}[The Griddy-Gibbs sampler for $\nu$]\label{alg3}\hfill
\begin{enumerate}
 \item Construct a grid of points $\nu_1,\hdots,\nu_m$ from the interval $(2,\bar{\nu})$.
 \item Compute $p_i=\sum_{j=1}^{i}p(\nu_j|\eeta)$ for $i=1,\hdots,m$, and generate $u$ from $\text{Uniform}(0,1)$.
 \item Determine the smallest $k$ such that $p_k\geq u$ and return $\nu=\nu_k$.
 \end{enumerate}
\end{algorithm}

\section{Estimation in the presence of endogenous and Durbin regressors} \label{sec:endogenous}
The preceding sections consider a regression model with spatial dependence specified by the MESS processes, where no endogenous regressors are included. In this section, we consider a MESS model with endogenous and Durbin regressors. The popular nonlinear two-stage least squares (N2SLS) estimation method in such a setting can have some irregular features.

Consider the following model:
\begin{equation} \label{mod:endog}
	e^{\lambda_0\W}\Y = \X^* \bm\beta_{10}+\W \bm{l}\bm\beta_{20} + \W \X_1 \bm{\beta}_{30} +\Z\bm{\beta}_{40}+\V,
\end{equation}   
where $\bm{l}$ is an $n\times 1$ vector of ones, $\X_1 $ excludes the intercept term from the exogenous variable matrix $\X$, $\Z$ is an $n\times k_z$ matrix of endogenous regressors,  and $\X^*=\X=[\bm{l}, \X_1]$ if $\W$ is not row-normalized to have row sums equal to one, and   $\X^{*}=\X_1$ otherwise. The $\bm\beta_{10}$, $\bm\beta_{20}$, $\bm\beta_{30}$ and $\bm\beta_{40}$ are conformable true parameters, and $\W$, $\Y$ and $\V$ have the same meanings as those in  \eqref{2.1}.   The Durbin regressors $\W \X_1$ are neighbors' characteristics and capture exogenous externalities. When  $\W$ is row-normalized, $\W \bm{l}=\bm{l}$ is the intercept term; when $\W$ is not row-normalized, $\W \bm{l}$ is also a Durbin regressor. In particular, if $\W$ is not row-normalized and has binary elements, $\W \bm{l}$ is a vector of out-degrees that measure the overall numbers of links for each spatial unit. Model \eqref{mod:endog} includes Durbin regressors explicitly since the MESS structure and the Durbin regressors lead to some irregular features of the N2SLS estimator. Model \eqref{2.1} has not considered Durbin regressors explicitly but can allow for that, where the theoretical analysis will not be affected although the related expressions for estimators need to be modified accordingly.  To focus on the N2SLS estimation, a MESS process for the disturbances is not considered in  \eqref{mod:endog}.\footnote{If there is a MESS process for the disturbances, then as in \cite{Jin.Wang2022}, the GMM estimation with both linear and quadratic moments can be considered, since instrumental variables alone are not enough to identify parameters for the disturbance process.}    

Let $\F$ be an $n\times k_f$ full rank IV matrix for the N2SLS estimation, where $k_f$ is not smaller than the total number of parameters in $\bm\theta=(\lambda, \bm\beta')'$ with $\bm\beta = (\bm\beta_1',\bm\beta_2,\bm\beta_3',\bm\beta_4')'$. For example, $\F$ can be the matrix formed by the independent columns of $[\bm{l},\X_1,\W\bm{l},\W\X_1,\W^2\bm{l},\W^2\X_1,\bar\Z]$, where $\bar\Z$ is the IV matrix for $\Z$.\footnote{If $\W$ is row-normalized, then $\W\bm{l}$ and $\W^2\bm{l}$ are redundant.} Assume that the elements of $\V$ are independent conditional on $\F$ but can have different conditional variances so that $\bm\Sigma=\E(\V \V'|\F)$ is a diagonal matrix of conditional variances. Denote
 $\D=[\X^*,\W\bm{l},\W\X_1,\Z]$ and $\bm\Pi=\F'\bm\Sigma\F$. The infeasible N2SLS estimation, as if $\bm\Sigma$ were known, has the objective function
 \begin{equation}
 	Q(\bm\theta) = (e^{\lambda\W}\Y-\D\bm\beta)' \F\bm\Pi^{-1}\F'(e^{\lambda\W}\Y-\D\bm\beta).
 \end{equation}
The N2SLS estimator $\hat{\bm\theta}$ derived by minimizing  $Q(\bm\theta)$ is consistent under regularity conditions. 

Let $\bm\delta =(\bm\beta_1',\bm\beta_2)'$ and $\bm\xi=(\bm\beta_3',\bm\beta_4')'$ when $\W$ is row-normalized, and let $\bm\delta=\bm\beta_1$ and $\bm\xi=(\bm\beta_2,\bm\beta_3',\bm\beta_4')'$ when $\W$ is not row-normalized. Then, $\bm\xi$ contains the coefficients for the Durbin and endogenous regressors. When $\bm\xi_0\ne 0$, all components of $\hat{\bm\theta}$ are $\sqrt n$-consistent and $\hat{\bm\theta}$ has the asymptotic distribution
\begin{equation}\label{dis:regular}
	\sqrt n(\hat{\bm\theta} -\bm\theta_0)\xrightarrow{d} N\Bigl(0,\lim_{n\to\infty}\Bigl\{ \frac{1}{n} \E[(-\W\D\bm\beta_0,\D)'\F] \bar{\bm\Pi}^{-1} \E[\F'(-\W\D\bm\beta_0,\D)] \Bigr\}^{-1}\Bigr).
\end{equation}  
However, some components of $\hat{\bm\theta}$ have a rate of convergence slower than $\sqrt n$ and are not asymptotically normal in the case that $\bm\xi_0= 0$, i.e., the Durbin and endogenous regressors are irrelevant, which is unknown when estimation is considered. 

When $\bm\xi_0=0$, we have 
\begin{equation*}
	\frac{1}{\sqrt{n}}\frac{\partial Q(\bm\theta_0)}{\partial\lambda} +\frac{1}{\sqrt{n}}\frac{\partial Q(\bm\theta_0)}{\partial\bm\beta'} (0_{1\times k^*},\bm\delta_{20},\bm\delta_{10}',0_{1\times k_z})'=o_p(1),
\end{equation*} 
where $k^*$ is the number of columns in $\X^*$, $\bm\delta_{20}$ is the last element of $\bm\delta_{0}$ and $\bm\delta_{10}$ contains the remaining elements. Thus, $\frac{1}{\sqrt{n}}\frac{\partial Q(\bm\theta_0)}{\partial\lambda}$ and $\frac{1}{\sqrt{n}}\frac{\partial Q(\bm\theta_0)}{\partial\bm\beta}$ are linearly dependent with probability approaching one (w.p.a.1.). As a result, $\frac{1}{n}\frac{\partial Q(\bm\theta_0)}{\partial\bm\theta} \frac{\partial Q(\bm\theta_0)}{\partial\bm\theta'}$ is singular w.p.a.1. In addition, we can show that $\frac{1}{n}\frac{\partial^2 Q(\bm\theta_0)}{\partial\bm\theta \partial\bm\theta'} $ is also singular for large $n$. Hence, the usual method of deriving the asymptotic distribution of an estimator based on the mean value theorem expansion of the first order condition will not work.  

The asymptotic distribution of $\hat{\bm\theta}$ in the case with $\bm\xi_0=0$ can be derived by first raparameterizing the model so that the derivative of the new N2SLS objective function with respect to a new parameter is exactly zero and then investigating a third order Taylor expansion of the first order condition at the true parameter
vector.  Let $\bar{\bm\Pi}=\E(\bm\Pi)$, $k_d$ be the number of columns in $\D$, $J$ be a random vector that follows the normal distribution $N(0,\bm\Delta)$, where 
\begin{equation*}
	\bm\Delta = \lim_{n\to\infty} 
	\begin{pmatrix}
		2 & 0 \\ 0 & \mf{I}_{k_d}
	\end{pmatrix} 
	\Bigl( \frac{1}{n}\E[(-\W^2\X\bm\delta_0,\D)'\F]\bar{\bm\Pi}^{-1} \E[\F'(-\W^2\X\bm\delta_0,\D)] \Bigr)^{-1} 
	\begin{pmatrix}
	2 & 0 \\ 0 & \mf{I}_{k_d}
	\end{pmatrix},
\end{equation*}  
and $L= J_2 - \lim_{n\to\infty}[\frac{2}{n}\E(\D'\F)\bar{\bm\Pi}^{-1} \E(\F'\D)]^{-1} \frac{1}{n}\E(\D'\F)\bar{\bm\Pi}^{-1} \E(\F'\W^2\X)\bm\delta_0 J_1$, where $J_1$ is the first element of $J$ and $J_2$ contains the remaining elements of $J$. Then, in the case with $\bm\xi_0=0$, the N2SLS estimator $\hat{\bm\theta}=(\hat\lambda,\hat{\bm\beta}_1',\hat{\bm\beta}_2,\hat{\bm\beta}_3',\hat{\bm\beta}_4')'$ has the asymptotic distribution
\begin{equation}\label{dis:irregular}
	\begin{pmatrix}
		n^{1/4}(\hat\lambda-\lambda_0) \\ n^{1/2}(\hat{\bm\beta}_1-\bm\beta_{10}) \\ n^{1/4}(\hat{\bm\beta}_2-\bm\beta_{20}) \\ n^{1/4}(\hat{\bm\beta}_3-\bm\beta_{30}) \\ n^{1/2}(\hat{\bm\beta}_4-\bm\beta_{40})
	\end{pmatrix} \xrightarrow{d} 
	\begin{pmatrix}
	(-1)^B J_1^{1/2} \\ J_{2x^*} \\ (-1)^{B}\bm\delta_{20}J_1^{1/2} \\ (-1)^B\bm\delta_{10}J_1^{1/2} \\ J_{2z}
	\end{pmatrix} I(J_1>0) + \begin{pmatrix} 0 \\ L_{x^*} \\ 0_{k\times 1} \\ L_z \end{pmatrix} I(J_1<0),
\end{equation}
where $I(\cdot)$ denotes the indicator function,  $J_{2x^*}$ and $L_{x^*}$ are vectors consisting of the first $k^*$ elements of $J_2$ and $L$ respectively, $J_{2z}$ and $L_z$ are vectors consisting of the last $k_z$ elements of $J_2$ and $L$ respectively, and $B$ is a Bernoulli random variable with success probability described in \cite{Jin.Lee2018}. Thus, only $\hat{\bm\beta_1}$ and $\hat{\bm\beta_4}$ are $\sqrt n$-consistent, and the remaining components of $\hat{\bm\theta}$ have a slow rate $n^{1/4}$ of convergence and  follow non-standard asymptotic distributions. 

The above N2SLS estimator is an infeasible estimator as $\bm\Pi$ is unknown. A feasible N2SLS estimator can be derived as follows. We may first derive an initial consistent but inefficient N2SLS estimator, e.g., the minimizer $\check{\bm{\theta}}$ of $(e^{\lambda \W}\Y -\D\bm{\beta})'\F(\F'\F)^{-1}\F'(e^{\lambda \W}\Y -\D\bm{\beta})$, and then consider the feasible N2SLS estimation with the objective function $\check Q(\bm\theta) =(e^{\lambda \W}\Y -\D\bm{\beta})'\F(\F'\check{\bm\Sigma}\F)^{-1}\F'(e^{\lambda \W}\Y -\D\bm{\beta})$, where $\check{\bm\Sigma}=\text{Diag}(\check v_1^2,\cdots,\check v_n^2)$ with  $\check v_i$ the $i$th element of $e^{\check\lambda \W}\Y -\D\check{\bm{\beta}}$. The feasible N2SLS estimator $\tilde{\bm\theta}$ has the same asymptotic distribution as the infeasible estimator $\hat{\bm\theta}$. 

As $\bm\xi_0=0$ and  $\bm\xi_0\ne 0$ lead to different asymptotic distributions of $\hat{\bm\theta}$, \cite{Jin.Lee2018} propose several tests for the hypothesis that $\bm\xi_0=0$. Depending on whether $\bm\xi_0=0$ is rejected or not, inference can be based on \eqref{dis:regular} or \eqref{dis:irregular}. Consider the case with $\bm{\xi}_0\ne 0$ as an example. By \eqref{dis:regular}, the variance of $\tilde{\bm\theta}$ can be estimated by  $[(-\W\D\tilde{\bm\beta},\D)'\F (\F'\tilde{\bm\Sigma}\F)^{-1} \F'(-\W\D\tilde{\bm\beta},\D)]^{-1}$, where  $\tilde{\bm\Sigma}=\text{Diag}(\tilde v_1^2,\cdots,\tilde v_n^2)$ with  $\tilde v_i$ the $i$th element of $e^{\tilde\lambda \W}\Y -\D\tilde{\bm{\beta}}$. 

An interesting alternative estimation method is the adaptive group LASSO (AGLASSO), which can implement model selection and estimation simultaneously. The resulting estimator has the oracle properties \citep{Fan.Li2001}, so that the true model can be selected w.p.a.1.\ and the estimator always has the $\sqrt n$-rate of convergence and asymptotic normal distribution. The AGLASSO objective function to be minimized is
\begin{equation}
\frac{1}{n}\check Q(\bm\theta) + \alpha_n \|\check{\bm\xi}\|^{-\mu}\|\bm\xi\|,
\end{equation} 
where $\alpha_n$ is a tuning parameter that is positive and converges to zero, $\check{\bm\xi}$ is an initial consistent estimator, and $\mu$ is some positive number such as $1$ or $2$. Under regularity conditions, the AGLASSO estimator $\dot{\bm\theta}$ is consistent. In the case that $\bm{\xi}_0=0$, the probability that $\dot{\bm\xi}=0$ goes to one as $n$ goes to infinity, that is, $\dot{\bm\theta}$ has the sparsity property, and for the remaining parameters $\bm\psi=(\lambda,\bm\delta')'$, the AGLASSO estimator has an asymptotic normal distribution as if $\bm\xi_0$ were known: 
\begin{equation*}
	\sqrt n(\dot{\bm\psi} - \bm\psi_0)\xrightarrow{d} N\Bigl(0,\lim_{n\to\infty}\frac{1}{n}\bigl\{ \E[(-\W\X\bm\delta_0,\X)'\F]\bar{\bm\Pi}^{-1} \E[\F'(-\W\X\bm\delta_0,\X)] \bigr\}^{-1} \Bigr),
\end{equation*}
in the case that $\bm\xi_0\ne 0$, under the condition that $\alpha_n =o(n^{-1/2})$ and other regularity conditions, $\dot{\bm\theta}$ has the same asymptotic normal distribution as that stated in \eqref{dis:regular}. Similar to the variance estimation of $\tilde{\bm\theta}$,  the variance of $\dot{\bm\psi}$ for the case with $\bm{\xi}_0=0$ can be estimated by  $[(-\W\X\dot{\bm\delta},\X)'\F (\F'\dot{\bm\Sigma}\F)^{-1} \F'(-\W\X\dot{\bm\delta},\X)]^{-1}$, where $\dot{\bm\Sigma}$ is defined similarly to $\tilde{\bm\Sigma}$.  
  
A practical question for the AGLASSO estimator is the selection of the tuning parameter $\alpha_n$. We can use an information criterion to choose $\alpha_n$. To make the dependence of $\dot{\bm\theta}$ on $\alpha_n$ explicit, denote the minimizer of $\frac{1}{n}\check Q(\bm\theta) + \alpha \|\tilde{\bm\xi}\|^{-\mu}\|\bm\xi\|$ by $\dot{\bm\theta}_\alpha$. Correspondingly, the AGLASSO estimator of $\bm\xi$ is $\dot{\bm\xi}_\alpha$. Consider the following information criterion:
\begin{equation*}
	h_n(\alpha) = \frac{1}{n}\check Q(\dot{\bm\theta}_\alpha) - I(\dot{\bm\xi}_\alpha=0)\Gamma_n,
\end{equation*}
where $\Gamma_n>0$ satisfies $\Gamma_n\to 0$ and $n^{1/2}\Gamma_n\to \infty$ as $n\to\infty$. For example, we may take $\Gamma_n=O(n^{-1/4})$. The tuning parameter chosen by minimizing $h_n(\alpha)$ can achieve model selection consistency.

The Monte Carlo results presented in \cite{Jin.Lee2018} show that the N2SLS and AGLASSO estimators have similar performance in the regular case with $\bm{\xi}_0\ne 0$, but the AGLASSO estimator performs significantly better in the irregular case with  $\bm{\xi}_0= 0$.  Thus, we suggest the use of the AGLASSO estimator.

\section{A fast computational approach} \label{sec:computation}
There are alternative methods in the literature that can be used to compute $e^{\alpha \mf{A}}$, where $\alpha$ is a scalar parameter and $\mf{A}$ is an $n\times n$ matrix. The computation methods include the Taylor series approximation, Pad{\'e} approximation, ordinary differential equation methods, polynomial methods,  matrix decomposition methods, splitting methods and Krylov space methods. \citet{Moler:1978, Moler:2003} assess the effectiveness of nineteen methods according to the following attributes: (i) generality, (ii) reliability, (iii) stability, (iv) accuracy, (v) efficiency, (vi) storage requirements, (vii) ease of use, and (viii) simplicity. They conclude that though ``none (of the methods in their paper) are completely satisfactory,'' a scaling and squaring method with either the rational Pad{\'e} or Taylor approximants can be the most effective one to compute the matrix exponential terms.  Popular software such as Python, R, MATLAB and Mathematica provides functions that can be used to compute the matrix exponential of a given matrix. For example, MATLAB (function  \texttt{expm}), Mathematica (function \texttt{MatrixExp}) and Python (function \texttt{scipy.linalg.expm}) utilize a scaling and squaring method combined with a Pad{\'e} approximation for the computation of matrix exponential terms. 

As pointed out by \citet{Moler:1978, Moler:2003}, all methods suggested in the literature are ``dubious'' in the sense that a sole method may not be entirely reliable for all applications. For example, in the context of MESS-type models, the  scaling and squaring method combined with the Pad{\'e} approximation as implemented in MATLAB through its \texttt{expm} function can be highly costly in terms of computation time \citep{yangfast}.  Our analysis on the estimation of the MESS(1,1) model indicates that we need to compute terms such as $\elw \erm \Y$ and $\erm \X$. \citet{Lesage:2007} suggest that we should provide approximations to $\elw \erm \Y$ and $\erm \X$ in terms of the matrix-vector product terms instead of providing approximations to $\elw$ and $\erm$. This matrix-vector product method can reduce the computation time significantly. 

In the following, we show how to apply the matrix-vector product method  to $\elw \erm \Y$ and $\erm \X$. Let $\text{Diag}(a_1,\hdots,a_n)$ be the $n\times n$ diagonal matrix with the $i$th diagonal element $a_i$. We first consider $\elw \erm \Y$. We can truncate the matrix exponential terms at the $(q+1)$th order and express $\elw \erm \Y$ as
\begin{align}\label{8.1}
\erm\elw\Y&\approx\sum_{i=0}^{q} \frac{\rho^i \M^i}{i!}\sum_{j=0}^{q} \frac{\lambda^j \W^j}{j!}\Y\nonumber\\
&= \sum_{i=1}^q  \sum_{j=0}^{i-1}\frac{\rho^i \lambda^j \M^i \W^j}{i!j!}\Y+\sum_{i=1}^q  \sum_{j=0}^{i-1}\frac{\rho^j  \lambda^i  \M^j \W^i}{i!j!}\Y+\sum_{i=0}^{q}\frac{\rho^i \lambda^i \M^i \W^i}{(i!)^2}\Y\nonumber\\
&=\Y_1\D_1\kap_1(\lambda, \rho)+\Y_2\D_2\kap_2(\lambda, \rho)+\Y_3\D_3\kap_3(\lambda, \rho),
\end{align} 
where 
\begin{align*}
&\Y_1=\left[\M\Y,    \M^2\Y,  \M^2\W\Y,   \M^3\Y,  \M^3\W\Y,   \M^3\W^2\Y,  \hdots,   \M^q\Y,   \M^q\W\Y,  \hdots,   \M^q\W^{q-1}\Y\right],  \\
&\Y_2=\left[\W\Y,  \W^2\Y, \M\W^2\Y, \W^3\Y,  \M\W^3\Y, \M^2\W^3\Y, \hdots,  \W^q\Y,  \M\W^q\Y, \hdots, \M^{q-1}\W^q\Y\right], \\
&\Y_3=\left[\Y, \M\W\Y, \M^2\W^2\Y, \hdots, \M^q\W^q\Y\right],\\
& \D_1=\D_2=\text{Diag}\left(\frac{1}{0!1!}, \frac{1}{0!2!},\frac{1}{1!2!},\hdots,\frac{1}{0!q!},\hdots,\frac{1}{(q-1)!q!}\right),\\
&\D_3=\text{Diag}\left(\frac{1}{(0!)^2},\frac{1}{(1!)^2},\frac{1}{(2!)^2},\hdots,\frac{1}{(q!)^2}\right),\\
&\kap_1(\lambda, \rho)=\left[\rho, \rho^2, \rho^2\lambda, \rho^3,  \rho^3\lambda,  \rho^3\lambda^2,  \hdots, \rho^q, \rho^q\lambda, \hdots, \rho^q\lambda^{q-1}\right]^{'}, \\
&\kap_2(\lambda, \rho)=\left[\lambda, \lambda^2, \lambda^2\rho,\lambda^3,  \lambda^3\rho, \lambda^3\rho^2,\hdots, \lambda^q,  \lambda^q\rho,\hdots, \lambda^q\rho^{q-1}\right]^{'}, \\
&\kap_3(\lambda, \rho)=\left[1, \rho \lambda,  \rho^2 \lambda^2, \rho^3 \lambda^3,  \hdots, \rho^q \lambda^q\right]^{'}.
\end{align*}
The result in \eqref{8.1} expresses $\erm\elw\Y$ in terms of $\mf{Y}_j$ and $\mf{D}_j$ for $j\in\{1,2,3\}$. These terms can be computed once, and then supplied as the inputs of the objective function in an optimization solver.  

Let $\X=[\X_1,\X_2,\hdots,\X_k]$, where $\X_i$ is the $i$th column of $\X$. Then, we can express $\erm \X$ as
\begin{align}\label{eq8.2}
\erm\X
&=\left[\erm\X_1, \erm\X_2,\hdots, \erm\X_k\right]\approx \mathbb{X}\D_4\bs{\kappa}_4(\rho),
\end{align}
where 
\begin{align*}
&\mathbb{X}=\left[\X_1, \M\X_1,\hdots, \M^q\X_1, \X_2, \M\X_2, \hdots, \M^q\X_2, \hdots, \X_k, \M\X_k, \hdots,\ \M^q\X_k\right],\\
&\D_4=\mf{I}_k\otimes\text{Diag}\left(\frac{1}{0!},\frac{1}{1!},\frac{1}{2!},\hdots,\frac{1}{q!} \right),\\
&\bs{\kappa}_4(\rho)=\mf{I}_k\otimes\left[1,\rho,\rho^2,\hdots,\rho^q\right]^{'}.
\end{align*}
The approximation in \eqref{eq8.2} indicates that the computation of $\erm\X$ also requires only the matrix-vector product operations.  We can compute $\mathbb{X}$ and $\D_4$ only one time and then pass these terms as the inputs of the objective function in an optimization solver.

In an extensive Monte Carlo simulation study, \citet{yangfast} compared the computation time required by the matrix-vector product method with the \texttt{expm} function of MATLAB. For the QMLE, they demonstrated that the matrix-vector product method reduced computation time by $98\%$ to $99\%$ compared to the \texttt{expm} function. In the case of GMME, the computation time decreased by $95\%$ to $97\%$. In the context of the Bayesian estimator, the computation time was reduced by at least $99\%$.

\section{Impact measures}\label{sec:impact}
For the MESS(1,1) model in \eqref{2.1}, the marginal effect of a change in $\X_k$ on $\E(\Y)$ is given by $\elwzn\bet_{0k}$,  where $\bet_{0k}$ is the $k$th element of the true coefficient vector $\bet_0$. \citet{Lesage:2009} define three scalar measures for the marginal effect to ease the interpretation and presentation of this marginal effect: 
\begin{enumerate}
	\item Average Direct Impact (ADI): $\frac{1}{n}\tr(e^{-\hat{\lambda}\W}\hat{\bet}_{k})$, 
	\item Average Indirect Impact (AII): $\frac{1}{n}\left(\hat{\bet}_{k}\bm{l}' e^{-\hat{\lambda}\W}\bm{l}-\tr(e^{-\hat{\lambda}\W}\hat{\bet}_{k})\right)$, 
	\item Average Total Impact (ATI): $\frac{1}{n}\hat{\bet}_{k}\bm{l}' e^{-\hat{\lambda}\W}\bm{l}$.
\end{enumerate}
The ADI, AII and ATI are, respectively,  the average of the main diagonal elements of $\elwzn\bet_{0k}$, the average of the off-diagonal elements of $\elwzn\bet_{0k}$, and the average of all the elements of $\elwzn\bet_{0k}$. 

There are alternative ways that can be used to determine the dispersions of these scalar impact measures \citep{Arbia:2020}. In the Bayesian estimation approach, a sequence of random draws for each impact measure  can be obtained by using the posterior draws. Then, the mean and the standard deviation calculated from each sequence of impact measures can be used for inference.

In the classical estimation approaches, the delta method can be used to determine the asymptotic distributions of impact measures.     Applying the mean value theorem to ADI yields 
\begin{align*}
	\frac{1}{\sqrt{n}}\left(\tr(e^{-\hat{\lambda}\W}\hat{\bet}_{k})-\tr(e^{-\lambda_0\W}\bet_{0k})\right)
	&=\frac{1}{\sqrt{n}}\left(-\tr(e^{-\hat{\lambda}\W}\W\hat{\bet}_{k})(\hat{\lambda}-\lambda_0)+\tr(e^{-\hat{\lambda}\W})(\hat{\bet}_{k}-\bet_{0k})\right)+o_p(1)\nonumber\\
	&=\A_{1}\times \sqrt{n}(\hat{\lambda}-\lambda_0,\hat{\bet}_{k}-\bet_{0k})^{'}+o_p(1)\xrightarrow{d} N\Bigl(0,\lim_{n\rightarrow \infty}\A_{1}\B\A^{'}_{1}\Bigr), 
\end{align*}
where $\A_{1}=\left(-\frac{1}{n}\tr(e^{-\lambda_0 \W}\W \bet_{0k}),\frac{1}{n}\tr(e^{-\lambda_0 \W})\right)$ and $\B$ is the asymptotic covariance of $\sqrt{n}(\hat{\lambda}-\lambda_0,\hat{\bet}_{k}-\bet_{0k})$. Thus, we can estimate  the asymptotic variance of the direct impact as  $\frac{1}{n}\hat{\A}_{1}\hat{\B}\hat{\A}^{'}_{1}$, where $\hat{\A}_{1}=\left(-\frac{1}{n}\tr(e^{-\hat{\lambda}\W}\W\hat{\bet}_{k}), \frac{1}{n}\tr(e^{-\hat{\lambda}\W})\right)$, and $\hat{\B}$ is the estimated asymptotic covariance of $\sqrt{n}(\hat{\lambda}-\lambda_0,\hat{\bet}_{k}-\bet_{0k})$. Applying the mean value theorem to  ATI$=\frac{1}{n}\hat{\bet}_{k}\bm{l}' e^{-\hat{\lambda}\W}\bm{l}$, we obtain
\begin{align*}
	\frac{1}{\sqrt{n}}\left(\hat{\bet}_{k}\bm{l}' e^{-\hat{\lambda}\W}\bm{l} -\bet_{0k}\bm{l}' e^{-\lambda_0\W}\bm{l}\right)
	&=\A_{2}\times \sqrt{n}(\hat{\lambda}-\lambda_0,\hat{\bet}_{k}-\bet_{0k})^{'}+o_p(1)\xrightarrow{d}N\Bigl(0,\lim_{n\rightarrow\infty}\A_{2}\B\A^{'}_{2}\Bigr),
\end{align*}
where $\A_{2}=\left(-\frac{1}{n}\bet_{k}\bm{l}' e^{-\lambda_0\W}\W\bm{l},\frac{1}{n}\bm{l}' e^{-\lambda_0\W}\bm{l} \right)$. Thus, Var$(\frac{1}{n}\hat{\bet}_{k}\bm{l}' e^{-\hat{\lambda}\W}\bm{l} )$ can be estimated by $\frac{1}{n}\hat{\A}_{2}\hat{\B}\hat{\A}^{'}_{2}$, where $\hat{\A}_{2}=\bigl(-\frac{1}{n}\hat{\bet}_{k}\bm{l}' e^{-\hat{\lambda}\W}\W\bm{l},\frac{1}{n}\bm{l}' e^{-\hat{\lambda}\W}\bm{l}\bigr)$. Finally, applying the mean value theorem to the estimator of AII, we obtain
\begin{align*}
	&\frac{1}{\sqrt{n}}\left(\left(\hat{\bet}_{k}\bm{l}' e^{-\hat{\lambda}\W}\bm{l} -\tr(e^{-\hat{\lambda}\W}\hat{\bet}_{k})\right)-\left(\bet_{0k}\bm{l}' e^{-\hat{\lambda}_0\W}\bm{l} -\tr(e^{-\hat{\lambda}_0\W}\bet_{0k})\right)\right)\\
	&\quad\quad=(\A_{2}-\A_{1})\times \sqrt{n}(\hat{\lambda}-\lambda_0,\hat{\bet}_{k}-\bet_{0k})^{'}+o_p(1)\xrightarrow{d}N\Bigl(0,\lim_{n\rightarrow \infty}(\A_{2}-\A_{1})\B(\A_{2}-\A_{1})^{'}\Bigr).
\end{align*}
Then, an estimate of $\var\left(\frac{1}{n}\left(\hat{\bet}_{k}\bm{l}' e^{-\hat{\lambda}\W}\bm{l} -\tr(e^{-\hat{\lambda}\W}\hat{\bet}_{k})\right)\right)$ is given by $\frac{1}{n}(\hat{\A}_{2}-\hat{\A}_{1})\hat{\B}(\hat{\A}_{2}-\hat{\A}_{1})^{'}$.

\section{Model selection} \label{sec:selection}
Various approaches have been proposed in the literature to implement model selection. In this section we review the available approaches.

\subsection{Testing approach\label{sec:select:testing}}
The classical tests, such as the Wald, Lagrange multiplier (Rao score) and likelihood ratio tests, for inference on spatial parameters can be formulated by using the results on the asymptotic distributions of the estimators \citep{Anselin:1988, Anselin:1996,Anselin:2001, Lesage:2009, Elhorst:2014, Dogan:2018}. In the literature, to test non-nested hypotheses, the Cox statistic and the J statistic are adapted for mainly spatial autoregressive models \citep{Anselin:1984, Anselin:1986, Kelejian:2008, Kelejian:2011, Burridge:2012, Jin:2013}. These non-nested testing approaches can also be used for the model selection problem between the spatial autoregressive models and the MESS models.

In the J-test approach, we augment the null model with the predictor from the alternative model and then check whether the predictor can add significantly to the explanatory power of the augmented model \citep{Davidson:1981}. \citet{HAN2013250} consider the J-test for the model selection problem between the SARAR(1,0) and MESS (1,0) models. When the SARAR(1,0) model is the null model, we can formulate the null and the alternative hypotheses as
\begin{align*}
&H_0: \Y=\alpha \W\Y+\X\bet+\V,\\
&H_1:\Ss^{ex}(\lambda)\Y=\X\bet^{ex}+\V,
\end{align*}
where $\Ss^{ex}(\lambda)=\elw$ and $\bet^{ex}$ is a conformable parameter vector for $\X$ in the alternative model. As in \citet{Kelejian:2011}, \citet{HAN2013250} consider two predictors based on the alternative model. These predictors are $\hat{\Y}_{1}=\Ss^{ex}(\hat{\lambda})^{-1}\X\hat{\bet}^{ex}$ and $\hat{\Y}_{2}=(\I_n-\Ss^{ex}(\hat{\lambda}))\Y+\X\hat{\bet}^{ex}$,  where $\hat{\lambda}$ and $\hat{\bet}^{ex}$ are the QMLEs of $\lambda$ and $\bs{\beta}^{ex}$. Note that the first predictor is based on the reduced form of the alternative model while the second predictor is derived from the identity $\mf{Y}=(\I_n-\Ss^{ex}(\lambda))\Y+\X\bet^{ex}+\mf{V}$. Then, the null model can be augmented with these predictors to obtain the following testing equation:
\begin{align}\label{eq9.1}
 \Y=\alpha \W\Y+\X\bet+\hat{\Y}_{r_1}\delta_{r_1}+\V,
\end{align}
for $r_1=1,2$. Denote $\V(\eeta_{r_1})=(\I_n-\alpha \W)\Y-\X\bet-\hat{\Y}_{r_1}\delta_{r_1}$, where $\eeta_{r_1}=(\alpha,\bet^{'},\delta_{r_1})^{'}$. To estimate the augmented model, \citet{HAN2013250} consider a GMME based on the following vector of linear and quadratic moment functions: 
\begin{align*}
g(\eeta_{r_1})=(\V^{'}(\eeta_{r_1})\mathbf{P}_1\V(\eeta_{r_1}),\hdots,\V^{'}(\eeta_{r_1})\mathbf{P}_q\V(\eeta_{r_1}),\F^{'}\V(\eeta_{r_1})),
\end{align*}
where $\mf{F}$ is a full-column rank matrix of IVs and $\mf{P}_m$'s are $n\times n$ matrices of constants with $\tr(\mf{P}_m)=0$ for $m=1,\hdots,q$. Following \citet{KP:2010}, the IV matrix $\mf{F}$ can consist of the linearly independent columns of $\left(\mf{X}, \mf{W}\mf{X},\hdots, \mf{W}^d\mf{X}\right)$, where $d$ is a positive constant.  Let $\bs{\Xi}=\E[g(\eeta_{0r_1})g^{'}(\eeta_{0r_1})]$, where $\eeta_{0r_1}=(\alpha_0,\bet^{'}_0,0)^{'}$ is the true parameter vector under $H_0$. Then, using Lemma~\ref{l2}, it can be shown that
\begin{align*}
&\bs{\Xi}=
\begin{pmatrix}
(\mu_4-3\sigma_0^4)\oomega^{'}\oomega & \mu_3\oomega^{'}\F \\
\mu_3\F^{'}\oomega & 0
\end{pmatrix} +
\begin{pmatrix}
\tr(\mathbf{P}_1\mathbf{P}_1^{s}) &\hdots & \tr(\mathbf{P}_1\mathbf{P}_q^{s}) &0 \\
\vdots & \vdots &\vdots & \vdots \\
\tr(\mathbf{P}_q\mathbf{P}_1^{s}) &\hdots & \tr(\mathbf{P}_q\mathbf{P}_q^{s}) &0 \\
0 &\hdots & 0 &\frac{1}{\sigma_0^2}\F^{'}\F \\
\end{pmatrix},
\end{align*}
 where $\oomega=[\vec_D(\mathbf{P}_1),\hdots,\vec_D(\mathbf{P}_q)]$. Let $\frac{1}{n}\hat{\XXi}$ be a consistent estimator of $\frac{1}{n}\XXi$. Then, the feasible optimal GMME of $\eeta_{0r_1}$ is defined by $\hat{\eeta}_{r_1}=\argmin_{\eeta_{r_1}} g^{'}(\eeta_{r_1})\hat{\XXi}^{-1}g(\eeta_{r_1})$. Let  $\lambda_{sar}^{*}$ and $\bet^{ex*}_{sar}$ be the pseudo true values of $\lambda_0$ and $\bs{\beta}^{ex}_0$ under the null model, respectively. Define $\Ss^{ex*}_{sar}=e^{\lambda_{sar}^{*}\W}$, $\Ss=\I_n-\alpha_0\W$ and $\mathbf{G}=\W\Ss^{-1}$. Then, under some assumptions, \citet{HAN2013250}  show that 
\begin{align}\label{eq9.2}
\sqrt{n}(\hat{\eeta}_{r_1}-\eeta_{0r_1})\xrightarrow{d}N\left(\mathbf{0},\lim_{n\rightarrow\infty}\left(\D^{'}_{r_1}\XXi^{-1}\D_{r_1}\right)^{-1}\right),
\end{align}
for $r_1=1,2$, where
\begin{align*}
&\D_1=\begin{pmatrix}
\sigma_0^2\tr(\mathbf{P}_1^s\mathbf{G}) &0 & 0\\
\vdots & \vdots &\vdots \\
\sigma_0^2\tr(\mathbf{P}_q^s\mathbf{G}) &0 & 0\\
\F^{'}\mathbf{G}\X\bet_0 &\F^{'}\X& \F^{'}\Ss_{sar}^{ex*-1}\X\bet^{ex*}_{sar}
\end{pmatrix},\\
&\D_2=\begin{pmatrix}
\sigma_0^2\tr(\mathbf{P}_1^s\mathbf{G}) &0 & \sigma_0^2\tr(\mathbf{P}_1^{s}(\I_n-\Ss_{sar}^{ex*})\Ss^{-1})\\
\vdots & \vdots &\vdots \\
\sigma_0^2\tr(\mathbf{P}_q^s\mathbf{G}) &0 &  \sigma_0^2\tr(\mathbf{P}_q^{s}(\I_n-\Ss_{sar}^{ex*})\Ss^{-1})\\
\F^{'}\mathbf{G}\X\bet_0 &\F^{'}\X& \F^{'}((\I_n-\Ss_{sar}^{ex*})\Ss^{-1}\X\bet_0+\X\bet_{sar}^{ex*})
\end{pmatrix}.
\end{align*}
We summarize the estimation of $\eeta_{r_1}$  in Algorithm~\ref{alg4}. 
\begin{algorithm}[Estimation of the augmented model in \eqref{eq9.1}]\label{alg4}
\leavevmode  
\begin{enumerate}
\item Estimate the alternative model by the QMLE suggested in Section 3.1 and then compute the predictors $\hat{\mf{Y}}_{r_1}$ for $r_1=1,2$. 
\item Estimate the null model by one of the methods given in  Section 3. Use the estimated values to get a plug-in estimate of $\bs{\Xi}$.
\item Compute $\hat{\eeta}_{r_1}=\argmin_{\eeta_{r_1}}  g^{'}(\eeta_{r_1})\hat{\XXi}^{-1}g(\eeta_{r_1})$.
\end{enumerate}
\end{algorithm}
The result in \eqref{eq9.2} can be used to  construct the J statistic in three different ways: (i) the Wald (W) statistic, (ii) the distance difference
(DD) statistic, and  (iii) the gradient (G) statistic \citep{Newey:1987}. Let $\R=(\boldsymbol{0}_{1\times(k+1)},1)$ and $\hat{\D}_{r_1}$ be the plug-in estimator of $\D_{r_1}$ based on $\hat{\eeta}_{r_1}$ for $r_1=1,2$. Then, the first two statistics are given as
\begin{align}
&W_{r_1}=(\R\hat{\eeta}_{r_1})^{'}\left(\R\left(\hat{\D}_{r_1}^{'}\hat{\XXi}^{-1}\hat{\D}_{r_1}\right)^{-1}\R^{'}\right)^{-1}(\R\hat{\eeta}_{r_1}),\\
&\text{DD}_{r_1}=\min_{\{\eeta_{r_1}|\delta_{r_1}=0\}} g^{'}(\eeta_{r_1})\hat{\XXi}^{-1}g(\eeta_{r_1})-\min_{\eeta_{r_1}} g^{'}(\eeta_{r_1})\hat{\XXi}^{-1}g(\eeta_{r_1}).
\end{align}
Let $\tilde{\eeta}_{r_1}=\argmin_{\{\eeta_{r_1}|\delta_{r_1}=0\}} g^{'}(\eeta_{r_1})\hat{\XXi}^{-1}g(\eeta_{r_1})$ be the restricted optimal GMME.  Then, the gradient test statistic is defined by
\begin{align}
&\text{G}_{r_1}=g^{'}(\tilde{\eeta}_{r_1})\hat{\XXi}^{-1}\tilde{\D}_{r_1}(\tilde{\D}_{r_1}^{'}\hat{\XXi}^{-1}\tilde{\D}_{r_1})^{-1}\tilde{\D}_{r_1}\hat{\XXi}^{-1}g(\tilde{\eeta}_{r_1}),
\end{align}
where $\tilde{\D}_{r_1}$ is the plug-in estimator of $\D_{r_1}$ based on $\tilde{\eeta}_{r_1}$ for $r_1=1,2$. Under $H_0$, these statistics have a chi-squared distribution with one degree of freedom. Thus, we will reject  $H_0$ at the $5\%$ significance level if the  test statistics are larger than $3.84$.

When using the MESS model as the null model, the null and the alternative hypotheses take the following form:
\begin{align*}
&H_0:\Ss^{ex}(\lambda)\Y=\X\bet^{ex}+\V,\\
&H_1: \Y=\alpha \W\Y+\X\bet+\V.
\end{align*}
Let $\hat{\alpha}$ and $\hat{\bet}$ be the QML estimates of $\alpha_0$ and $\bs{\beta}_0$ from the alternative model. Again, we consider two predictors $\hat{\Y}_{1}=(\I_n-\hat{\alpha}\W)^{-1}\X\hat{\bet}$ and $\hat{\Y}_{2}=\hat{\alpha}\W\Y+\X\hat{\bet}$. Thus, the augmented model is given by
\begin{align}\label{eq9.6}
\Ss^{ex}(\lambda)\Y=\X\bet^{ex}+\hat{\Y}_{r_2}\delta_{r_2}+\V,
\end{align}
for $r_2=1,2$. Let $\bs{\psi}_{r_2}=(\lambda,\bet^{ex'},\delta_{r_2})^{'}$, $\bs{\psi}_{0r_2}=(\lambda_0,\bet^{ex'},0)^{'}$ for $r_2=1,2$, and $\alpha_{ex}^{*}$ and $\bet_{ex}^{*}$ be the pseudo true values of $\alpha$ and $\bs{\beta}$ under the null model.  \citet{HAN2013250} consider the non-linear 2SLS estimator (N2SLSE) for the estimation of the augmented model. Let $g(\bs{\psi}_{r_2})=\mf{F}^{'}\V(\bs{\psi}_{r_2})$ be the vector of linear moment functions, where $\V(\bs{\psi}_{r_2})=\Ss^{ex}(\lambda)\Y-\X\bet^{ex}+\hat{\Y}_{r_2}\delta_{r_2}$ for $r_2=1,2$. Then, the N2SLSE is defined by
\begin{align}
\hat{\bs{\psi}}_{r_2}=\argmin_{\bs{\psi}_{r_2}} \V^{'}(\bs{\psi}_{r_2})\F(\F^{'}\F)^{-1}\F^{'}\V(\bs{\psi}_{r_2}).
\end{align}
Under some assumptions, it can be shown that 
\begin{align}\label{eq9.8}
\sqrt{n}(\hat{\bs{\psi}}_{r_2}-\bs{\psi}_{0r_2})\xrightarrow{d}N\left(\mathbf{0},\sigma_0^{ex2}\left(\plim_{n\to\infty}\frac{1}{n}\D^{'}_{r_2}(\F^{'}\F)^{-1}\D_{r_2}\right)^{-1}\right),
\end{align}
where $\D_{1}=\F^{'}\left(\W\X\bet^{ex}_0,\X,\Ss_{ex}^{*-1}\X\bet^{*}_{ex}\right)$ and $\D_{2}=\F^{'}\left(\W\X\bet^{ex}_0,\X,\alpha_{ex}^{*}\W \Ss^{ex-1}\X\bet_{0}^{ex}+\X\bet^{*}_{ex}\right)$ with $\Ss_{ex}^{*}=\I_n-\alpha_{ex}^{*}\W$ and $\Ss^{ex}=e^{\lambda_0\W}$. Algorithm~\ref{alg5} summarizes the estimation of the augmented model in \eqref{eq9.6}.
\begin{algorithm}[Estimation of the augmented model in \eqref{eq9.6}]\label{alg5}
\leavevmode  
\begin{enumerate}
\item Estimate the alternative model by the QMLE and then compute the predictors $\hat{\mf{Y}}_{r_2}$ for $r_2=1,2$. 
\item Use $\mf{F}=\left(\mf{X}, \mf{W}\mf{X},\hdots, \mf{W}^d\mf{X}\right)$ to compute $\hat{\bs{\psi}}_{r_2}=\argmin_{\bs{\psi}_{r_2}} \V^{'}(\bs{\psi}_{r_2})\F(\F^{'}\F)^{-1}\F^{'}\V(\bs{\psi}_{r_2})$.
\end{enumerate}
\end{algorithm}
Similar to the previous case in which the SARAR(1,0) model  was the null model, we can use the result in \eqref{eq9.8} to derive the three test statistics.   When the disturbance terms are heteroskedastic, robust methods are necessary to derive consistent estimators.   However, the process to derive the three test statistics are similar to the homoskedastic case. 

Instead of using the critical value $3.84$ from the asymptotic distribution, we can use the bootstrap method to generate the empirical distribution of the test statistics. In this approach, we can report the bootstrapped p-value, which  is the
percentage of test statistics based on the bootstrapped samples that are greater
than the corresponding test statistic obtained from the actual sample, to decide between $H_0$ and $H_1$ \citep{Mac:2009}.  The bootstrap procedure for testing $H_0: \Y=\alpha \W\Y+\X\bet+\V$ against $H_1:\Ss^{ex}(\lambda)\Y=\X\bet^{ex}+\V$ is described in Algorithm~\ref{alg6}.
\begin{algorithm}[Bootstrap testing procedure]\label{alg6}
\leavevmode  
\begin{enumerate}
\item Compute $W_{r_1}$, $DD_{r_1}$ and $G_{r_1}$ for $r_1=1,2$.
\item Estimate the null model by one of the methods provided in Section 3. Let $\hat{\mf{V}}$ be the vector of residuals. 
\item Generate a random sample of size $n$ from  $\hat{\mf{V}}$ using sampling with replacement. Denote this re-sampled residual vector by $\hat{\mf{V}}^b$.
\item Use parameter estimates from Step 2 to compute
$
\mf{Y}^b=(\mf{I}_n-\hat{\lambda}\mf{W})^{-1}(\mf{X}\hat{\bs{\beta}}+\hat{\mf{V}}^b).
$  Compute the bootstrapped versions of test statistics $W^b_{r_1}$, $DD^b_{r_1}$ and $G^b_{r_1}$ for $r_1=1,2$ by using $\mf{Y}^b$.
\item Repeat  Steps 3--4 for $99$ times. Then, a test statistic rejects $H_0$ if the proportion of its bootstrapped versions that exceed the corresponding one computed in Step 1 is less than $5\%$.
\end{enumerate}
\end{algorithm}
In the heteroskedastic case, besides using the heteroskedasticity robust estimation methods, we also need to use a wild bootstrap approach to generate the bootstrapped versions of the test statistics. The details of this approach are summarized in \citet{HAN2013250}. The extensive simulation results reported in \citet{HAN2013250} indicate that all versions of the J-statistic can perform satisfactorily when the sample size is large.

\citet{LIU2019434} propose a non-degenerate Vuong-type model selection test for the model selection between the SARAR(1,1) and MESS(1,1) models.  The log-likelihood function of the SARAR(1,1) model in  \eqref{eq2.2} can be expressed as
 \begin{align*} 
\ln L_1(\thet_1)&=-\frac{n}{2}\ln(2\pi)-\frac{n}{2}\ln\sigma^2+\ln\left|\I_n-\alpha\W\right|+\ln\left|\I_n-\tau\W\right|-\frac{1}{2\sigma^2}\sum_{i=1}^nz_{i}(\thet_1)^2,
 \end{align*}
where $\thet_1=(\bet^{'},\alpha,\tau,\sigma^2)^{'}$ and $z_{i}(\thet_1)=y_{i}-\alpha \W_{i\cdot}\Y-\tau\M_{i\cdot}\Y+\alpha\tau\sum_{k=1}^{n}m_{ik}\W_{k\cdot}\Y-\X_{i}\bet+\tau\M_{i\cdot}\X\bet$, with $\X_{i}$ being the $i$th row of $\X$, $m_{ik}$ being the $(i,k)$th element of $\M$, and  $\W_{i\cdot}$ and  $\M_{i\cdot}$ being the $i$th row of $\W$ and $\M$, respectively. Then, we can write the log-likelihood function as $\ln L_1(\thet_1)=\sum_{i=1}^nl_{1i}(\thet_1)$, where $l_{1i}(\thet_1)=-\frac{1}{2}\ln(2\pi)-\frac{1}{2}\ln\sigma^2+\frac{1}{n}\ln\left|\I_n-\alpha\W\right|+\frac{1}{n}\ln\left|\I_n-\tau\W\right|-\frac{1}{2\sigma^2}z_{i}(\thet_1)^2$.  Similarly, we can express the log-likelihood function of the MESS(1,1)  as 
 \begin{align*} 
\ln L_2(\thet_2)=\sum_{i=1}^nl_{2i}(\thet_2),
\end{align*}
where $\thet_2=(\bet^{'},\lambda,\rho,\sigma^2)^{'}$, $l_{2i}(\thet_2)=-\frac{n}{2}\ln(2\pi)-\frac{n}{2}\ln\sigma^2-\frac{1}{2\sigma^2}h_{i}(\thet_2)^2$ and  $h_{i}(\thet_2)$ is the $i$th element of $\erm(\elw\Y-\X\bet)$. \citet{LIU2019434} first show that the QMLEs of both models, where one of the models or both models are possibly misspecified, are consistent estimators of their pseudo-true values and are asymptotically normal. 

\citet{LIU2019434}  assume that the true data generating process is unknown, and one of the two models or both models might be misspecified. Let $\thet^{*}_1$ and $\thet^{*}_2$ be the pseudo-true parameter vectors in the SARAR(1,1) and MESS(1,1) models, respectively. Then, the null hypothesis and alternative hypotheses are given by 
\begin{align*}
&H_0:\lim_{n\rightarrow\infty}\frac{1}{\sqrt{n}}\E\left[\ln\frac{L_1(\thet_1^{*})}{L_2(\thet_2^{*})}\right]=0 \quad \text{(Models 1 and 2 are asymptotically equivalent)},\\
&H_1:\lim_{n\rightarrow\infty}\frac{1}{\sqrt{n}}\E\left[\ln\frac{L_1(\thet_1^{*})}{L_2(\thet_2^{*})}\right]=+\infty \quad \text{(Model 1  is asymptotically better than model 2)},\\
&H_2:\lim_{n\rightarrow\infty}\frac{1}{\sqrt{n}}\E\left[\ln\frac{L_1(\thet_1^{*})}{L_2(\thet_2^{*})}\right]=-\infty \quad \text{(Model 1  is asymptotically worse than model 2)}.
\end{align*}
Let $\text{LR}(\hat{\thet}_1,\hat{\thet}_2)=\ln L_{1}(\hat{\thet})-\ln L_{2}(\hat{\thet})$, where $\hat{\thet}_1$ and  $\hat{\thet}_2$ are the QMLEs of the two models. Define $\omega^2=\var(\frac{1}{\sqrt{n}}\text{LR}(\hat{\thet}_1,\hat{\thet}_2))$ and $g_i(\thet_1,\thet_2)=l_{1i}(\thet_1)-l_{2i}(\thet_2)$. Then, following \citet{Shi:2017}, \citet{LIU2019434} consider the following test statistic:
 \begin{align*} 
\hat{T}=\frac{\frac{1}{\sqrt{n}}\sum_{i=1}^ng_i(\hat{\thet}_1,\hat{\thet}_2)+\hat{\sigma}U}{\sqrt{\hat{\omega}^2+\hat{\sigma}^2}},
 \end{align*}
where $\hat{\sigma}$ is a data-dependent scalar, $\hat{\omega}^2$ is an estimator of $\omega^2$ and $U\sim N(0,1)$. Under some regularity assumptions, it is shown that the test statistic converges to the standard normal distribution under the null hypothesis, i.e., $\hat{T}\xrightarrow{d}N(0,1)$ under $H_0$. Under the alternative hypotheses, they show that $\hat{T}\rightarrow+\infty$ under $H_1$, and $\hat{T}\rightarrow-\infty$ under $H_2$. In a Monte Carlo study,  \citet{LIU2019434} show that the test statistic has good size and power properties.

\subsection{Information criteria approach\label{sec:ic}}
 The predictive accuracy of a model is usually measured through an information criterion, which is typically defined based on the deviance term $-2\ln p(\mf{Y}|\bs{\theta})$ \citep{Gelman:2003}.  The widely used Akaike information criterion (AIC) takes the following form:
\begin{align}
\text{AIC}=-2\ln p(\mf{Y}|\hat{\bs{\theta}})+2p,
\end{align}
where $\hat{\bs{\theta}}$ is an estimate of $\bs{\theta}$ and $p$ is the dimension of $\bs{\theta}$. In a Bayesian context, \citet{Spiegelhalter:2002} suggest another criterion called the deviance information criterion (DIC):
\begin{align*}
&\text{DIC}=\bar{D}(\thet)+p_D,
\end{align*}
where $\bar{D}(\thet)$ is called the posterior mean deviance and $p_D$ is a measure of the effective number of parameters in the model. The posterior mean deviance is defined by $\bar{D}(\thet)=-2\E\left(\ln p(\Y|\thet)|\Y \right)$, where the expectation is taken with respect to the posterior distribution of $\bs{\theta}$. This term serves as a Bayesian measure of model fit. The effective number of parameters is defined by $p_D =\bar{D}(\thet)-D(\bar{\thet})=-2\E\left(\ln p(\Y|\thet)|\Y \right)+2\ln p(\Y|\bar{\thet})$, where $\bar{\bs{\theta}}$ is the posterior mean of $\bs{\theta}$. Thus, the DIC can be written as 
\begin{align*}
&\text{DIC}=-4\E\left(\ln p(\Y|\thet)|\Y \right)+2\ln p(\Y|\bar{\thet}).
\end{align*}
Let $\{\bs{\theta}^r\}_{r=1}^R$ be a sequence of posterior draws. Then, the first term $\E\left(\ln p(Y|\theta)|Y \right)$ in the $\text{DIC}$ can be computed by $\E\left(\ln p(\mf{Y}|\bs{\theta})|\mf{Y} \right)\approx\frac{1}{R}\sum_{r=1}^R\ln p(\mf{Y}|\bs{\theta}^r)$. The second term $\ln p(\mf{Y}|\bar{\bs{\theta}})$ in the DIC is computed by evaluating the log-likelihood function at the posterior mean $\bar{\bs{\theta}}$. Using a decision-theoretic perspective, it can be shown that both AIC and DIC choose the model whose predictive distribution is close to the true data generating process \citep{Li:2020}. 

In our heteroskedastic model, there are alternative likelihood functions: (i) the conditional likelihood function denoted by $p(\mf{Y}|\bs{\theta},\bs{\eta})$, (ii) the complete-data likelihood function denoted by  $p(\mf{Y},\bs{\eta}|\bs{\theta})$, and (iii) the integrated (or observed) likelihood function denoted by $p(\mf{Y}|\bs{\theta})=\int p(\mf{Y},\bs{\eta}|\bs{\theta})\text{d}\bs{\eta}$. The log-conditional likelihood function is readily available and given by
\begin{align}
\ln p(\mf{Y}|\bs{\theta},\bs{\eta})&=-\frac{n}{2}\ln(2\pi)-\frac{n}{2}\ln\sigma^2-\frac{1}{2}\sum_{i=1}^n\ln\eta_{i}\\
&\quad-\frac{1}{2\sigma^2}(e^{\lambda \mf{W}}\mf{Y}-\mf{X}\bs{\beta})^{'}e^{\rho \mf{M}^{'}}\mf{H}^{-1}(\bs{\eta})e^{\rho\mf{M}}(e^{\lambda\mf{W}}\mf{Y}-\mf{X}\bs{\beta}),\nonumber
\end{align}
where $\mf{H}(\bs{\eta})=\text{Diag}\left(\eta_{1},\hdots,\eta_{n}\right)$ is the $n\times n$ diagonal matrix with the $i$th diagonal element $\eta_{i}$.  As shown in Section 6.2, this function facilitates the Bayesian estimation of the heteroskedastic model. Both the conditional likelihood function and the complete-data likelihood function depend on the high-dimensional latent scale mixture variables. Since these high-dimensional variables can not be estimated precisely, the AIC and DIC formulated with the conditional and complete-data likelihood functions may not perform satisfactorily in model selection exercises \citep{Chan:2016}. Indeed, the latent variable models violate the conditions of the decision-theoretic perspective, indicating that the AIC and DIC cannot be used as a measure of predictive accuracy \citep{Li:2020}.  Hopefully, the log-integrated likelihood function can be obtained analytically by integrating out the scale mixture variables $\bs{\eta}$ from the complete-data likelihood function, i.e., $p(\mf{Y}|\bs{\theta})=\int p(\mf{Y},\bs{\eta}|\bs{\theta})\text{d}\bs{\eta}=\int p(\mf{Y}|\bs{\eta},\bs{\theta})p(\bs{\eta}|\bs{\theta})\text{d}\bs{\eta}$.  This function can be derived as \citep{Dogan:2023}
\begin{align*}
\ln p(\mf{Y}|\bs{\theta})&=-\frac{n}{2}\ln(2\pi)-\frac{n}{2}\ln\sigma^2+\frac{n\nu}{2}\ln(\nu/2)\\
&\quad+n\ln\Gamma\left(\frac{\nu+1}{2}\right)-n\ln\Gamma(\nu/2)-\frac{\nu+1}{2}\sum_{i=1}^n\ln\left(\frac{\nu}{2}+\frac{y^2_{i}(\bs{\delta})}{2\sigma^2}\right),
\end{align*} 
where $y_{i}(\bs{\delta})$ is the $i$th element of $\mf{Y}(\bs{\delta})=e^{\rho\mf{M}}\left(e^{\lambda \mf{W}}\mf{Y}-\mf{X}\bs{\beta}\right)$ with $\bs{\delta}=(\lambda,\rho,\bs{\beta}^{'})^{'}$. This function can be used to formulate AIC and DIC in the heteroskedastic model.

Another popular criterion is the Bayesian information criterion, which can be derived from a large sample approximation to the log-marginal likelihood of a candidate model. Let $\{M_k\}_{k=1}^K$ be a sequence of candidate models. Then, the marginal likelihood of the model $M_k$ can be expressed as $p(\mf{Y}_k|M_k)=\int_{\bs{\Theta}_k}p(\mf{Y}|\bs{\theta}_k,M_k)p(\bs{\theta}_k|M_k)\text{d}\bs{\theta}_k$, where $\bs{\theta}_k$ is the $p_k\times1$ vector of parameters in $M_k$. Then, the Laplace approximation to $\ln p(\mf{Y}_k|M_k)$ yields the following BIC measure \citep{Schwarz:1978}:
\begin{align}
\text{BIC}_k=-2\ln p(\mf{Y}|\hat{\bs{\theta}})+2p\ln(n).
\end{align}
The Laplace approximation to $\ln p(\mf{Y}_k|M_k)$ can also be used to show that \citep{Kass:1995}
\begin{align}\label{9.4}
\lim_{n\to\infty}P\left(\left|\frac{\text{BIC}_k-\text{BIC}_l}{\ln\text{BF}_{kl}}-1\right|>\epsilon\right)= 0,
\end{align}
where $\epsilon>0$ is an arbitrary number and $\text{BF}_{kl}=p(\mf{Y}|M_k)/p(\mf{Y}|M_l)$ is the Bayes factor of $M_k$ against $M_l$. The result in \eqref{9.4} indicates that the BIC is also a consistent model selection criterion like the Bayes factor. Moreover, both BIC and the Bayes factor can be interpreted as the measures of predictive accuracy because the marginal likelihood function can be interpreted as the predictive density evaluated at $\mf{Y}$ \citep{Chan:2016}.

In the Bayesian setting described in Section 6.2, \citet{Dogan:2023} investigate the performance of AIC, DIC and BIC for both nested and non-nested model selection problems through simulations. They consider four popular MESS specifications and aim to see whether the information criteria can select correct model specification and the correct spatial weights matrix from a pool of candidates. Their extensive simulation results show that these criteria perform satisfactorily and can be useful for selecting the correct model in the specification search exercises.

\citet{yangms} suggest using a Mallows $C_p$ type selection criterion for selecting  a spatial weights matrix from a pool of candidates. Let $\mathcal{W}=\left\{\left(\mf{W}_s, \mf{M}_s\right): s \in \{1, 2, \hdots, S\}\right\}$ be the pool of spatial weights matrices. The quasi log-likelihood function based on the tuple $(W_{s}, M_{s})$ can be expressed as
\begin{align}\label{9.5}
\ell_s&=-\frac{n}{2}\ln2\pi\sigma^2-\frac{1}{2\sigma^2}\left\Vert e^{\tau \mf{M}_s}(e^{\alpha \mf{W}_s}\mf{Y}-\mf{X}\bs{\beta})\right\Vert^2,
\end{align}
where $\Vert\cdot\Vert$ denotes the Euclidean norm. For  a given $(\hat{\alpha}_s,\hat{\tau}_s)$ value, the first order conditions of \eqref{9.5} with respect to $\bs{\beta}$ and $\sigma^2$ yield
 \begin{align}
&\hat{\bs{\beta}}_s=\left(\mf{X}^{'}e^{\hat{\tau}_s\mf{M}^{'}_s}e^{\hat{\tau}_s\mf{M}_s}\mf{X}\right)^{-1}\mf{X}^{'}e^{\hat{\tau}_s\mf{M}_s^{'}}e^{\hat{\tau}_s\mf{M}_s}e^{\hat{\alpha}_s\mf{W}_s}\mf{Y}, \label{9.6}\\
&\hat{\sigma}^2_s=\frac{1}{n}\left\Vert e^{\hat{\tau}_s\mf{M}_s}(e^{\hat{\alpha}_s\mf{W}_s}\mf{Y}-\mf{X}\hat{\bs{\beta}}_s)\right\Vert^2.
\end{align}
Let $\bs{\mu}=\E(\mf{Y})=e^{-\alpha_0 \mf{W}}\mf{X}\bs{\beta}_0$. Substituting \eqref{9.6} into $\hat{\bs{\mu}}_s=e^{-\hat{\alpha}_s\mf{W}_s}\mf{X}\hat{\bs{\beta}}_s$, we obtain
\begin{align} \label{2.8}
\hat{\bs{\mu}}_s&=e^{-\hat{\alpha}_s\mf{W}_s}\mf{X}\left(\mf{X}^{'}e^{\hat{\tau}_s \mf{M}_s^{'}}e^{\hat{\tau}_s\mf{M}_s}\mf{X}\right)^{-1}\mf{X}^{'}e^{\hat{\tau}_s\mf{M}_s^{'}}e^{\hat{\tau}_s\mf{M}_s}e^{\hat{\alpha}_s\mf{W}_s}\mf{Y}=\widetilde{\mf{P}}_s\mf{Y},
\end{align}
where $\widetilde{\mf{P}}_s=e^{-\hat{\alpha}_s\mf{W}_s}e^{-\hat{\tau}_s\mf{M}_s}\widehat{\mf{P}}_se^{\hat{\tau}_s \mf{M}_s}e^{\hat{\alpha}_s\mf{W}_s}$ with $\widehat{\mf{P}}_s=e^{\hat{\tau}_s \mf{M}_s}\mf{X}\left(\mf{X}^{'}e^{\hat{\tau}_s \mf{M}_s^{'}}e^{\hat{\tau}_s \mf{M}_s}\mf{X}\right)^{-1}\mf{X}^{'}e^{\hat{\tau}_s\mf{M}_s^{'}}$. Then, \citet{yangms} consider the following selection criterion function:
\begin{align*}
C_s=\left\Vert\widetilde{\p}_s\Y-\Y\right\Vert^2+2\left(\tr(\widetilde{\p}_s\Ome)+\frac{\partial \hat{\lambda}_s}{\partial \Y^{'}}\Ome\frac{\partial \widetilde{\p}_s}{\partial \hat{\lambda}}\Y+\frac{\partial \hat{\rho}_s}{\partial \Y^{'}}\Ome\frac{\partial \widetilde{\p}_s}{\partial \hat{\rho}}\Y\right),
\end{align*}
where $\Ome=\sigma_0^2 e^{-\lambda_0 \W}e^{-\rho_0 \M}e^{-\rho_0 \M^{'}}e^{-\lambda_0 \W^{'}}$ is the variance of $\Y$, and the closed forms of $\frac{\partial\hat{\lambda}_s}{\partial \Y^{'}}$, $\frac{\partial \widetilde{\p}_s}{\partial \hat{\lambda}}$ and $\frac{\partial \hat{\rho}_s}{\partial \Y^{'}}$ can be found in \citet{yangms}. Given an estimator of $\Ome$, we can compute $C_s$ for each $s$. Thus, the selected model is defined  as $\hat{s}=\argmin_{s\in\{1, \hdots, S\}}C_s$. Under certain assumptions,  \citet{yangms} show that the selection estimator $\hat{\bs{\mu}}_{\hat{s}}$ is asymptotically optimal in the sense that it is as efficient as the infeasible estimator that uses the best candidate spatial weights matrix. They also show that the selection procedure is selection consistent in the sense that it chooses the true tuple of weight matrices with probability approaching one as $n\rightarrow\infty$.

Instead of selecting the asymptotically optimal model, it is also possible to use a model averaging scheme that compromises across a set of candidate models.  Let $\mf{w}=(w_1, \hdots, w_S)^{'}$ be a vector of weights, and $\mathcal{N}=\left\{\mf{w}\in[0,1]^S: \sum_{s=1}^Sw_s=1\right\}$ be the set of model weights vectors. Let $\widetilde{\mf{P}}(\mf{w})=\sum_{s=1}^Sw_s\widetilde{\mf{P}}_s$ be the weighted average of $\left\{\widetilde{\mf{P}}_1, \hdots, \widetilde{\mf{P}}_S\right\}$. Then, the model averaging estimator for $\bs{\mu}$ is given by 
\begin{equation}\label{4.1}
\hat{\bs{\mu}}(\mf{w})=\sum_{s=1}^Sw_s\hat{\bs{\mu}}_s=\sum_{s=1}^S w_s\widetilde{\mf{P}}_s
\mf{Y}=\widetilde{\mf{P}}(\mf{w})\mf{Y}.
\end{equation} 
Then, \citet{yangms} consider the following model weights choice criterion function:
\begin{equation*}
C(\mf{w})=\left\Vert \widetilde{\p}(\w)\Y-\Y\right\Vert^2+2\left(\tr\left(\widetilde{\p}(\w)\Ome\right)+\sum_{s=1}^Sw_s\left(\frac{\partial \hat{\lambda}_s}{\partial \Y^{'}}\Ome \frac{\partial \widetilde{\p}_s}{\partial \hat{\lambda}_s}\Y+\frac{\partial \hat{\rho}_s}{\partial \Y^{'}}\Ome \frac{\partial \widetilde{\p}_s}{\partial \hat{\rho}_s}\Y\right)\right).
\end{equation*}
The optimal model weights vector is thus given by $\hat{\mf{w}}=\argmin_{\mf{w}\in\mathcal{N}}\widehat{C}(\mf{w})$. Similar to the model selection procedure, the model averaging estimator $\hat{\bs{\mu}}(\hat{\mf{w}})$ is also asymptotically optimal.

The selection and averaging estimators can also be  considered for the high order MESS models. In the case of heteroskedastic models, \citet{yangms} use a heteroskedasticity robust GMM estimator to formulate the selection and model averaging criterion functions. The extensive simulation results in \citet{yangms} indicate that the model selection and averaging estimators perform satisfactorily.

\subsection{Marginal likelihood approach}
In the Bayesian approach, the Bayes factor can be used for both nested and non-nested model selection problems. As shown in Section \ref{sec:ic}, the Bayes factor for two models is simply the ratio of the corresponding marginal likelihood functions: $\text{BF}_{kl}=p(\mf{Y}|M_k)/p(\mf{Y}|M_l)$, where $p(\mf{Y}|M_j)=\int_{\bs{\Theta}_j}p(\mf{Y}|\bs{\theta}_j,M_j)p(\bs{\theta}_j|M_j)\text{d}\bs{\theta}_j$ for $j\in\{k,l\}$. Thus, the Bayes factor chooses $M_k$ if $p(\mf{Y}|M_k)$ is larger than $p(\mf{Y}|M_l)$. If the data is generated from $M_k$, then the Bayes factor will consistently choose $M_k$ over $M_l$. To see this, consider the expectation of the log-Bayes factor under $p(\mf{Y}|M_k)$:
\begin{align}
\E\left(\log\frac{p(\mf{Y}|M_k)}{p(\mf{Y}|M_l)}\right)=\int \log\frac{p(\mf{Y}|M_k)}{p(\mf{Y}|M_l)} p(\mf{Y}|M_k)\text{d}\mf{Y},
\end{align}
which is simply the Kullback-Leibler divergence between $p(\mf{Y}|M_k)$ and $p(\mf{Y}|M_l)$. Thus, the expectation is strictly positive, unless $p(\mf{Y}|M_k)=p(\mf{Y}|M_l)$ in which case it is zero.

The Bayes factor reduces to the Savage-Dickey density ratio (SDDR) for the nested model selection problems \citep{VW:1995}. For example, consider the following null and alternative hypotheses: $H_0:\lambda=0$ against $H_1:\lambda\ne0$ or $H_0:\rho=0$ against $H_1:\rho\ne0$. Let $M_R$ and $M_U$ be respectively the restricted and the unrestricted model. Then, the Bayes factor in favor of the unrestricted model is 
\begin{align}\label{9.11}
\text{BF}_{UR}=\frac{p(\mf{Y}|M_U)}{p(\mf{Y}|M_R)},
\end{align}
where $p(\mf{Y}|M_j)$ for $j\in\{U, R\}$ is the corresponding marginal likelihood function.  Since our prior distributions are independent, the Bayes factor in \eqref{9.11} reduces to the SDDR given by
\begin{align}
\text{BF}_{UR}=\frac{p(\lambda=0|M_U)}{p(\lambda=0|\mf{Y},M_U)},
\end{align}
where $p(\lambda=0|M_U)$ and $p(\lambda=0|\mf{Y},M_U)$ are respectively the prior and the marginal posterior densities of $\lambda$ evaluated at $\lambda=0$. The $\text{BF}_{UR}$ indicates that if $\lambda=0$ is more likely under the prior relative to the marginal posterior, then the $\text{BF}_{UR}$ provides evidence in favor of $H_1$. Under the prior $\lambda\sim N(\mu_\rho,V_\rho)$, we have $p(\lambda=0|M_U)=(2\pi V_\lambda)^{-1/2}\exp(-\mu^2_\lambda/2V_{\lambda})$. Let $\{\bs{\beta}^r,\lambda^r, \rho^r,\sigma^{2r}\}_{r=1}^R$ be a sequence of posterior draws. Then, one way to estimate the marginal posterior $p(\lambda=0|\mf{Y},M_U)$ is to use the following Rao-Blackwell estimator \citep{Smith:1990}:
\begin{align}
\hat{p}(\lambda=0|\mf{Y},M_U)=\frac{1}{R}\sum_{r=1}^Rp(\lambda=0|\mf{Y}, \bs{\beta}^r,\rho^r,\sigma^{2r}).
\end{align}
This Rao-Blackwell estimator cannot be used in our case because the conditional posterior density of $\lambda$ does not take a standard form.  If we assume that the parameter space of $\lambda$ is contained in the interval $(-\tau,\,\tau)$, where $\tau$ is a finite positive constant, then we may resort to a Griddy-Gibbs sampler to estimate $p(\lambda=0|\mf{Y},M_U)$. Algorithm~\ref{alg7} describes how we can use this approach.
\begin{algorithm}[Computing SDDR]\label{alg7}
\leavevmode  
\begin{enumerate}
\item Construct a grid with random points $\lambda_1,\hdots,\lambda_m$ from the interval $(-\tau,\,\tau)$. The grid must also include $\lambda_k=0$.
\item Compute $p_r(\lambda_i)=\frac{p(\lambda_i|\mf{Y}, \bs{\beta}^r,\rho^r,\sigma^{2r})}{\sum_{j=1}^m p(\lambda_j|\mf{Y}, \bs{\beta}^r,\rho^r,\sigma^{2r})}$ for $i=1,\hdots,m$ and $r=1,\hdots, R$.
\item Compute $p(\lambda_i)=\sum_{r=1}^Rp_r(\lambda_i)$ for $i=1,\hdots,m$.
\item Return $\hat{p}(\lambda=0|\mf{Y},M_U)=p(\lambda_k)$.
\end{enumerate}
\end{algorithm}

The marginal likelihood function of the MESS type models does not take a closed form.  There are alternative methods that can be used to estimate or to approximate the marginal likelihood function. In the homoskedastic case, we can analytically integrate out $\bs{\beta}$ and $\sigma^2$ under the following priors: (i) $\bs{\beta}|\sigma^2N(\bs{\mu}_{\beta},\sigma^2\mf{V}_{\beta})$ and $\sigma^2\sim IG(a,b)$ or (ii) $p(\bs{\beta},\sigma^2)\propto1/\sigma$. However, in order to get the marginal likelihood function, we also need to integrate out the spatial parameters, which is not possible analytically. Then, one approach for computing the marginal likelihood function can be based on a numerical integration method \citep{Lesage:2009, Han:2013, Hepple:1995}. In the case of MESS(1,1), this approach requires a double numerical integration over the parameter space of $\lambda$ and $\rho$. It is clear that this approach may not be feasible for high order MESS models and for models with heteroskedasticity. 

Alternatively, since the conditional posterior distributions of the spatial parameters are in non-standard forms, we may resort to the method suggested by \citet{Chib:2001} to estimate the marginal likelihood function.  This approach is general enough and only requires the MCMC draws of parameters. In the heteroskedastic case, this approach requires the MCMC draws of the high-dimensional scale mixture variables, therefore it may not produce precise estimates.  

The modified harmonic mean method of \citet{Dey:1994} can also be used to estimate the marginal likelihood function.  This method requires a probability density function $g$ whose support lies in the support of the posterior distribution. The method produces an approximation based on $\E\left(\frac{g(\bs{\theta})}{p(\mf{Y}|\bs{\theta})p(\bs{\theta})}\big|\mf{Y}\right)$, where the expectation is taken with respect to $p(\bs{\theta}|\mf{Y})$. The expectation gives  the following relationship:
\begin{align}
\E\left(\frac{g(\bs{\theta})}{p(\mf{Y}|\bs{\theta})p(\bs{\theta})}\big|\mf{Y}\right)&=\int\frac{g(\bs{\theta})}{p(\mf{Y}|\bs{\theta})p(\bs{\theta})}p(\bs{\theta}|\mf{Y})\text{d}\bs{\theta}=\int\frac{g(\bs{\theta})}{p(\mf{Y}|\bs{\theta})p(\bs{\theta})}\frac{p(\mf{Y}|\bs{\theta})p(\bs{\theta})}{p(\mf{Y})}\text{d}\bs{\theta}\nonumber\\
&=p^{-1}(\mf{Y})\int g(\bs{\theta})\text{d}\bs{\theta}=p^{-1}(\mf{Y}).
\end{align}
Thus, the marginal likelihood function $p(Y)$ can be estimated by the following estimator:
 \begin{align}
 \hat{p}(\mf{Y})=\left(\frac{1}{R}\sum_{r=1}^R\frac{g(\bs{\theta}^{r})}{p(\mf{Y}|\bs{\theta}^{r})p(\bs{\theta}^{r})}\right)^{-1},
 \end{align}
 where $\{\bs{\theta}^r\}_{r=1}^R$ is a sequence of the posterior draws from $p(\bs{\theta}|\mf{Y})$. Under the  condition that $g(\bs{\theta})/\left(p(\mf{Y}|\bs{\theta})p(\bs{\theta})\right)$ is bounded above over the support of the posterior distribution, it can be shown that this estimator is a simulation consistent estimator when $R$ goes to infinity \citep{Geweke:1999}. To guarantee this boundedness condition, following \citet{Geweke:1999}, we can consider a truncated multivariate normal density for $g$.  Let $ A=\{\bs{\theta}\in\mathbb{R}^p:(\bs{\theta}-\hat{\bs{\theta}})^{'}\hat{\bs{\Omega}}^{-1}(\bs{\theta}-\hat{\bs{\theta}})<\chi^2_{\alpha,p}\}$ be the truncation set, where $\hat{\bs{\theta}}$ is the posterior mean of $\bs{\theta}$, $\hat{\bs{\Omega}}$ is the posterior covariance of $\bs{\theta}$, and $\chi^2_{\alpha,p}$ is the $(1-\alpha)$ quantile of the $\chi^2_p$ distribution. Then, $g$ takes the following form: 
 \begin{align}
 g(\bs{\theta})=(1-\alpha)^{-1}(2\pi)^{-p/2}\left|\hat{\bs{\Omega}}\right|^{-1/2}\exp\left(-\frac{1}{2}(\bs{\theta}-\hat{\bs{\theta}})^{'}\hat{\bs{\Omega}}^{-1}(\bs{\theta}-\hat{\bs{\theta}})\right)\times\mathbf{1}_A(\bs{\theta}),
 \end{align}
 where $\mf{1}_A(\bs{\theta})$ is the indicator function taking value $1$ if $\bs{\theta}\in A$, otherwise $0$. 
 
Note that the computation of the modified harmonic mean estimator requires the integrated likelihood function which is available for both homoskedastic and heteroskedastic models.  In the context of spatial autoregressive models, \citet{Dogan:2023b} investigates the finite sample performance of this estimator along with some other popular information criteria for both nested and non-nested model selection problems. His simulation results show that the modified harmonic mean estimator performs satisfactorily, and can be useful for the specification search exercises in spatial econometrics.

\section{A Monte Carlo study} \label{sec:mc}

\subsection{Design}\label{design}

In this section, we conduct a Monte Carlo study to  investigate the finite sample properties of the estimators considered in Sections \ref{sec:mle} through \ref{sec:Bayesian}. To this end, we consider the following data generating process:
\begin{align*}
 \Y=e^{-\lambda_0 \W}\X\bs{\beta}_0+e^{-\lambda_0 \W}e^{-\rho_0 \M}\V,
\end{align*}
where  $\X$ contains two explanatory variables with $\bs{\beta}_0 =(\beta_{10},\beta_{20})'=(1,\,1)^{'}$. The observations for the first explanatory variable are drawn independently from the standard normal distribution, while the observations for the second explanatory variable are drawn independently from Uniform($0, \sqrt{12}$). The spatial parameters $\lambda_0$ and $\rho_0$ can take values from the set $\{(-2, -1), (-2, 1), (0.5, -1), (0.5, 1)\}$. For $\mathbf{V}$, in the homoskedastic scenario, the elements $v_i$ are i.i.d.\ draws from either (i) the standard normal distribution or (ii) the standardized chi-squared distribution with three degrees of freedom, i.e., $(\chi^2_3-3)/\sqrt{6}$. In the heteroskedastic scenario, we set  $\mathbf{V}\sim N\left(\mf{0},\text{Diag}(\gamma_1,\hdots,\gamma_n)\right)$ with either (i) $\gamma_{i}=2 \vartheta_i / (\sum_{j=1}^n \vartheta_j/n)$, where $\vartheta_i$ is the number of neighbors for unit $i$ using the description of $\W_1$ below, or (ii) $\gamma_{i}=\exp(0.1+0.35X_{2i})$ and $X_{2i}$ is the $i$th element of $X_2$.

For the spatial weights matrix $\W$, we consider the interaction scenario described in \citet{Arraiz:2010}. To this end, let $n$ entities be distributed across four quadrants of a square grid in such a way that the number of entities in each quadrant can be arranged to allow for sparse or dense quadrants. The location of each entity across the grid is determined by the $xy$-coordinates on the grid. Let $\underline{c}$ and $\overline{c}$ be two integers. The entities in the northeast quadrant of the grid have discrete coordinates satisfying $(\underline{c}+1)\leq x\leq \overline{c}$ and $(\underline{c}+1)\leq y\leq \overline{c}$, with an increment value of $0.5$. For the other quadrants, the location coordinates are integers satisfying $1\leq x\leq \underline{c}$, $1\leq y\leq \overline{c}$, and $1\leq x\leq \overline{c}$, $1\leq y\leq \underline{c}$. The distance $d_{ij}$ between any two entities  $i$ and $j$, located respectively at $(x_1,y_1)$ and $(x_2,y_2)$, is measured by the Euclidean distance given by $d_{ij}=\left[(x_1-x_2)^2+(y_1-y_2)^2\right]^{1/2}$. Then, the $(i,j)$th element of $\W$ is set to $1$ if $0\leq d_{ij}\leq 1$, and to $0$ otherwise. We then row normalize $\W$. In this scenario, varying the values for $\underline{c}$ and $\overline{c}$ leads to a different sample size and a different share of units in the northeast quadrant. We consider the following two combinations: $(\underline{c},\overline{c})=(5, 15)$ and $(\underline{c},\overline{c})=(14, 20)$. The first combination produces a sample size of $486$ and locates $75$ percent of the entities in the northeast quadrant ($\mathbf{W}_1$), whereas the second combination generates a sample size of $485$ and locates $25$ percent of the entities in the northeast quadrant ($\mathbf{W}_2$). 

For the spatial weights matrix $\M$, we consider a nearest neighbors scheme. To this end, using the Euclidean distances ($d_{ij}$'s) from the construction of $\W$ above, we let entity $i$ be dependent on its 5 nearest neighbors so that the weights corresponding to these neighbors in the $i$'th row of $\M$ are set to $1$ and the rest are set to zero. Then, $\M$ is row normalized. We use the ``makeneighborsw'' function from the Spatial Econometrics Toolbox to generate $\M$ \citep{Lesage:2009}.

We evaluate the performance of the following estimators: (i) the QMLE in \eqref{qmle3.4}, (ii) the ME in \eqref{me}, (iii) the IGMME in \eqref{igmme5.1}, (iv) the BGMME in \eqref{bgmme}, (v) the RGMME in \eqref{rgmme5.8}, (vi) the Bayesian estimator (BE) based on Algorithm 1, and (vii) the robust Bayesian estimator (RBE) based on Algorithm 2.  For classical estimation methods, we conduct $1000$ repetitions. In the case of Bayesian estimation, we choose the following priors: $\lambda\sim N(0,100)$, $\rho\sim N(0,100)$, $\bs{\beta}\sim N(\mf{0},100\mf{I}_2)$ and $\sigma^2\sim\text{IG}(0.01,0.01)$. We set the number of repetitions to $100$, the number of draws to $1500$, and the burn-ins to $500$. For each method, we report bias, root mean squared error (RMSE) and empirical coverage ratio of a 95\% confidence interval.  

\subsection{Simulation results}
Tables~\ref{t1} and \ref{t2} present the simulation results for two homoskedastic cases: (i) $v_i\sim N(0,1)$ and (ii) $v_i\sim(\chi^2_3-3)/\sqrt{6}$. Similarly, Tables~\ref{t3} and \ref{t4} report the simulation results for the heteroskedastic cases. Below, we summarize our main findings from these tables.

\begin{enumerate}
\item The results in Tables~\ref{t1} and \ref{t2} demonstrate that all estimators exhibit excellent finite sample performance in terms of bias across all cases. All estimators report negligible bias for all parameters. For instance, in Table~\ref{t1}, when $(\alpha_0,\tau_0,\beta_{10},\beta_{20})$ is $(-2,-1,1,1)$ in the case of $\mathbf{W}_1$, in terms of bias in estimating $\alpha_0$, the QMLE, IGMME, BGMME, ME and BE report $-0.0023$, $-0.0009$, $-0.0026$, $-0.0021$, and $-0.0024$, respectively. 
\item In Tables~\ref{t1} and \ref{t2}, in terms of finite sample efficiency, the QMLE, BGMME and BE outperform the other estimators and report smaller RMSE when the true disturbance terms are normally distributed. However, when the true disturbance terms are not normally distributed, we observe that the BGMME reports the smallest RMSE. This is not surprising because the theoretical results in \citet{Jin:2015} show that when the disturbance terms are not normally distributed and $\W$ and $\M$ do not commute, the BGMME can be more efficient than the QMLE. For example, in Table~\ref{t2}, when $(\alpha_0,\tau_0,\beta_{10},\beta_{20})$ is $(-2,-1,1,1)$ in the case of $\W_2$, for $\alpha_0$ the BGMME reports 0.045 for RMSE, whereas the QMLE, IGMME, ME and BE report 0.051, 0.060, 0.052 and 0.061, respectively. 
\item In Tables~\ref{t1} and \ref{t2}, in terms of finite sample coverage ratios, all estimators perform satisfactorily regardless of the distribution of the true disturbance terms or the denseness of $\W$. There are occasional negligible under coverage cases for the ME and the BE for $\alpha_0$. This is not surprising in the case of ME because it uses the adjusted quasi score (with respect to $\alpha$) that tries to correct the score for the potential heteroskedasticity in the disturbance terms. For example, in Table~\ref{t1}, when $(\alpha_0,\tau_0,\beta_{10},\beta_{20})$ is $(-2,-1,1,1)$ in the case of $\mathbf{W}_1$, for $\alpha_0$, the QMLE, IGMME, BGMME, ME and BE report 94.5\%, 93.4\%, 94.6\%, 92.7\%, and 95\%, respectively. Overall, all estimators perform satisfactorily. 
\item In the heteroskedastic cases, the results in Tables~\ref{t3} and \ref{t4} indicate that all estimators exhibit excellent finite sample performance in terms of bias. For example, in Table~\ref{t4}, when $(\alpha_0,\tau_0,\beta_{10},\beta_{20})$ is $(0.5,-1,1,1)$ in the case of $\mathbf{W}_2$, in terms of bias in estimating $\tau_0$, the QMLE, IGMME, RGMME, ME and BE report $0.0006$, $0.0077$, $0.0011$, $0.0004$, and $0.0099$, respectively. 
\item In Tables~\ref{t3} and \ref{t4}, in terms of finite sample efficiency, the QMLE, RGMME, ME and RBE perform similarly. The RGMME and RBE report smaller RMSE values, whereas the IGMME reports the largest RMSE values. For example, in Table~\ref{t4}, when $(\alpha_0,\tau_0,\beta_{10},\beta_{20})$ is $(0.5,-1,1,1)$ in the case of $\W_1$, for $\tau_0$ the RGMME and RBE report 0.089 and 0.082 for RMSE, respectively, whereas the QMLE, IGMME, and ME report 0.092, 0.106 and 0.092, respectively.
\item In Tables~\ref{t3} and \ref{t4}, in terms of finite sample coverage ratio, all estimators perform satisfactorily. For example, in Table~\ref{t4}, when $(\alpha_0,\tau_0,\beta_{10},\beta_{20})$ is $(0.5,1,1,1)$ in the case of $\mathbf{W}_1$, for $\tau_0$, the QMLE, IGMME, RGMME, ME and RBE report 94.5\%, 94.4\%, 94.8\%, 95\%, and 92\%, respectively. 

\end{enumerate}

\clearpage
\begin{table}[htbp]
  \centering
  \caption{Estimation results under homoskedasticity with $v_i\sim N(0,1)$}
  \resizebox{\textwidth}{!}{
    \begin{tabular}{lrrrrr}
    \toprule
          & \multicolumn{1}{c}{QMLE} & \multicolumn{1}{c}{IGMME} & \multicolumn{1}{c}{BGMME} & \multicolumn{1}{c}{ME} & \multicolumn{1}{c}{BE} \\
\cmidrule{2-6}          & \multicolumn{5}{c}{$\mathbf{W}_1$} \\
    \midrule
    $\alpha_0=-2$ & $-$0.0023(.043)[.945] & $-$0.0009(.054)[.934] & $-$0.0026(.043)[.946] & $-$0.0021(.045)[.927] & $-$0.0024(.045)[.950] \\
    $\tau_0=-1$ & 0.0015(.089)[.941] & 0.0041(.100)[.938] & 0.0013(.089)[.939] & 0.0013(.090)[.941] & 0.0141(.089)[.930] \\
    $\beta_{10}=1$ & 0.0034(.040)[.955] & 0.0034(.042)[.960] & 0.0031(.040)[.953] & 0.0030(.040)[.909] & 0.0036(.042)[.970] \\
    $\beta_{20}=1$ & $-$0.0007(.035)[.943] & $-$0.0011(.037)[.947] & $-$0.0009(.035)[.942] & 0.0010(.035)[.976] & 0.0002(.035)[.940] \\
    \midrule
          & \multicolumn{5}{c}{$\mathbf{W}_1$} \\
              \midrule
    $\alpha_0=-2$ & $-$0.0006(.038)[.942] & $-$0.0005(.044)[.949] & $-$0.0002(.039)[.939] & $-$0.0012(.048)[.908] & 0.0037(.036)[.970] \\
    $\tau_0=1$ & 0.0140(.088)[.942] & 0.0094(.097)[.941] & 0.0115(.088)[.941] & 0.0141(.091)[.954] & 0.0066(.080)[.960] \\
    $\beta_{10}=1$ & 0.0012(.040)[.946] & 0.0010(.043)[.945] & 0.0011(.040)[.946] & 0.0020(.041)[.972] & 0.0039(.040)[.960] \\
    $\beta_{20}=1$ & 0.0000(.038)[.944] & 0.0004(.044)[.938] & 0.0004(.038)[.939] & 0.0000(.047)[.940] & 0.0054(.036)[.960] \\
    \midrule
          & \multicolumn{5}{c}{$\mathbf{W}_1$} \\
              \midrule
    $\alpha_0=0.5$ & $-$0.0002(.048)[.939] & 0.0022(.061)[.948] & $-$0.0004(.048)[.942] & $-$0.0001(.049)[.883] & 0.0021(.048)[.960] \\
    $\tau_0=-1$ & 0.0032(.092)[.936] & 0.0037(.106)[.949] & 0.0028(.092)[.931] & 0.0031(.092)[.940] & $-$0.0032(.093)[.940] \\
    $\beta_{10}=1$ & 0.0020(.041)[.944] & 0.0014(.043)[.939] & 0.0017(.041)[.944] & 0.0024(.041)[.890] & $-$0.0001(.046)[.950] \\
    $\beta_{20}=1$ & 0.0000(.034)[.954] & 0.0001(.034)[.955] & $-$0.0002(.034)[.951] & 0.0020(.034)[.995] & $-$0.0026(.034)[.940] \\
    \midrule
          & \multicolumn{5}{c}{$\mathbf{W}_1$} \\
              \midrule
    $\alpha_0=0.5$ & $-$0.0009(.041)[.954] & $-$0.0005(.047)[.951] & $-$0.0005(.041)[.955] & $-$0.0034(.054)[.891] & 0.0037(.040)[.930] \\
    $\tau_0=1$ & 0.0147(.089)[.940] & 0.0094(.096)[.953] & 0.0121(.089)[.942] & 0.0168(.097)[.957] & $-$0.0073(.078)[.980] \\
    $\beta_{10}=1$ & $-$0.0017(.042)[.934] & $-$0.0016(.045)[.936] & $-$0.0020(.043)[.933] & $-$0.0006(.043)[.971] & 0.0011(.041)[.950] \\
    $\beta_{20}=1$ & $-$0.0006(.040)[.947] & 0.0000(.046)[.951] & $-$0.0002(.040)[.946] & $-$0.0003(.052)[.955] & 0.0060(.040)[.930] \\
    \midrule
          & \multicolumn{5}{c}{$\mathbf{W}_2$} \\
              \midrule
    $\alpha_0=-2$ & $-$0.0001(.051)[.951] & 0.0003(.061)[.952] & 0.0000(.051)[.952] & $-$0.0004(.052)[.894] & 0.0057(.053)[.960] \\
    $\tau_0=-1$ & 0.0074(.083)[.944] & 0.0111(.093)[.950] & 0.0065(.084)[.943] & 0.0077(.083)[.948] & 0.0068(.073)[.980] \\
    $\beta_{10}=1$ & 0.0034(.044)[.948] & 0.0036(.046)[.950] & 0.0031(.044)[.948] & 0.0036(.044)[.874] & $-$0.0049(.044)[.940] \\
    $\beta_{20}=1$ & $-$0.0006(.034)[.943] & $-$0.0004(.035)[.951] & $-$0.0013(.034)[.944] & 0.0005(.034)[.997] & $-$0.0002(.031)[.970] \\
    \midrule
          & \multicolumn{5}{c}{$\mathbf{W}_2$} \\
              \midrule
    $\alpha_0=-2$ & $-$0.0013(.038)[.959] & $-$0.0016(.045)[.952] & $-$0.0012(.038)[.958] & $-$0.0037(.054)[.845] & 0.0010(.044)[.900] \\
    $\tau_0=1$ & 0.0143(.083)[.941] & 0.0096(.087)[.950] & 0.0106(.082)[.940] & 0.0143(.086)[.956] & $-$0.0219(.083)[.950] \\
    $\beta_{10}=1$ & 0.0008(.036)[.948] & 0.0007(.039)[.949] & 0.0003(.036)[.949] & 0.0015(.038)[.985] & $-$0.0015(.040)[.950] \\
    $\beta_{20}=1$ & $-$0.0006(.037)[.955] & $-$0.0006(.044)[.954] & $-$0.0006(.037)[.957] & 0.0001(.051)[.953] & 0.0011(.043)[.900] \\
    \midrule
          & \multicolumn{5}{c}{$\mathbf{W}_2$} \\
              \midrule
    $\alpha_0=0.5$ & 0.0025(.050)[.953] & 0.0025(.062)[.933] & 0.0027(.050)[.952] & 0.0025(.052)[.845] & 0.0122(.056)[.880] \\
    $\tau_0=-1$ & $-$0.0030(.086)[.941] & 0.0033(.096)[.939] & $-$0.0036(.087)[.937] & $-$0.0031(.087)[.933] & $-$0.0014(.089)[.930] \\
    $\beta_{10}=1$ & 0.0001(.038)[.963] & $-$0.0006(.041)[.947] & $-$0.0003(.038)[.962] & 0.0008(.038)[.933] & 0.0046(.041)[.930] \\
    $\beta_{20}=1$ & $-$0.0025(.035)[.951] & $-$0.0024(.036)[.941] & $-$0.0030(.035)[.947] & $-$0.0015(.035)[.998] & 0.0035(.028)[.970] \\
    \midrule
          & \multicolumn{5}{c}{$\mathbf{W}_2$} \\
              \midrule
    $\alpha_0=0.5$ & $-$0.0005(.038)[.949] & $-$0.0008(.044)[.948] & $-$0.0005(.038)[.948] & 0.0010(.049)[.869] & $-$0.0010(.040)[.910] \\
    $\tau_0=1$ & 0.0099(.080)[.950] & 0.0056(.085)[.957] & 0.0063(.080)[.950] & 0.0082(.082)[.960] & $-$0.0023(.085)[.950] \\
    $\beta_{10}=1$ & 0.0001(.037)[.945] & 0.0001(.040)[.940] & $-$0.0004(.037)[.943] & 0.0006(.038)[.981] & $-$0.0036(.034)[.980] \\
    $\beta_{20}=1$ & $-$0.0001(.038)[.944] & $-$0.0002(.044)[.949] & $-$0.0001(.038)[.943] & $-$0.0002(.049)[.942] & 0.0010(.040)[.900] \\
    \bottomrule
    \end{tabular}%
    }
    {\raggedright\footnotesize\textit{Notes:} We report the bias (RMSE) [95\% coverage ratio].\par}
  \label{t1}%
\end{table}%

\clearpage
\begin{table}[htbp]
  \centering
  \caption{Estimation results under homoskedasticity with $v_i\sim(\chi^2_3-3)/\sqrt{6}$}
  \resizebox{\textwidth}{!}{
    \begin{tabular}{lrrrrr}
    \toprule
          & \multicolumn{1}{c}{QMLE} & \multicolumn{1}{c}{IGMME} & \multicolumn{1}{c}{BGMME} & \multicolumn{1}{c}{ME} & \multicolumn{1}{c}{BE} \\
\cmidrule{2-6}          & \multicolumn{5}{c}{$\mathbf{W}_1$} \\
    \midrule
    $\alpha_0=-2$ & 0.0012(.045)[.933] & 0.0009(.054)[.946] & $-$0.0024(.036)[.945] & 0.0010(.045)[.906] & $-$0.0006(.040)[.950] \\
    $\tau_0=-1$ & 0.0005(.088)[.944] & 0.0037(.099)[.942] & 0.0036(.086)[.939] & 0.0009(.088)[.944] & 0.0037(.074)[.990] \\
    $\beta_{10}=1$ & $-$0.0016(.043)[.944] & $-$0.0017(.045)[.943] & $-$0.0004(.032)[.948] & $-$0.0008(.043)[.890] & 0.0018(.041)[.950] \\
    $\beta_{20}=1$ & $-$0.0001(.035)[.948] & 0.0003(.037)[.945] & 0.0006(.027)[.961] & 0.0018(.035)[.981] & 0.0011(.036)[.930] \\
    \midrule
          & \multicolumn{5}{c}{$\mathbf{W}_1$} \\
              \midrule
    $\alpha_0=-2$ & 0.0016(.038)[.946] & 0.0020(.043)[.943] & 0.0002(.029)[.948] & 0.0007(.046)[.910] & 0.0022(.041)[.930] \\
    $\tau_0=1$ & 0.0071(.088)[.936] & 0.0013(.094)[.940] & 0.0057(.086)[.927] & 0.0079(.091)[.943] & $-$0.0006(.081)[.960] \\
    $\beta_{10}=1$ & 0.0013(.040)[.941] & 0.0016(.042)[.937] & 0.0002(.028)[.957] & 0.0013(.041)[.968] & $-$0.0027(.043)[.910] \\
    $\beta_{20}=1$ & 0.0027(.038)[.953] & 0.0033(.042)[.951] & 0.0015(.029)[.946] & 0.0006(.044)[.951] & 0.0034(.041)[.940] \\
    \midrule
          & \multicolumn{5}{c}{$\mathbf{W}_1$} \\
              \midrule
    $\alpha_0=0.5$ & 0.0024(.049)[.947] & 0.0027(.061)[.941] & $-$0.0003(.041)[.946] & 0.0025(.050)[.891] & 0.0020(.044)[.970] \\
    $\tau_0=-1$ & 0.0027(.092)[.942] & 0.0061(.106)[.953] & 0.0047(.089)[.937] & 0.0027(.092)[.942] & $-$0.0051(.082)[.950] \\
    $\beta_{10}=1$ & 0.0017(.041)[.941] & 0.0017(.043)[.932] & 0.0005(.030)[.946] & 0.0026(.041)[.890] & $-$0.0010(.046)[.950] \\
    $\beta_{20}=1$ & 0.0011(.034)[.934] & 0.0010(.035)[.936] & 0.0017(.027)[.951] & 0.0031(.034)[.990] & $-$0.0046(.036)[.930] \\
    \midrule
          & \multicolumn{5}{c}{$\mathbf{W}_1$} \\
              \midrule
    $\alpha_0=0.5$ & 0.0009(.042)[.940] & 0.0016(.046)[.945] & $-$0.0009(.031)[.949] & $-$0.0022(.055)[.881] & $-$0.0039(.040)[.940] \\
    $\tau_0=1$ & 0.0121(.087)[.945] & 0.0050(.096)[.944] & 0.0114(.083)[.948] & 0.0149(.094)[.950] & 0.0010(.089)[.920] \\
    $\beta_{10}=1$ & $-$0.0003(.042)[.941] & 0.0000(.044)[.943] & $-$0.0008(.031)[.939] & 0.0003(.043)[.976] & $-$0.0051(.039)[.930] \\
    $\beta_{20}=1$ & 0.0019(.041)[.944] & 0.0027(.045)[.949] & 0.0002(.030)[.946] & 0.0003(.052)[.948] & $-$0.0038(.040)[.950] \\
    \midrule
          & \multicolumn{5}{c}{$\mathbf{W}_2$} \\
              \midrule
    $\alpha_0=-2$ & 0.0036(.051)[.949] & 0.0025(.060)[.954] & $-$0.0011(.045)[.946] & 0.0036(.052)[.888] & $-$0.0059(.061)[.900] \\
    $\tau_0=-1$ & 0.0006(.084)[.939] & 0.0069(.093)[.948] & 0.0028(.083)[.939] & 0.0008(.085)[.941] & $-$0.0052(.080)[.980] \\
    $\beta_{10}=1$ & $-$0.0017(.042)[.948] & $-$0.0017(.044)[.943] & $-$0.0018(.032)[.956] & $-$0.0011(.042)[.883] & 0.0014(.041)[.930] \\
    $\beta_{20}=1$ & $-$0.0013(.033)[.952] & $-$0.0015(.034)[.953] & $-$0.0016(.025)[.949] & $-$0.0002(.033)[.999] & $-$0.0041(.036)[.950] \\
    \midrule
          & \multicolumn{5}{c}{$\mathbf{W}_2$}  \\
              \midrule
    $\alpha_0=-2$ & 0.0009(.039)[.939] & 0.0015(.045)[.945] & $-$0.0001(.030)[.951] & $-$0.0008(.053)[.847] & $-$0.0072(.036)[.940] \\
    $\tau_0=1$ & 0.0102(.075)[.958] & 0.0050(.082)[.959] & 0.0047(.075)[.960] & 0.0102(.079)[.963] & 0.0029(.083)[.920] \\
    $\beta_{10}=1$ & $-$0.0011(.036)[.943] & $-$0.0016(.039)[.953] & $-$0.0010(.026)[.952] & $-$0.0011(.039)[.983] & $-$0.0025(.039)[.920] \\
    $\beta_{20}=1$ & 0.0014(.038)[.945] & 0.0023(.044)[.950] & 0.0007(.029)[.949] & 0.0000(.050)[.952] & $-$0.0057(.036)[.950] \\
    \midrule
          & \multicolumn{5}{c}{$\mathbf{W}_2$} \\
              \midrule
    $\alpha_0=0.5$ & 0.0019(.050)[.947] & 0.0012(.061)[.949] & $-$0.0007(.045)[.943] & 0.0023(.051)[.857] & $-$0.0002(.054)[.930] \\
    $\tau_0=-1$ & 0.0006(.085)[.946] & 0.0077(.094)[.941] & 0.0011(.085)[.936] & 0.0004(.085)[.945] & 0.0099(.091)[.940] \\
    $\beta_{10}=1$ & $-$0.0033(.037)[.950] & $-$0.0017(.040)[.945] & $-$0.0026(.029)[.941] & $-$0.0027(.038)[.922] & $-$0.0022(.043)[.940] \\
    $\beta_{20}=1$ & $-$0.0004(.035)[.959] & $-$0.0008(.036)[.951] & 0.0003(.026)[.957] & 0.0006(.035)[.997] & $-$0.0004(.032)[.950] \\
    \midrule
          & \multicolumn{5}{c}{$\mathbf{W}_2$} \\
              \midrule
    $\alpha_0=0.5$ & 0.0006(.039)[.946] & $-$0.0006(.045)[.945] & 0.0009(.029)[.949] & $-$0.0016(.050)[.863] & 0.0124(.048)[.850] \\
    $\tau_0=1$ & 0.0142(.082)[.940] & 0.0114(.088)[.940] & 0.0093(.081)[.940] & 0.0149(.084)[.943] & 0.0038(.089)[.930] \\
    $\beta_{10}=1$ & $-$0.0012(.035)[.953] & $-$0.0022(.038)[.952] & $-$0.0007(.026)[.958] & $-$0.0010(.037)[.986] & 0.0090(.041)[.870] \\
    $\beta_{20}=1$ & 0.0008(.039)[.948] & 0.0000(.045)[.947] & 0.0011(.028)[.957] & $-$0.0004(.049)[.952] & 0.0138(.048)[.880] \\
    \bottomrule
\end{tabular}%
}
    {\raggedright\footnotesize\textit{Notes:} We report the bias (RMSE) [95\% coverage ratio].\par}
  \label{t2}%
\end{table}%

\clearpage
\begin{table}[htbp]
  \centering
  \caption{Estimation results under heteroskedasticity with $\gamma_{i}=2 \vartheta_i / (\sum_{j=1}^n \vartheta_j/n)$}
  \resizebox{\textwidth}{!}{
    \begin{tabular}{lrrrrr}
    \toprule
          & \multicolumn{1}{c}{QMLE} & \multicolumn{1}{c}{IGMME} & \multicolumn{1}{c}{RGMME} & \multicolumn{1}{c}{ME} & \multicolumn{1}{c}{RBE} \\
\cmidrule{2-6}          & \multicolumn{5}{c}{$\mathbf{W}_1$} \\
     \midrule
    $\alpha_0=-2$ & $-$0.0014(.057)[.950] & 0.0004(.074)[.945] & $-$0.0022(.057)[.953] & $-$0.0011(.059)[.948] & $-$0.0049(.049)[.960] \\
    $\tau_0=-1$ & 0.0001(.104)[.911] & 0.0043(.112)[.945] & $-$0.0017(.111)[.893] & $-$0.0001(.105)[.942] & $-$0.0012(.103)[.920] \\
    $\beta_{10}=1$ & 0.0044(.057)[.952] & 0.0046(.059)[.954] & 0.0037(.057)[.953] & 0.0044(.052)[.933] & 0.0013(.043)[.960] \\
    $\beta_{20}=1$ & $-$0.0016(.049)[.943] & $-$0.0017(.051)[.953] & $-$0.0028(.049)[.944] & 0.0014(.049)[.970] & 0.0118(.048)[.970] \\
    \midrule
          & \multicolumn{5}{c}{$\mathbf{W}_1$} \\
    \midrule
    $\alpha_0=-2$ & 0.0024(.052)[.947] & 0.0000(.057)[.957] & 0.0008(.052)[.948] & $-$0.0021(.064)[.913] & 0.0104(.059)[.820] \\
    $\tau_0=1$ & 0.0097(.096)[.938] & 0.0097(.096)[.960] & 0.0068(.101)[.923] & 0.0131(.100)[.962] & $-$0.0037(.097)[.900] \\
    $\beta_{10}=1$ & 0.0015(.054)[.951] & $-$0.0002(.056)[.950] & 0.0000(.054)[.947] & 0.0011(.047)[.992] & $-$0.0035(.042)[.930] \\
    $\beta_{20}=1$ & 0.0035(.052)[.945] & 0.0015(.057)[.961] & 0.0019(.052)[.944] & 0.0001(.013)[.999] & 0.0116(.060)[.840] \\
    \midrule
          & \multicolumn{5}{c}{$\mathbf{W}_1$} \\
    \midrule
    $\alpha_0=0.5$ & $-$0.0036(.056)[.965] & $-$0.0004(.071)[.955] & $-$0.0042(.056)[.963] & $-$0.0039(.058)[.958] & 0.0005(.050)[.980] \\
    $\tau_0=-1$ & 0.0071(.104)[.931] & 0.0074(.108)[.962] & 0.0073(.111)[.909] & 0.0075(.104)[.946] & $-$0.0026(.086)[.970] \\
    $\beta_{10}=1$ & 0.0010(.057)[.948] & 0.0010(.059)[.948] & 0.0003(.057)[.945] & 0.0009(.056)[.891] & 0.0027(.041)[.960] \\
    $\beta_{20}=1$ & $-$0.0019(.044)[.972] & $-$0.0016(.045)[.971] & $-$0.0033(.044)[.970] & 0.0011(.044)[.989] & $-$0.0061(.050)[.970] \\
    \midrule
          & \multicolumn{5}{c}{$\mathbf{W}_1$} \\
    \midrule
    $\alpha_0=0.5$ & 0.0016(.058)[.937] & $-$0.0006(.062)[.963] & 0.0002(.058)[.936] & $-$0.0111(.095)[.892] & 0.0056(.047)[.900] \\
    $\tau_0=1$ & 0.0075(.103)[.924] & 0.0060(.102)[.962] & 0.0058(.109)[.907] & 0.0174(.125)[.976] & $-$0.0143(.098)[.890] \\
    $\beta_{10}=1$ & 0.0002(.060)[.935] & $-$0.0004(.063)[.939] & $-$0.0012(.061)[.930] & 0.0002(.054)[.993] & 0.0016(.041)[.920] \\
    $\beta_{20}=1$ & 0.0030(.059)[.937] & 0.0012(.064)[.956] & 0.0016(.059)[.935] & 0.0000(.014)[.999] & 0.0072(.047)[.900] \\
    \midrule
          & \multicolumn{5}{c}{$\mathbf{W}_2$}  \\
    \midrule
\cmidrule{2-6}    $\alpha_0=-2$ & 0.0033(.068)[.953] & 0.0027(.091)[.941] & 0.0031(.067)[.954] & 0.0049(.069)[.883] & $-$0.0096(.061)[.999] \\
    $\tau_0=-1$ & 0.0018(.094)[.935] & 0.0093(.105)[.956] & $-$0.0015(.095)[.930] & 0.0008(.094)[.945] & $-$0.0129(.090)[.930] \\
    $\beta_{10}=1$ & 0.0016(.054)[.942] & 0.0019(.057)[.948] & 0.0011(.055)[.945] & 0.0023(.053)[.945] & 0.0051(.039)[.980] \\
    $\beta_{20}=1$ & $-$0.0020(.048)[.955] & $-$0.0022(.052)[.947] & $-$0.0034(.049)[.954] & 0.0000(.048)[.995] & $-$0.0063(.045)[.960] \\
    \midrule
          & \multicolumn{5}{c}{$\mathbf{W}_2$}  \\
    \midrule
    $\alpha_0=-2$ & 0.0001(.052)[.958] & $-$0.0028(.061)[.960] & $-$0.0029(.053)[.955] & $-$0.0062(.074)[.900] & $-$0.0017(.054)[.920] \\
    $\tau_0=1$ & 0.0117(.091)[.928] & 0.0100(.095)[.946] & 0.0078(.092)[.927] & 0.0147(.097)[.948] & $-$0.0012(.089)[.940] \\
    $\beta_{10}=1$ & 0.0023(.059)[.939] & 0.0013(.062)[.956] & 0.0010(.060)[.940] & 0.0028(.058)[.992] & 0.0001(.038)[.980] \\
    $\beta_{20}=1$ & 0.0012(.052)[.957] & $-$0.0011(.061)[.960] & $-$0.0018(.052)[.954] & $-$0.0001(.013)[.999] & 0.0009(.054)[.890] \\
    \midrule
          & \multicolumn{5}{c}{$\mathbf{W}_2$}  \\
    \midrule
    $\alpha_0=0.5$ & 0.0069(.074)[.951] & 0.0082(.094)[.940] & 0.0074(.074)[.947] & 0.0086(.076)[.843] & 0.0086(.059)[.980] \\
    $\tau_0=-1$ & $-$0.0007(.099)[.923] & 0.0043(.105)[.949] & $-$0.0036(.100)[.918] & $-$0.0016(.099)[.943] & 0.0074(.089)[.930] \\
    $\beta_{10}=1$ & $-$0.0024(.056)[.943] & $-$0.0017(.059)[.955] & $-$0.0028(.056)[.945] & $-$0.0011(.055)[.926] & 0.0055(.036)[.970] \\
    $\beta_{20}=1$ & 0.0004(.050)[.954] & 0.0014(.051)[.948] & $-$0.0008(.050)[.951] & 0.0021(.049)[.999] & $-$0.0061(.049)[.930] \\
    \midrule
          & \multicolumn{5}{c}{$\mathbf{W}_2$}  \\
    \midrule
    $\alpha_0=0.5$ & 0.0030(.056)[.942] & $-$0.0039(.065)[.954] & $-$0.0004(.057)[.945] & 0.0009(.073)[.871] & 0.0068(.059)[.880] \\
    $\tau_0=1$ & 0.0048(.088)[.947] & 0.0069(.092)[.955] & 0.0009(.089)[.947] & 0.0053(.092)[.963] & 0.0079(.084)[.930] \\
    $\beta_{10}=1$ & 0.0037(.055)[.944] & 0.0030(.058)[.944] & 0.0022(.055)[.942] & 0.0033(.052)[.992] & 0.0019(.034)[.990] \\
    $\beta_{20}=1$ & 0.0042(.056)[.935] & $-$0.0017(.062)[.951] & 0.0010(.056)[.936] & 0.0001(.013)[.999] & 0.0102(.054)[.880] \\
    \bottomrule
\end{tabular}%
}
    {\raggedright\footnotesize\textit{Notes:} We report the bias (RMSE) [95\% coverage ratio].\par}
  \label{t3}%
\end{table}%

\clearpage
\begin{table}[htbp]
  \centering
  \caption{Estimation results under heteroskedasticity with $\gamma_{i}=\exp(0.1+0.35X_{2i})$}
  \resizebox{\textwidth}{!}{
    \begin{tabular}{lrrrrr}
    \toprule
          & \multicolumn{1}{c}{QMLE} & \multicolumn{1}{c}{IGMME} & \multicolumn{1}{c}{RGMME} & \multicolumn{1}{c}{ME} & \multicolumn{1}{c}{RBE} \\
\cmidrule{2-6}          & \multicolumn{5}{c}{$\mathbf{W}_1$} \\
    \midrule
    $\alpha_0=-2$ & 0.0012(.045)[.933] & 0.0009(.054)[.946] & $-$0.0024(.036)[.945] & 0.0010(.045)[.906] & $-$0.0006(.040)[.950] \\
    $\tau_0=-1$ & 0.0005(.088)[.944] & 0.0037(.099)[.942] & 0.0036(.086)[.939] & 0.0009(.088)[.944] & 0.0037(.074)[.990] \\
    $\beta_{10}=1$ & $-$0.0016(.043)[.944] & $-$0.0017(.045)[.943] & $-$0.0004(.032)[.948] & $-$0.0008(.043)[.890] & 0.0018(.041)[.950] \\
    $\beta_{20}=1$ & $-$0.0001(.035)[.948] & 0.0003(.037)[.945] & 0.0006(.027)[.961] & 0.0018(.035)[.981] & 0.0011(.036)[.930] \\
    \midrule
          & \multicolumn{5}{c}{$\mathbf{W}_1$} \\
              \midrule
    $\alpha_0=-2$ & 0.0016(.038)[.946] & 0.0020(.043)[.943] & 0.0002(.029)[.948] & 0.0007(.046)[.910] & 0.0022(.041)[.930] \\
    $\tau_0=1$ & 0.0071(.088)[.936] & 0.0013(.094)[.940] & 0.0057(.086)[.927] & 0.0079(.091)[.943] & $-$0.0006(.081)[.960] \\
    $\beta_{10}=1$ & 0.0013(.040)[.941] & 0.0016(.042)[.937] & 0.0002(.028)[.957] & 0.0013(.041)[.968] & $-$0.0027(.043)[.910] \\
    $\beta_{20}=1$ & 0.0027(.038)[.953] & 0.0033(.042)[.951] & 0.0015(.029)[.946] & 0.0006(.044)[.951] & 0.0034(.041)[.940] \\
    \midrule
          & \multicolumn{5}{c}{$\mathbf{W}_1$} \\
              \midrule
    $\alpha_0=0.5$ & 0.0024(.049)[.947] & 0.0027(.061)[.941] & $-$0.0003(.041)[.946] & 0.0025(.050)[.891] & 0.0020(.044)[.970] \\
    $\tau_0=-1$ & 0.0027(.092)[.942] & 0.0061(.106)[.953] & 0.0047(.089)[.937] & 0.0027(.092)[.942] & $-$0.0051(.082)[.950] \\
    $\beta_{10}=1$ & 0.0017(.041)[.941] & 0.0017(.043)[.932] & 0.0005(.030)[.946] & 0.0026(.041)[.890] & $-$0.0010(.046)[.950] \\
    $\beta_{20}=1$ & 0.0011(.034)[.934] & 0.0010(.035)[.936] & 0.0017(.027)[.951] & 0.0031(.034)[.990] & $-$0.0046(.036)[.930] \\
    \midrule
          & \multicolumn{5}{c}{$\mathbf{W}_1$} \\
              \midrule
    $\alpha_0=0.5$ & 0.0009(.042)[.940] & 0.0016(.046)[.945] & $-$0.0009(.031)[.949] & $-$0.0022(.055)[.881] & $-$0.0039(.040)[.940] \\
    $\tau_0=1$ & 0.0121(.087)[.945] & 0.0050(.096)[.944] & 0.0114(.083)[.948] & 0.0149(.094)[.950] & 0.0010(.089)[.920] \\
    $\beta_{10}=1$ & $-$0.0003(.042)[.941] & 0.0000(.044)[.943] & $-$0.0008(.031)[.939] & 0.0003(.043)[.976] & $-$0.0051(.039)[.930] \\
    $\beta_{20}=1$ & 0.0019(.041)[.944] & 0.0027(.045)[.949] & 0.0002(.030)[.946] & 0.0003(.052)[.948] & $-$0.0038(.040)[.950] \\
    \midrule
          & \multicolumn{5}{c}{$\mathbf{W}_2$} \\
              \midrule
    $\alpha_0=-2$ & 0.0036(.051)[.949] & 0.0025(.060)[.954] & $-$0.0011(.045)[.946] & 0.0036(.052)[.888] & $-$0.0059(.061)[.900] \\
    $\tau_0=-1$ & 0.0006(.084)[.939] & 0.0069(.093)[.948] & 0.0028(.083)[.939] & 0.0008(.085)[.941] & $-$0.0052(.080)[.980] \\
    $\beta_{10}=1$ & $-$0.0017(.042)[.948] & $-$0.0017(.044)[.943] & $-$0.0018(.032)[.956] & $-$0.0011(.042)[.883] & 0.0014(.041)[.930] \\
    $\beta_{20}=1$ & $-$0.0013(.033)[.952] & $-$0.0015(.034)[.953] & $-$0.0016(.025)[.949] & $-$0.0002(.033)[.999] & $-$0.0041(.036)[.950] \\
    \midrule
          & \multicolumn{5}{c}{$\mathbf{W}_2$}  \\
              \midrule
    $\alpha_0=-2$ & 0.0009(.039)[.939] & 0.0015(.045)[.945] & $-$0.0001(.030)[.951] & $-$0.0008(.053)[.847] & $-$0.0072(.036)[.940] \\
    $\tau_0=1$ & 0.0102(.075)[.958] & 0.0050(.082)[.959] & 0.0047(.075)[.960] & 0.0102(.079)[.963] & 0.0029(.083)[.920] \\
    $\beta_{10}=1$ & $-$0.0011(.036)[.943] & $-$0.0016(.039)[.953] & $-$0.0010(.026)[.952] & $-$0.0011(.039)[.983] & $-$0.0025(.039)[.920] \\
    $\beta_{20}=1$ & 0.0014(.038)[.945] & 0.0023(.044)[.950] & 0.0007(.029)[.949] & 0.0000(.050)[.952] & $-$0.0057(.036)[.950] \\
    \midrule
          & \multicolumn{5}{c}{$\mathbf{W}_2$} \\
              \midrule
    $\alpha_0=0.5$ & 0.0019(.050)[.947] & 0.0012(.061)[.949] & $-$0.0007(.045)[.943] & 0.0023(.051)[.857] & $-$0.0002(.054)[.930] \\
    $\tau_0=-1$ & 0.0006(.085)[.946] & 0.0077(.094)[.941] & 0.0011(.085)[.936] & 0.0004(.085)[.945] & 0.0099(.091)[.940] \\
    $\beta_{10}=1$ & $-$0.0033(.037)[.950] & $-$0.0017(.040)[.945] & $-$0.0026(.029)[.941] & $-$0.0027(.038)[.922] & $-$0.0022(.043)[.940] \\
    $\beta_{20}=1$ & $-$0.0004(.035)[.959] & $-$0.0008(.036)[.951] & 0.0003(.026)[.957] & 0.0006(.035)[.997] & $-$0.0004(.032)[.950] \\
    \midrule
          & \multicolumn{5}{c}{$\mathbf{W}_2$} \\
              \midrule
    $\alpha_0=0.5$ & 0.0006(.039)[.946] & $-$0.0006(.045)[.945] & 0.0009(.029)[.949] & $-$0.0016(.050)[.863] & 0.0124(.048)[.850] \\
    $\tau_0=1$ & 0.0142(.082)[.940] & 0.0114(.088)[.940] & 0.0093(.081)[.940] & 0.0149(.084)[.943] & 0.0038(.089)[.930] \\
    $\beta_{10}=1$ & $-$0.0012(.035)[.953] & $-$0.0022(.038)[.952] & $-$0.0007(.026)[.958] & $-$0.0010(.037)[.986] & 0.0090(.041)[.870] \\
    $\beta_{20}=1$ & 0.0008(.039)[.948] & 0.0000(.045)[.947] & 0.0011(.028)[.957] & $-$0.0004(.049)[.952] & 0.0138(.048)[.880] \\
    \bottomrule
\end{tabular}%
}
    {\raggedright\footnotesize\textit{Notes:} We report the bias (RMSE) [95\% coverage ratio].\par}
  \label{t4}%
\end{table}%

\clearpage
\section{Conclusion and outlook}\label{sec:conclusion}
In this article, we provide an extensive review of cross-sectional MESS models. We mainly focus on a first-order MESS model to discuss specification, estimation, model selection, and interpretation issues. The primary characteristic of a MESS-type model lies in its use of matrix exponential terms to specify spatial dependence in both the dependent variable and the disturbance term. These models possess several distinctive properties: 
\begin{itemize}
\item The power series representation of a matrix exponential term indicates an exponential decay of spatial dependence in these models.
 \item The reduced forms of MESS-type models always exist and do not require any restrictions on the parameter space of spatial parameters.
 \item The likelihood functions of these types of models are free of any Jacobian terms that must be computed at each iteration during the estimation process.
 \item When the spatial weights matrices are commutative, the QMLEs of these types of models can be consistent under an unknown form of heteroskedasticity.
 \item When the spatial weights matrices are not commutative, the QMLE can be inconsistent under an unknown form of heteroskedasticity. In such cases, a heteroskedasticity-robust estimation is required.
 \end{itemize}
We have provided a comprehensive description of various estimation methods, namely, the QML approach, the M-estimation approach, the GMM approach, and the Bayesian estimation approach. This detailed overview enables practitioners to easily choose and adapt a method that aligns with their specific needs. Additionally, we address estimation in the presence of endogenous explanatory variables and the Durbin terms.

In future studies, it might be interesting to consider the MESS in a social interactions scenario, and compare its implications with the SAR-type social interaction models. As the QMLE of the MESS can still remain consistent under an unknown form of heteroskedasticity, allowing for such heteroskedasticity in a social interactions model would be a significant contribution. We also think that the literature on nonlinear spatial models, such as the spatial extensions of the limited dependent variable data models, still holds some open questions, and estimation strategies for the the MESS-type limited dependent variable data models must be studied carefully. Finally, although the matrix-vector product approach to compute the matrix exponential terms can reduce the computation time significantly, we think that a faster and more reliable computation approach would be a significant contribution to the literature. 

\clearpage
\appendix
\noindent{\textbf{\Large{Appendix}}}
\section{Useful Lemmas}\label{lemmas}

In this section, we collect some commonly used lemmas for the asymptotic analysis in the spatial econometric literature.
Lemma~\ref{l1} can be found in \citet{KP:1999} and \citet{Lee:2002}. The homoskedastic and heteroskedastic versions of Lemma \ref{l2} can be found in \citet{Lee:2007a} and \citet{Lin:2010}, respectively.  Lemma~\ref{l3} can be found in \citet{Jin:2015}, Lemma~\ref{l4} gives a CLT result from \citet{KP:2010}, and Lemma \ref{l5} is a modified version of Lemma A.4 in \citet{Zhenlin:2018}.  
\begin{lemma}\label{l1}
Let $\{\mf{A}\}$ and $\{\mf{B}\}$ be two sequences of $n\times n$ matrices that are uniformly bounded in both row sum and column sum matrix norms. Let $\{\mf{C}\}$ be a sequence of conformable matrices whose elements are uniformly $O(h_n^{-1})$. Then,
\begin{enumerate}
\item[(a)] the sequence $\{\mf{A}\mf{B}\}$ are uniformly bounded in both row sum and column sum matrix norms,

\item[(b)] the elements of $\{\mf{A}\}$ are uniformly bounded and $\tr(\mf{A})=O(n)$, and 

\item[(c)] the elements of $\{\mf{A}\mf{C}\}$   and $\{\mf{C}\mf{A}\}$ are uniformly $O(h_n^{-1})$.
\end{enumerate}
\end{lemma}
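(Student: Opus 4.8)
The plan is to reduce everything to two elementary facts: the row sum and column sum matrix norms are precisely the induced operator norms $\|\mf{A}\|_\infty=\max_i\sum_j|a_{ij}|$ and $\|\mf{A}\|_1=\max_j\sum_i|a_{ij}|$, and induced norms are submultiplicative. I would fix a generic constant $c$, independent of $n$, bounding $\|\mf{A}\|_\infty$, $\|\mf{A}\|_1$, $\|\mf{B}\|_\infty$, $\|\mf{B}\|_1$ uniformly, and a constant $c'$ with $|c_{ij}|\le c'h_n^{-1}$ uniformly in $i,j$ and $n$. All three parts then follow by routine norm inequalities.

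For part (a), I would invoke submultiplicativity directly: $\|\mf{A}\mf{B}\|_\infty\le\|\mf{A}\|_\infty\|\mf{B}\|_\infty\le c^2$ and $\|\mf{A}\mf{B}\|_1\le\|\mf{A}\|_1\|\mf{B}\|_1\le c^2$, so the product sequence is uniformly bounded in both norms. For part (b), each entry is dominated by its row sum, so $|a_{ij}|\le\|\mf{A}\|_\infty\le c$; in particular $|a_{ii}|\le c$, and summing over the $n$ diagonal entries yields $|\tr(\mf{A})|\le\sum_i|a_{ii}|\le nc=O(n)$.

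For part (c), the only real choice is which norm to pair with which product. For $\mf{A}\mf{C}$ I would factor the entry bound through the row sum of $\mf{A}$, writing $|(\mf{A}\mf{C})_{ij}|\le\sum_k|a_{ik}||c_{kj}|\le c'h_n^{-1}\sum_k|a_{ik}|\le cc'h_n^{-1}$, uniformly in $i,j$. Symmetrically, for $\mf{C}\mf{A}$ I would use the column sum of $\mf{A}$, obtaining $|(\mf{C}\mf{A})_{ij}|\le\sum_k|c_{ik}||a_{kj}|\le c'h_n^{-1}\sum_k|a_{kj}|\le cc'h_n^{-1}$. Both bounds are uniform in $i,j$ and $n$, so the entries of $\mf{A}\mf{C}$ and $\mf{C}\mf{A}$ are uniformly $O(h_n^{-1})$.

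There is no genuine obstacle here: the statement is a packaging of elementary operator-norm inequalities, and it is routine precisely because the boundedness is assumed in \emph{both} norms simultaneously. The only point requiring a moment's care is the bookkeeping in part (c) — pairing the row sum norm of $\mf{A}$ with left multiplication and the column sum norm with right multiplication, so that the uniform entrywise bound on $\mf{C}$ can be pulled out and the residual sum is controlled by a matrix norm that is assumed bounded.
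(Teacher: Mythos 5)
Your proof is correct and complete; all three parts follow exactly as you argue from submultiplicativity of the induced $\|\cdot\|_\infty$ and $\|\cdot\|_1$ norms and the entrywise bound $|a_{ij}|\le\|\mf{A}\|_\infty$. The paper itself gives no proof of this lemma — it merely cites \citet{KP:1999} and \citet{Lee:2002} — and your argument is the standard one underlying those references, so there is nothing substantive to compare; the only small point worth noting is that part (c) as you have written it implicitly treats $\mf{C}$ as $n\times n$ (so that the sum over $k$ has $n$ terms controlled by a row or column sum of $\mf{A}$), which is what "conformable" is intended to mean here and is how the lemma is used in the paper.
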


\begin{lemma} \label{l2}
Let $\{\V\}$ be a sequence of random $n\times1$ column vectors, $\boldsymbol{c}$ be the $n\times1$ vector of constants, and $\left\{\mf{A}\right\}$ and $\left\{\mf{B}\right\}$ be two sequences of $n\times n$ matrices of constants. Let  $\mathrm{vec}_{D}\left(\mf{A}\right)$ be a column vector formed by the diagonal elements of $\mf{A}$, and $\mf{A}^{s}=\mf{A}+\mf{A}^{\prime}$. 
\begin{enumerate}
\item\textbf{Homoskedastic case}: Suppose that the elements of $\V$ satisfy $v_{i}\sim i.i.d.(0,\sigma_0^2)$. Let $\E(v_{i}^3)=\mu_3$  and $\E(v_{i}^4)=\mu_4$. Then, we have the following results:
\begin{enumerate}
\item $\mathrm{E}(\mf{A}\V\cdot \V^{'} \mf{B} \V)=\mu_3\mf{A}\vec_D(\mf{B})$,
\item $\mathrm{E}( \V^{'} \mf{A} \V\cdot \V^{'}\mf{B})=\mu_3\vec_D^{'}(\mf{A})\mf{B}$,
\item $\mathrm{E}( \V^{'} \mf{A} \V\cdot \V^{'} \mf{B} \V)=\left(\mu_{4}-3 \sigma_{0}^{4}\right)  \operatorname{vec}_{D}^{'}\left(\mf{A}\right) \operatorname{vec}_{D}\left(\mf{B}\right)+\sigma_{0}^{4}\left(\mathrm{tr}(\mf{A})\mathrm{tr}(\mf{B})+ \operatorname{tr}\left(\mf{A} \mf{B}^{s}\right)\right)$.
\end{enumerate}
\item\textbf{Heteroskedastic case}: Suppose $\V$ has elements that are independently distributed (i.n.i.d.) with $v_{i}\sim i.n.i.d.(0,\sigma_{i}^2)$. Let $\boldsymbol{\Sigma}=\Diag(\sigma_{1}^2,\hdots,\sigma_{n}^2)$. Then, we have the following results:
\begin{enumerate}
\item $\mathrm{E}(\V^{'}\mf{A}\V)=\mathrm{tr}\left(\boldsymbol{\Sigma} \mf{A}\right)$,
\item $\mathrm{E}(\V^{\prime} \mf{A} \V\cdot  \boldsymbol{c}^{'}\V)=\sum_{i=1}^{n}\E(v_{i}^3)a_{ii}c_i$, where $a_{ii}$ is the $(i,i)$th element of $\mf{A}$ and $c_i$ is the $i$th element of $\boldsymbol{c}$.
\item $\mathrm{E}(\V^{\prime} \mf{A} \V\cdot \V^{'} \mf{B} \V)=\sum_{i=1}^{n}a_{ii}b_{ii}(\E(v_{i}^4) - 3\sigma^4_{i})+\mathrm{tr}(\boldsymbol{\Sigma} \mf{A})\mathrm{tr}(\boldsymbol{\Sigma} \mf{B})+\mathrm{tr}(\boldsymbol{\Sigma} \mf{A}\boldsymbol{\Sigma} \mf{B}^s)$.
\end{enumerate}
If the diagonal elements of $\mf{A}$ are zeros, then these results take the following form:
\begin{enumerate}
\item $\mathrm{E}(\V^{'}\mf{A}\V)=0$,
\item $\mathrm{E}( \V^{'} \mf{A} \V\cdot  \boldsymbol{c}^{'}\V)=0,$
\item $\mathrm{E}( \V^{'} \mf{A} \V\cdot \V^{'} \mf{B} \V)=\mathrm{tr}(\boldsymbol{\Sigma} \mf{A}\boldsymbol{\Sigma} \mf{B}^s)$.
\end{enumerate}
\end{enumerate}
\end{lemma}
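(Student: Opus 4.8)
The plan is to prove every identity by the same elementary device: write each linear and quadratic form in coordinates and exploit the independence and mean-zero structure of the $v_i$. Writing $\mf{A}=(a_{ij})$ and $\mf{B}=(b_{ij})$, I would first expand $\V'\mf{A}\V=\sum_{i,j}a_{ij}v_iv_j$, $\mf{A}\V=(\sum_i a_{ki}v_i)_k$, and so on, so that each expectation becomes a finite sum of products $a_{ij}b_{kl}\cdots$ weighted by a mixed moment $\E(v_iv_jv_k\cdots)$. Because the $v_i$ are independent with mean zero, such a moment vanishes unless every index that appears does so at least twice; this single observation does almost all of the work, and the two cases (i.i.d.\ and i.n.i.d.) differ only in how the surviving moments depend on the index.

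For the third-order forms in parts (a) and (b) of the homoskedastic case and part (b) of the heteroskedastic case, the only surviving terms are those with all three indices equal, since any index appearing exactly once forces a factor $\E(v_i)=0$. Thus $\E(v_iv_jv_l)$ reduces to $\mu_3$ (respectively $\E(v_i^3)$ in the i.n.i.d.\ case) on the diagonal $i=j=l$ and is zero otherwise. Substituting this collapses the triple sums to single diagonal sums; for instance the $k$th entry of $\E(\mf{A}\V\cdot\V'\mf{B}\V)$ becomes $\mu_3\sum_i a_{ki}b_{ii}$, which is exactly the $k$th entry of $\mu_3\mf{A}\vec_D(\mf{B})$, and the analogous contraction gives $\mu_3\vec_D'(\mf{A})\mf{B}$ for (b) and $\sum_i\E(v_i^3)a_{ii}c_i$ in the heteroskedastic part (b).

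The fourth-order identity in part (c) is where the real bookkeeping lies, and I expect it to be the main obstacle. Expanding gives $\sum_{i,j,k,l}a_{ij}b_{kl}\E(v_iv_jv_kv_l)$, and the surviving index patterns are the fully diagonal one $i=j=k=l$ (contributing $\mu_4$, or $\E(v_i^4)$) together with the three distinct pairings $\{i=j,\,k=l\}$, $\{i=k,\,j=l\}$, $\{i=l,\,j=k\}$ (each contributing $\sigma_0^4$, or $\sigma_i^2\sigma_j^2$). The subtlety is that these pairing sums are naturally indexed over \emph{distinct} pairs (e.g.\ $i\neq k$), whereas the clean trace expressions $\tr(\mf{A})\tr(\mf{B})$, $\tr(\mf{A}\mf{B}')$ and $\tr(\mf{A}\mf{B})$ arise only from the corresponding \emph{unrestricted} sums. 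Rewriting each restricted sum as the unrestricted sum minus its diagonal therefore removes the diagonal three times, and combining this triple over-subtraction with the $\mu_4$ diagonal term is precisely what produces the coefficient $\mu_4-3\sigma_0^4$ (respectively $\E(v_i^4)-3\sigma_i^4$) on the diagonal. After recognizing $\tr(\mf{A}\mf{B}')+\tr(\mf{A}\mf{B})=\tr(\mf{A}\mf{B}^s)$ (and similarly $\tr(\boldsymbol{\Sigma}\mf{A}\boldsymbol{\Sigma}\mf{B}')+\tr(\boldsymbol{\Sigma}\mf{A}\boldsymbol{\Sigma}\mf{B})=\tr(\boldsymbol{\Sigma}\mf{A}\boldsymbol{\Sigma}\mf{B}^s)$ in the heteroskedastic case), the stated closed forms follow. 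Finally, the zero-diagonal specializations are immediate: $a_{ii}=0$ annihilates every diagonal sum and also forces $\tr(\boldsymbol{\Sigma}\mf{A})=\sum_i\sigma_i^2a_{ii}=0$, so parts (a) and (b) vanish and (c) collapses to $\tr(\boldsymbol{\Sigma}\mf{A}\boldsymbol{\Sigma}\mf{B}^s)$.
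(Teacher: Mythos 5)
Your proposal is correct, and it is exactly the standard argument: the paper does not prove Lemma 2 itself but cites \citet{Lee:2007a} and \citet{Lin:2010}, where these identities are established by the same coordinatewise expansion of the linear--quadratic forms, elimination of mixed moments with a singly occurring index, and the conversion of the restricted pairing sums into traces minus diagonal corrections that yields the $\mu_4-3\sigma_0^4$ (resp.\ $\E(v_i^4)-3\sigma_i^4$) coefficient. No gaps.
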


\begin{lemma}\label{l3}
Let $\mf{A}$ be any $n\times n$ matrix that is uniformly bounded in row sum and column sum matrix norms, and let $a=o_p(1)$. Then, $\left\Vert e^{a\mf{A}}-\mf{I}_n\right\Vert_{\infty}=o_p(1)$ and $\left\Vert e^{a\mf{A}}-\mf{I}_n\right\Vert_{1}=o_p(1)$.
\end{lemma}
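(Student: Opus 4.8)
The plan is to reduce the matrix statement to a single deterministic scalar inequality and then invoke the continuous mapping theorem. Write $\left\Vert\cdot\right\Vert$ for either the row sum norm $\left\Vert\cdot\right\Vert_{\infty}$ or the column sum norm $\left\Vert\cdot\right\Vert_{1}$; the argument is verbatim the same for both, since each is submultiplicative and satisfies $\left\Vert\mf{I}_n\right\Vert=1$. By hypothesis $\mf{A}$ is uniformly bounded in both norms, so there is a finite constant $C$, independent of $n$, with $\left\Vert\mf{A}\right\Vert\leq C$. First I would expand the matrix exponential through its power-series definition $e^{a\mf{A}}=\sum_{i=0}^{\infty}\frac{a^i\mf{A}^i}{i!}$ from Section~\ref{sec:model}, so that $e^{a\mf{A}}-\mf{I}_n=\sum_{i=1}^{\infty}\frac{a^i\mf{A}^i}{i!}$.

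Next I would apply the triangle inequality together with submultiplicativity, $\left\Vert\mf{A}^i\right\Vert\leq\left\Vert\mf{A}\right\Vert^i$, to obtain
\begin{align*}
\left\Vert e^{a\mf{A}}-\mf{I}_n\right\Vert\leq\sum_{i=1}^{\infty}\frac{|a|^i\left\Vert\mf{A}\right\Vert^i}{i!}=e^{|a|\left\Vert\mf{A}\right\Vert}-1\leq e^{C|a|}-1.
\end{align*}
Moving the norm inside the infinite sum is legitimate because the series converges absolutely in norm, being dominated by $e^{|a|\left\Vert\mf{A}\right\Vert}<\infty$. This bound is the crux: it replaces the random matrix $e^{a\mf{A}}-\mf{I}_n$ by an explicit function of the single scalar $|a|$.

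Finally I would convert the bound into the stochastic conclusion. Since $a=o_p(1)$ and $C$ is a fixed constant, $C|a|=o_p(1)$; the map $x\mapsto e^{x}-1$ is continuous with value $0$ at $x=0$, so the continuous mapping theorem gives $e^{C|a|}-1=o_p(1)$. Because $0\leq\left\Vert e^{a\mf{A}}-\mf{I}_n\right\Vert\leq e^{C|a|}-1$, a squeeze argument delivers $\left\Vert e^{a\mf{A}}-\mf{I}_n\right\Vert=o_p(1)$ for each of the two norms, which is exactly the claim. I do not anticipate a genuine obstacle here: the only points needing care are that the boundedness of $\left\Vert\mf{A}\right\Vert$ be uniform in $n$, which is precisely what lets $C$ be chosen free of $n$, and the absolute convergence that justifies interchanging the norm and the sum. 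Everything after that is a deterministic inequality followed by a routine continuous-mapping argument applied to the scalar $C|a|$.
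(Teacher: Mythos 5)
Your proof is correct and is essentially the standard argument: the paper does not prove Lemma 3 itself but cites \citet{Jin:2015}, where the bound $\left\Vert e^{a\mf{A}}-\mf{I}_n\right\Vert\leq e^{|a|\left\Vert\mf{A}\right\Vert}-1$ obtained from the power series and submultiplicativity is exactly the device used. Nothing is missing; the uniform-in-$n$ bound on $\left\Vert\mf{A}\right\Vert$ and the continuity of $x\mapsto e^{x}-1$ at zero are the only points that need care, and you handle both.
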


\begin{lemma}\label{l4}
Suppose that $\left\{\mf{A}\right\}$ is a sequence of $n \times n$  matrices uniformly bounded in both row sum and column sum matrix norms, $\left\{\boldsymbol{c}\right\}$ is a sequence of constant column vectors such that $\sup _{n} \frac{1}{n}  \sum_{i=1}^{n}\left|c_{i}\right|^{2+\eta_{1}}<\infty$ for some $\eta_{1}>0$, $v_{i}$ in $\V$ are independent random variables with mean zero and variance $\sigma_{i}^{2}$ and $\sup _{i} \E(\left|v_{i}\right|^{4+\eta_{2}})<\infty$ for some $\eta_{2}>0$. Denote $\sigma_{Z}^{2}=\var\left(Z\right)$, where $Z=\boldsymbol{c}^{\prime} \V+\V^{'} \mf{A} \V-\operatorname{tr}\left(\mf{A}\boldsymbol{\Sigma}\right)$ where $\boldsymbol{\Sigma}=\Diag(\sigma_1^2,\hdots,\sigma_n^2)$. Assume that $\frac{1}{n} \sigma_{Z}^{2}$ is bounded away from zero. Then $\frac{Z}{\sigma_{Z}^{2}} \stackrel{d}{\longrightarrow} N(0,1)$.
\end{lemma}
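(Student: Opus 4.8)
The plan is to represent $Z$ as the terminal value of a martingale and to invoke a martingale central limit theorem. First I would reduce to the case of a symmetric $\mf{A}$: because $\V'\mf{A}\V=\V'\bar{\mf{A}}\V$ with $\bar{\mf{A}}=\tfrac12\mf{A}^s$, and because $\operatorname{tr}(\bar{\mf{A}}\boldsymbol{\Sigma})=\operatorname{tr}(\mf{A}\boldsymbol{\Sigma})$ since $\boldsymbol{\Sigma}$ is diagonal, replacing $\mf{A}$ by $\bar{\mf{A}}$ leaves $Z$ unchanged, and by Lemma~\ref{l1}(a) the symmetrized matrix is still uniformly bounded in both matrix norms. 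A direct computation using independence, $\E(v_i)=0$, and $\E(v_i^2)=\sigma_i^2$ confirms that $\E(Z)=0$, so $Z$ is already centered and the whole issue is the shape of its limiting distribution.

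With $\mf{A}$ symmetric I would write
\[
Z=\sum_{i=1}^{n}Z_i,\qquad Z_i=(c_i+2w_i)v_i+a_{ii}(v_i^2-\sigma_i^2),\qquad w_i=\sum_{j<i}a_{ij}v_j .
\]
Setting $\mathcal{F}_0=\{\emptyset,\Omega\}$ and $\mathcal{F}_i=\sigma(v_1,\dots,v_i)$, the term $w_i$ is $\mathcal{F}_{i-1}$-measurable, so each $Z_i$ is $\mathcal{F}_i$-measurable and $\E(Z_i\mid\mathcal{F}_{i-1})=0$. Hence $\{Z_i,\mathcal{F}_i\}_{i=1}^{n}$ is a martingale difference array (triangular in $n$ through $\mf{A}$ and $\boldsymbol{c}$). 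I would then appeal to a Lindeberg-type martingale CLT. Normalizing by the standard deviation $\sigma_Z=(\operatorname{Var}Z)^{1/2}$, it suffices to verify two conditions: (i) the stability condition $\sigma_Z^{-2}\sum_{i}\E(Z_i^2\mid\mathcal{F}_{i-1})\xrightarrow{p}1$, and (ii) the conditional Lindeberg condition $\sigma_Z^{-2}\sum_i\E\bigl(Z_i^2\,\mathbf{1}\{|Z_i|>\epsilon\sigma_Z\}\mid\mathcal{F}_{i-1}\bigr)\xrightarrow{p}0$ for every $\epsilon>0$.

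The main obstacle is condition (i). Writing $\mu_{3,i}=\E(v_i^3)$ and $\mu_{4,i}=\E(v_i^4)$, the conditional variance is
\[
\E(Z_i^2\mid\mathcal{F}_{i-1})=(c_i+2w_i)^2\sigma_i^2+a_{ii}^2(\mu_{4,i}-\sigma_i^4)+2(c_i+2w_i)a_{ii}\mu_{3,i},
\]
whose unconditional expectation equals $\sigma_Z^2$ by construction. The task is therefore to show that this sum concentrates, i.e.\ $\operatorname{Var}\bigl(\sum_i\E(Z_i^2\mid\mathcal{F}_{i-1})\bigr)=o(\sigma_Z^4)$. The only stochastic contributions come through the $w_i$ and $w_i^2$ terms; expanding $w_i=\sum_{j<i}a_{ij}v_j$ turns these into double and triple sums of the $a_{ij}$ weighted by products of the $v_j$, and their variances are controlled by the uniform row- and column-sum bounds on $\mf{A}$ via Lemma~\ref{l1} (so that quantities such as $\sum_{i,j}a_{ij}^2$ and $\operatorname{tr}(\mf{A}^s\mf{A}^s)$ are $O(n)$) together with the uniform fourth-moment bound on the $v_i$. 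Since the hypothesis that $n^{-1}\sigma_Z^2$ is bounded away from zero (and, again by Lemma~\ref{l1}, bounded above) gives $\sigma_Z^2\asymp n$, these variances are $O(n)=o(n^2)=o(\sigma_Z^4)$, which yields (i). Tracking that every stochastic piece of the conditional variance is of strictly smaller order than $\sigma_Z^4$ is the technically heaviest part of the argument.

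Condition (ii) is comparatively routine, and I would verify it through the stronger Lyapunov bound $\sigma_Z^{-(2+\delta)}\sum_i\E(|Z_i|^{2+\delta})\to0$ for a small $\delta>0$ with $\delta\le\min(\eta_1,\eta_2)/2$. Because $Z_i$ is linear-plus-quadratic in $v_i$ with coefficients $c_i+2w_i$ and $a_{ii}$, its $(2+\delta)$ absolute moment is dominated using the $(4+\eta_2)$ moment bound on $v_i$ and the $(2+\eta_1)$ summability of the $c_i$, while uniform boundedness of $\mf{A}$ keeps $\sum_i\E(|w_i|^{2+\delta})=O(n)$. Each $\E(|Z_i|^{2+\delta})$ is then $O(1)$ uniformly, the sum is $O(n)$, and dividing by $\sigma_Z^{2+\delta}\asymp n^{1+\delta/2}$ sends the ratio to zero. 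With (i) and (ii) in hand, the martingale CLT delivers $Z/\sigma_Z\xrightarrow{d}N(0,1)$, which is the assertion of the lemma, the normalization being the standard deviation $\sigma_Z=(\operatorname{Var}Z)^{1/2}$.
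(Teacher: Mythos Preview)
The paper does not actually prove Lemma~\ref{l4}: it is listed among the ``useful lemmas'' in Appendix~\ref{lemmas} with the attribution that it ``gives a CLT result from \citet{KP:2010},'' and no argument is supplied. Your martingale-difference decomposition $Z_i=(c_i+2w_i)v_i+a_{ii}(v_i^2-\sigma_i^2)$ with $w_i=\sum_{j<i}a_{ij}v_j$, followed by verification of a stability condition and a Lyapunov bound, is precisely the route taken in the cited Kelejian--Prucha paper, so your proposal is correct and matches the underlying proof in the literature. You also rightly flag that the displayed conclusion should read $Z/\sigma_Z$ rather than $Z/\sigma_Z^2$.

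One small imprecision: the assertion that ``each $\E(|Z_i|^{2+\delta})$ is $O(1)$ uniformly'' is slightly too strong, since the hypothesis $\sup_n n^{-1}\sum_i|c_i|^{2+\eta_1}<\infty$ does not force individual $|c_i|$ to be bounded. What you actually need---and what your argument in fact delivers---is the aggregate bound $\sum_i\E(|Z_i|^{2+\delta})=O(n)$, which follows from $\sum_i|c_i|^{2+\delta}=O(n)$, $\sum_i\E|w_i|^{2+\delta}=O(n)$, and the uniform bound on $|a_{ii}|$. This does not affect the validity of the Lyapunov step.
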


\begin{lemma}\label{l5}
Let $\{\mf{A}\}$ be a sequence of $n\times n$ matrices that are bounded in both row sum and column sum matrix norms. Suppose also that the elements of $\mf{A}$ are $O\left(h_{n}^{-1}\right)$, uniformly. Let $\boldsymbol{c}$ be an $n \times 1$ vector with elements of the uniform order $O(h_{n}^{-1/2})$. Assume that the elements of the $n\times1$ innovation vector $\V$ have zero mean and finite variance, and are mutually independent.  Then,
\begin{enumerate}
\item[(1)] $\mathrm{E}(\V^{'} \mf{A} \V)=O\left(n/h_{n}\right)$, $\var(\V^{'} \mf{A} \V)=O\left(n/h_{n}\right)$,
\item[(2)]  $\V^{'} \mf{A} \V=O_{p}\left(n/h_{n}\right)$, $\V^{\prime} \mf{A} \V - \mathrm{E}(\V^{'} \mf{A} \V)=O_{p}\bigl(\left(n/h_{n}\right)^{1/2}\bigr)$, $\boldsymbol{c}^{'} \mf{A} \V=O_{p}\bigl(\left(n/h_{n}\right)^{1/2}\big)$.
\end{enumerate}
\end{lemma}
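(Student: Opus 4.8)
The plan is to deduce all three stochastic-order claims from bounds on the first two moments of the quadratic form $\V'\mf{A}\V$, and then convert these moment bounds into $O_p$ statements via Chebyshev's inequality. First I would record the elementary identities that follow from the mutual independence and zero means of the components of $\V$. Writing $a_{ij}$ for the entries of $\mf{A}$, $\sigma_i^2=\var(v_i)$ and $\mu_{4i}=\E(v_i^4)$, pairing the indices in $\E(v_iv_jv_kv_l)$ gives
\[
\E(\V'\mf{A}\V)=\sum_{i=1}^n a_{ii}\sigma_i^2,\qquad
\var(\V'\mf{A}\V)=\sum_{i=1}^n(\mu_{4i}-\sigma_i^4)a_{ii}^2+\sum_{i\ne j}(a_{ij}^2+a_{ij}a_{ji})\sigma_i^2\sigma_j^2.
\]
The variance formula involves fourth moments; these are uniformly bounded under the standing moment condition used throughout the paper (cf.\ Lemma~\ref{l4}), and I would note that the diagonal term they generate will turn out to be of smaller order, so only finite variances matter for the stated rate.

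The central estimate, which I would establish next, is the key bound
\[
\sum_{i,j}a_{ij}^2\le\Bigl(\max_{i,j}|a_{ij}|\Bigr)\sum_{i,j}|a_{ij}|=O(h_n^{-1})\cdot O(n)=O(n/h_n),
\]
where $\max_{i,j}|a_{ij}|=O(h_n^{-1})$ by hypothesis and $\sum_{i,j}|a_{ij}|\le n\,\|\mf{A}\|_\infty=O(n)$ by the uniform row-sum bound. With the $\sigma_i^2$ bounded, the off-diagonal part of the variance is then $O(n/h_n)$, while the diagonal part is $O\bigl(\sum_i a_{ii}^2\bigr)=O(n/h_n^2)$, which is of the same or smaller order; hence $\var(\V'\mf{A}\V)=O(n/h_n)$. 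The mean $\E(\V'\mf{A}\V)=\sum_i a_{ii}\sigma_i^2$ is $O(n/h_n)$ directly from $|a_{ii}|=O(h_n^{-1})$ and bounded $\sigma_i^2$. This completes part~(1).

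For part~(2) I would apply Chebyshev. The centered quadratic form has variance $O(n/h_n)$, so $\V'\mf{A}\V-\E(\V'\mf{A}\V)=O_p\bigl((n/h_n)^{1/2}\bigr)$; adding back the mean, which is $O(n/h_n)$, and noting that $n/h_n$ dominates $(n/h_n)^{1/2}$ in the regime of interest yields $\V'\mf{A}\V=O_p(n/h_n)$. For the linear form, set $\boldsymbol{b}'=\boldsymbol{c}'\mf{A}$, so that $\boldsymbol{c}'\mf{A}\V=\boldsymbol{b}'\V$ has mean zero and variance $\sum_j b_j^2\sigma_j^2\le C\,\boldsymbol{c}'\mf{A}\mf{A}'\boldsymbol{c}\le C\,\|\mf{A}\|_2^2\,\|\boldsymbol{c}\|^2$. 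Since $\mf{A}$ is bounded in both row- and column-sum norms, $\|\mf{A}\|_2\le(\|\mf{A}\|_1\|\mf{A}\|_\infty)^{1/2}=O(1)$, and $\|\boldsymbol{c}\|^2=\sum_i c_i^2=O(n/h_n)$ because $|c_i|=O(h_n^{-1/2})$; thus the variance is $O(n/h_n)$ and Chebyshev gives $\boldsymbol{c}'\mf{A}\V=O_p\bigl((n/h_n)^{1/2}\bigr)$.

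The main obstacle is the combinatorial heart of part~(1): turning the two structural hypotheses on $\mf{A}$ (entrywise $O(h_n^{-1})$ together with bounded absolute row and column sums) into the bound $\sum_{i,j}a_{ij}^2=O(n/h_n)$, and then verifying that the fourth-moment (diagonal) contribution to $\var(\V'\mf{A}\V)$ is no larger than $n/h_n$ rather than, say, $n$. Everything else is a routine passage from moment bounds to stochastic orders.
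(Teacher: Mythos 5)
Your proof is correct and follows essentially the same route as the source the paper defers to for this lemma (it is stated as a modification of Lemma A.4 of Yang (2018) and not proved in the text): compute the mean and variance of the linear and quadratic forms, bound $\sum_{i,j}a_{ij}^2$ by combining the uniform elementwise order $O(h_n^{-1})$ with the $O(n)$ bound on $\sum_{i,j}|a_{ij}|$ from the row-sum norm, and pass to $O_p$ statements via Chebyshev. Your observation that the variance claim implicitly requires uniformly bounded fourth moments (beyond the stated "finite variance") is accurate and consistent with the moment conditions maintained elsewhere in the paper.
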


 
\section{Proofs of Main Results}\label{appB}
In this section, we provide the proofs of the main theorems in Section~\ref{sec:m-est} of the paper. 

\subsection{Proof of Theorem \ref{thm1}}\label{appB1}
As discussed in the main paper, we only need to show the consistency of $\hat{\zet}$. To that end, given Assumption \ref{a5}, we need to show that  $\sup_{\zet\in\DDDelta}\frac{1}{n}\left\Vert S^{* c}(\zet)-\bar{S}^{* c}(\zet)\right\Vert\xrightarrow{\enskip p\enskip}0$. Note that 
\begin{align}\label{B1}
\hat{\V}(\zet)=\V(\hat{\bet}_M,\zet)=\erm(\elw \Y-\X\hat{\bet}_M)=\Q\erm\elw\Y,
\end{align} 
and 
\begin{align}\label{B2}
\bar{\V}(\zet)=\V(\bar{\bet}_M,\zet)=\erm(\elw \Y-\X\bar{\bet}_M)=\Q\erm\elw\Y+\PP\erm\elw(\Y-\E(\Y)),
\end{align} 
where $\PP$ is the projection matrix based on $\erm\X$ and $\Q=\mf{I}_n-\PP$. Denote $\G=\erm\elw$ and $\GZ=\ermz\elwz$. Substituting \eqref{B1} and \eqref{B2} into $S^{* c}(\zet)$ and $\bar{S}^{* c}(\zet)$, the proof of $\sup_{\zet\in\DDDelta}\frac{1}{n}\left\Vert S^{c*}(\zet)-\bar{S}^{c*}(\zet)\right\Vert\xrightarrow{\enskip p\enskip}0$  is equivalent to that of the following:
\begin{enumerate}
\item[(i)] $\sup_{\zet\in\boldsymbol{\Delta}}\frac{1}{n}\left(\Y^{'}\RR_i(\zet)\Y-\E\left(\Y^{'}\RR_i(\zet)\Y\right)\right)=o_p(1)$, for $i=1,2$,
\item[(ii)] $\sup_{\zet\in\boldsymbol{\Delta}}\frac{1}{n}\tr\left(\SSigma\GZPI\TT_i(\zet)\GZI\right)=o(1)$, for $i=1,2,3$,
\end{enumerate}
where the terms are defined as  $\RR_1(\zet)=\GP\WW_D(\rho)\Q\G$, $\RR_2(\zet)=\GP\Q\M\Q\G$, $\TT_1(\zet)=\GP\WW_D(\rho)\PP\G$, $\TT_2(\zet)=\GP\PP\G$ and $\TT_3(\zet)=\GP\Q\M^s\PP\G$.

\noindent
\textbf{Proof of (i).} Note that (i) follows from the point-wise convergence of $\frac{1}{n}\left(\Y^{'}\RR_i(\zet)\Y-\E(\Y^{'}\RR_i(\zet)\Y)\right)$ in each $\zet\in\DDDelta$ and stochastic equicontinuity of $\frac{1}{n}\Y^{'}\RR_i(\zet)\Y$ for $i=1,2$. To prove the point-wise convergence, we have  
\begin{align*}
&\frac{1}{n}\left(\Y^{'}\RR_i(\zet)\Y-\E(\Y^{'}\RR_i(\zet)\Y)\right)\\
&=\frac{2}{n}\beta_0^{'}\X^{'}\elwzp\RR_i(\zet)\GZI\V+\frac{1}{n}\left(\V^{'}\GZPI\RR_i(\zet)\GZI\V-\tr(\SSigma\GZPI\RR_i(\zet)\GZI)\right).
\end{align*}
Note that $\elwzp\RR_i(\zet)\GZI$ and $\GZPI\RR_i(\zet)\GZI$  are uniformly bounded in row and column sum norms for $i=1,2$ by Lemma \ref{l1}. Thus, the terms on the r.h.s.\ are pointwise convergent by Lemma \ref{l5}(2).

To prove the stochastic equicontinuity, by  the mean value theorem, for any two parameter vectors $\zet_1, \zet_2 \in \DDDelta$,  we have 
\begin{align*}
\frac{1}{n}\left(\Y^{'}\RR_i(\zet_1)\Y-\Y^{'}\RR_i(\zet_2)\Y\right)=\frac{1}{n}\Y^{'}\frac{\partial \RR_i(\bar{\zet})}{\partial \zet^{'}}\Y(\zet_1-\zet_2),
\end{align*}
where $\bar{\zet}$ is between $\zet_1$ and $\zet_2$ elementwise. Thus we need to prove that $\sup_{\zet\in\DDDelta}\frac{1}{n}\Y^{'}\frac{\partial \RR_i(\zet)}{\partial \lambda}\Y=O_p(1)$ and $\sup_{\zet\in\DDDelta}\frac{1}{n}\Y^{'}\frac{\partial \RR_i(\zet) }{\partial \rho}\Y=O_p(1)$ for $i=1,2$. Note that 
\begin{align*}
&\frac{\partial \RR_1(\zet) }{\partial \lambda}=\W^{'}\RR_1(\zet)+\RR_1(\zet)\W, \quad \frac{\partial \RR_2(\zet) }{\partial \lambda}=\W^{'}\RR_2(\zet)+\RR_2(\zet)\W,\\
&\frac{\partial \RR_1(\zet) }{\partial \rho}=\M^{'}\RR_1(\zet)+\GP\dot{\WW}_D(\rho)\Q\G+\GP\WW_D(\rho)\dot{\mathbf{Q}}(\rho)\G+\RR_1(\zet)\M,\\
&\frac{\partial \RR_2(\zet) }{\partial \rho}=\M^{'}\RR_2(\zet)+\GP\dot{\mathbf{Q}}(\rho)\M\Q\G+\GP\Q\M\dot{\mathbf{Q}}(\rho)\G+\RR_2(\zet)\M,
\end{align*}
where $\dot{\WW}_D(\rho)=\frac{\partial \WW_D(\rho)}{\partial \rho}=\M\WW_D(\rho)-\WW_D(\rho)\M-\Diag\left(\M\WW_D(\rho)-\WW_D(\rho)\M\right)$ and $\dot{\mathbf{Q}}(\rho)=\frac{\partial \Q}{\partial \rho}=-(\Q\M\PP+\PP\M^{'}\Q)$.
By substituting the reduced form $\Y=\elwzn(\X\bet_0+\ermzn\V)$ into $\Y^{'}\frac{\partial \RR_i(\zet)}{\partial \lambda}\Y$ and $\Y^{'}\frac{\partial \RR_i(\zet)}{\partial \rho}\Y$ for $i=1,2$, we get a group of nonstochastic terms and linear and quadratic forms in $\V$. By Lemma \ref{l5}, $\sup_{\zet\in\DDDelta}\frac{1}{n}\Y^{'}\frac{\partial \RR_i(\zet) }{\partial \lambda}\Y=O_p(1)$ and $\sup_{\zet\in\DDDelta}\frac{1}{n}\Y^{'}\frac{\partial \RR_i(\zet) }{\partial \rho}\Y=O_p(1)$ for $i=1,2$.

\noindent
\textbf{Proof of (ii).} Under Assumption~\ref{a5}, Lemma~\ref{l1} ensures that $\frac{1}{n}\tr\left(\SSigma\GZPI\TT_i(\zet)\GZI\right)=o(1)$, for $i=1,2,3$.

\subsection{Proof of Theorem \ref{thm2}}\label{appB2}
By the mean value theorem, $\sqrt{n}(\hat{\gam}_M-\gam_0)=-\left(\frac{1}{n}\frac{\partial S^{*}(\overline{\gam})}{\partial \gam^{'}}\right)^{-1}\frac{1}{\sqrt{n}}S^{*}(\gam_0)$, where $\overline{\gam}$ is between $\hat{\gam}_M$ and $\gam_0$  elementwise. Thus we need to prove:
\begin{enumerate}
\item[(i)]  $\frac{1}{\sqrt{n}}S^{*}(\gam_0)\xrightarrow{\enskip d\enskip}N\left(0,\lim_{N\rightarrow\infty}\OOmega(\gam_0)\right),$
\item[(ii)]  $\frac{1}{n}\left(\frac{\partial S^{*}(\overline{\gam})}{\partial \gam^{'}}-\frac{\partial S^{*}(\gam_0)}{\partial \gam^{'}}\right)=o_p(1),$
\item[(iii)]  $\frac{1}{n}\left(\frac{\partial S^{*}(\gam_0)}{\partial \gam^{'}}-\E\left(\frac{\partial S^{*}(\gam_0)}{\partial \gam^{'}}\right)\right)=o_p(1).$
\end{enumerate}
\noindent
\textbf{Proof of (i).} Note that the elements of $S^{*}(\gam_0)$ are linear-quadratic forms of $\V$ as shown in \eqref{4.10}. Then we can construct an $(k+2)\times 1$ vector $\mf{a}=(\mf{a}_1^{'},a_2,a_3)^{'}$ such that $\mf{a}^{'}S^{*}(\gam_0)=\mf{b}^{'}\V+\V^{'}\mf{B}\V$, where $\mf{b}^{'}=\mf{a}_1^{'}\X^{'}\ermzp-a_2\bet_0^{'}\X^{'}\ermzp\WW_D$ and $\mf{B}=-a_2\WW_D-a_3\M$. Then, $\frac{1}{n}\mf{a}^{'}S^{*}(\gam_0)$ is asymptotically normal by Lemma~\ref{l4}. Then, the Cramer-Wold device leads to (i).

\noindent
\textbf{Proof of (ii).} Let $\bs{\Pi}(\gam)=-\frac{1}{n}\frac{\partial S^*(\gam)}{\partial \gam^{'}}$.  Since $\Y=\elwzn(\X\bet_0+\ermzn\V)$, all terms in the Hessian matrix can be written in forms of functions in Lemma \ref{l5}. By Lemma \ref{l5}, we know that $\frac{1}{n}\boldsymbol{\Pi}(\gam_0)=O_p(1)$, which implies $\frac{1}{n}\boldsymbol{\Pi}(\bar{\gam})=O_p(1)$. We can write ${e^{\bar{\lambda}\mathbf{W}}}=({e^{\bar{\lambda}\mathbf{W}}}-\elwz)+\elwz$,  ${e^{\bar{\rho}\mathbf{M}}}=({e^{\bar{\rho}\mathbf{M}}}-\ermz)+\ermz$ and $\bar{\bet}=(\bar{\bet}-\bet_0)+\bet_0$, and then expand the terms in $\frac{1}{n}\boldsymbol{\Pi}(\gam)$. By Lemma \ref{l5} and the reduced form of $\Y$, $\frac{1}{n}\Y^{'}\A\Y=O_p(1)$ and $\frac{1}{n}\X^{'}\A\Y=O_p(1)$, where $A$ is an $n\times n$ matrix that is bounded in both row and column sum matrix norms. Also note $\left\Vert {e^{\bar{\lambda}\mathbf{W}}}-\elw \right\Vert_{\infty}=\left\Vert (e^{(\bar{\lambda}-\lambda_0)\mathbf{W}}-I_n)e^{\lambda_0\mathbf{W}}\right\Vert_{\infty}\le\left\Vert (e^{(\bar{\lambda}-\lambda_0)\mathbf{W}}-I_n\right\Vert_{\infty}\left\Vert e^{\lambda_0\mathbf{W}}\right\Vert_{\infty}=o_p(1)$  by Lemma \ref{l3}, and similarly $\left\Vert {e^{\bar{\rho}\mathbf{M}}}-\ermz \right\Vert_{\infty}=o_p(1)$. Then the expanded forms of $\frac{1}{n}\left(\boldsymbol{\Pi}(\bar{\gam})-\boldsymbol{\Pi}(\gam_0)\right)$ imply that it is $o_p(1)$. 

\noindent
\textbf{Proof of (iii).} Substituting the reduced form of $\Y$ into $\frac{1}{n}\left(\frac{\partial S^{*}(\gam_0)}{\partial \gam^{'}}-\E\left(\frac{\partial S^{*}(\gam_0)}{\partial \gam^{'}}\right)\right)$, we know that each element is a linear or quadratic function of $\V$. For example, for $\boldsymbol{\Pi}^{*}_{\lambda\rho}(\gam_0)$, 
\begin{align*}
&\boldsymbol{\Pi}^{*}_{\lambda\rho}(\gam_0)-\E\left(\boldsymbol{\Pi}^{*}_{\lambda\rho}(\gam_0)\right)=-\bet_0^{'}\X^{'}\ermzp\M^{'}\WW_D(\rho)\V-\V^{'}\M^{'}\WW_D(\rho)\V+\E\left(\V^{'}\M^{'}\WW_D(\rho)\V\right)\\
&-\bet_0^{'}\X^{'}\ermzp\dot{\WW}_D(\rho)\V-\V^{'}\dot{\WW}_D(\rho)\V+\E\left(\V^{'}\M^{'}\dot{\WW}_D(\rho)\V\right)-\bet_0^{'}\X^{'}\ermzp\WW_D(\rho)\M\V\\
&-\V^{'}\WW_D(\rho)\M\V+\E\left(\V^{'}\WW_D(\rho)\M\V\right).
\end{align*}
By Lemma \ref{l5}, $\frac{1}{n}\left(\boldsymbol{\Pi}^{*}_{\lambda\rho}(\gam_0)-\E\left(\boldsymbol{\Pi}^{*}_{\lambda\rho}(\gam_0)\right)\right)=o_p(1)$. The proof for the rest of the terms are similar to that for $\boldsymbol{\Pi}^{*}_{\lambda\rho}(\gam_0)$ and thus are omitted.

\subsection{Proof of Theorem \ref{thm3}}\label{appB3}
Since the terms in $\OOmega(\gam_0)$ are similar to those in Proposition 5 in \citet{Jin:2015}, the proof is similar to that of Proposition 5 and thus is omitted.

\section{Details of the Identification Conditions}\label{appC}
The identification condition in  Assumption 6 under the heteroskedastic error terms  is a high level assumption. In this section, we derive low level conditions for the identification of $\zet_0$. Note that the identification of $\zet_0$ requires that $\bar{\Ss}^{c*}(\zet)\ne0$ for $\zet\ne\zet_0$ under the exact identification case, similar to the method of moment approach. Also recall that the population counterpart of the concentrated adjusted score function is given by 
\begin{align*}
\bar{S}^{c*}(\zet)=
\left\{\begin{array}{rl}
\lambda: &-\E\left(\Y^{'}\elwp\ermp\WW_D(\rho)\bar{\V}(\zet)\right),\\
\rho: &-\E\left(\bar{\V}^{'}(\zet)\M\bar{\V}(\zet)\right),
\end{array}\right.
\end{align*}
where $\bar{\V}(\zet)=\V(\bar{\bet}_M(\zet),\zet)$. 
Also recall that \eqref{B2} implies that 
\begin{align*}
\bar{\V}(\zet)=\Q\G\Y+\PP\G(\Y-\E(\Y)),
\end{align*} 
where $\G=\erm\elw$. Denote $\mathbf{G}=\GZ$. From the reduced form $\Y=\elwzn\X\bet_0+\mathbf{G}^{-1}\V$, we know $\Y-\E(\Y)=\mathbf{G}^{-1}\V$. Then,
\begin{align*}
\bar{\V}(\zet)&=\Q\G\Y+\PP\G\mathbf{G}^{-1}\V\\
&=\Q\G(\elwzn\X\bet_0+\mathbf{G}^{-1}\V)+\PP\G\mathbf{G}^{-1}\V\\
&=\Q\G\elwzn\X\bet_0+\G\mathbf{G}^{-1}\V.
\end{align*} 
Then we have
\begin{align*}
&\E(\bar{\V}^{'}(\zet)\mathbb{W}_D(\rho)\bar{\V}(\zet))\\
&=\E\left((\Q\G\elwzn\X\bet_0+\G\mathbf{G}^{-1}\V)^{'}\mathbb{W}_D(\rho)(\Q\G\elwzn\X\bet_0+\G\mathbf{G}^{-1}\V)\right)\\
&=\bet_0^{'}\X^{'}\elwznp\GP\Q\mathbb{W}_D\Q\G\elwzn\X\bet_0+\E(\V^{'}\mathbf{G}^{-1'}\GP\mathbb{W}_D(\rho)\G\mathbf{G}^{-1}\V)\\
&=\bet_0^{'}\X^{'}\elwznp\GP\Q\mathbb{W}_D\Q\G\elwzn\X\bet_0+\tr(\SSigma\mathbf{G}^{-1'}\GP\mathbb{W}_D(\rho)\G\mathbf{G}^{-1}).
\end{align*} 
Similarly $\E(\bar{\V}^{'}(\zet)\M\bar{\V}(\zet))$ can be expressed as 
\begin{align*}
&\E(\bar{\V}^{'}(\zet)\M(\rho)\bar{\V}(\zet))\\
&=\bet_0^{'}\X^{'}\elwznp\GP\Q\M\Q\G\elwzn\X\bet_0+\tr(\SSigma\mathbf{G}^{-1'}\GP\M\G\mathbf{G}^{-1}).
\end{align*} 
Thus, the identification of $\zet_0$ follows, if $\zet\ne\zet_0$, one of the following conditions holds:
\begin{align*}
&\text{(i)} \quad \lim_{n\to\infty}\frac{1}{n}[\bet_0^{'}\X^{'}\elwznp\GP\Q\mathbb{W}_D\Q\G\elwzn\X\bet_0+\tr(\SSigma\mathbf{G}^{-1'}\GP\mathbb{W}_D(\rho)\G\mathbf{G}^{-1})]\ne0,\\
&\text{(ii)}\quad \lim_{n\to\infty}\frac{1}{n}[ \bet_0^{'}\X^{'}\elwznp\GP\Q\M\Q\G\elwzn\X\bet_0+\tr(\SSigma\mathbf{G}^{-1'}\GP\M\G\mathbf{G}^{-1}) ] \ne0.
\end{align*}

\newpage
\bibliography{references}\label{sec:bib}

\end{document}